\newtheorem{theorem}{Theorem}
\newtheorem{lemma}{Lemma}[section]
\newtheorem{proposition}[lemma]{Proposition}
\newtheorem{remark}[lemma]{Remark}
\numberwithin{equation}{section}
\newcommand{\pd}[2]{\frac{\partial {#1}}{\partial {#2}}}
\newcommand{\CO}{\mathbb C}
\newcommand{\RE}{\mathbb R}
\newcommand{\DD}{\mathcal D}
\newcommand{\BB}{\mathcal B}
\newcommand{\EE}{\mathcal E}
\newcommand{\FF}{\mathcal F}
\newcommand{\ve}{\varepsilon}
\newcommand{\ga}{\gamma}
\newcommand{\de}{\delta}
\newcommand{\ome}{\omega}
\newcommand{\xx}{{\bf x}}
\newcommand{\kk}{{\bf k}}
\newcommand{\oo}{{\bf 0}}
\newcommand{\dk}{\textnormal{d}\kk\,}
\newcommand{\dx}{\textnormal{d}\xx\,}
\newcommand{\ds}{\textnormal{d}s\,}
\newcommand{\dt}{\textnormal{d}t\,}
\newcommand{\dd}{\textnormal{d}}
\newcommand{\n}{\noindent}
\newcommand{\lf}{\left}
\newcommand{\ri}{\right}
\renewcommand{\Re}{\operatorname{Re}\,}
\renewcommand{\leqslant}{\leq}
\newcommand{\f}{\frac}
\newcommand{\ba}{\begin{eqnarray}} \newcommand{\ea}{\end{eqnarray}}
\newcommand{\bdm}{\begin{displaymath}} \newcommand{\edm}{\end{displaymath}} 
\newcommand{\brr}{\begin{array}}\newcommand{\err}{\end{array}} 
\title[]{The point-like limit for a NLS equation with concentrated nonlinearity in dimension three}
\author[]{Claudio Cacciapuoti}
\address[Claudio Cacciapuoti]{Dipartimento di Scienza e Alta Tecnologia, Universit\`a dell'Insubria, Via Valleggio 11, 22100 Como, Italy}
\email{claudio.cacciapuoti@uninsubria.it}%
\author[]{Domenico Finco}
\address[Domenico Finco]{Facolt\`a di Ingegneria, Universit\`a Telematica
Internazionale Uninettuno,  Corso Vittorio Emanuele II 39, 00186 Roma,
Italy}
\email{d.finco@uninettunouniversity.net}
\author[]{Diego Noja}
\address[Diego Noja]{Dipartimento di Matematica e Applicazioni, Universit\`a
 di Milano Bicocca,  via R. Cozzi 55, 20125 Milano, Italy}
\email{diego.noja@unimib.it} 
\author[]{Alessandro Teta}
\address[Alessandro Teta]{Dipartimento di Matematica G. Castelnuovo, Sapienza Universit\`a di Roma,  Piazzale  Aldo Moro 5, 00185 Roma, Italy}
\email{teta@mat.uniroma1.it}
\date{}
\thanks{
D.F. and D.N.  acknowledge the support of FIRB 2012 project ``Dispersive dynamics: Fourier Analysis and Variational Methods'', Ministry of University and
Research of Italian Republic  (code RBFR12MXPO).
C.C. acknowledges the support of the FIR 2013 project ``Condensed Matter in Mathematical Physics'', Ministry of University and
Research of Italian Republic  (code RBFR13WAET)}
\begin{document}
%%%%%%%%%%%%%%%%%%%%%%%%%%%%%%%%%
%ABSTRACT
%%%%%%%%%%%%%%%%%%%%%%%%%%%%%%%%%
\begin{abstract}
We consider a scaling limit of a nonlinear Schr\"odinger equation (NLS) with a nonlocal nonlinearity showing that it reproduces in the limit of cutoff removal a NLS equation with nonlinearity concentrated at a point. The regularized dynamics is described by the equation
\begin{equation*}
i\pd{}{t}\psi^\ve(t)= -\Delta \psi^\ve(t) + g(\varepsilon,\mu,|(\rho^\ve,\psi^\ve(t))|^{2\mu}) (\rho^\ve,\psi^\ve(t)) \rho^\ve\ 
\end{equation*}
where $\rho^{\ve} \to \delta_0$ weakly and the function $g$ embodies the nonlinearity and the scaling and has to be fine tuned in order to have a nontrivial limit dynamics.
The limit dynamics is a nonlinear version of point interaction in dimension three and it has been previously studied in several papers as regards the well-posedness, blow-up and asymptotic properties of solutions. Our result is the first justification of the model as the point limit of a regularized dynamics. 
\end{abstract}

\maketitle

\begin{footnotesize}
 \emph{Keywords:} Nonlinear Schr\"odinger equation, nonlinear delta interactions, zero-range limit of concentrated nonlinearities. 
 
 \emph{MSC 2010:}  35Q55, 81Q15, 35B25.  
 \end{footnotesize}

\vspace{1cm}

%%%%%%
%SECTION
%%%%%%
\section{Introduction}

In this paper we obtain a nonlinear Schr\"odinger dynamics with a nonlinearity concentrated at a point in dimension three as a scaling limit of a regularized nonlinear Schr\"odinger dynamics.\\
We consider the abstract nonlinear Schr\"odinger equation 
\begin{equation}
\label{limiteq}
i\pd{}{t}\psi(t)=H_{\mu, \gamma}\psi(t)\ ,\ \ \ \ \ \psi(0)=\psi_0\ \in \DD 
\end{equation}
where $H_{\mu, \gamma}$, with $\gamma\in\RE$ and $ \mu \geq 0$, is the nonlinear operator with domain given by the nonlinear manifold 
\begin{equation*}
%\label{nlD}
\DD (H_{\mu,\gamma}):=\DD=  \bigg\{ \psi\in L^2 (\RE^3)| \, \psi= \phi +q G;\,\phi \in \dot H^2 (\RE^3) ,\, q\in \CO;\,    \phi(\oo) = \gamma |q|^{2\mu} q \bigg\},
\end{equation*}
where 
\begin{equation*}
%\label{G}
G(x)= \frac{1}{4\pi|x|},
\end{equation*}
and with action given by
\begin{equation*}
%\label{action}
H_{\mu, \gamma}\psi=-\Delta \phi\ .
\end{equation*}
Eq.  \eqref{limiteq} characterizes what we call the {\it limit problem}, and its solution is the {\it limit flow}. It describes a nonlinear Schr\"odinger dynamics with a nonlinearity concentrated in a single point, fixed for simplicity at the origin. The nonlinearity is attractive (or focusing) if $\gamma<0$, is repulsive (or defocusing) if $\gamma>0$. This is a non standard evolution problem, because an element of the nonlinear domain is the sum of a \emph{regular part} $\phi\in \dot H^2(\RE^3)$ and a \emph{singular part} $qG$ which is in $L_{loc}^2(\RE^3)$ but not in $\dot H^2(\RE^3)$ (not even in $\dot H^1(\RE^3)$). 
The denomination of concentrated nonlinearity is justified as follows. 
First, if we consider a $\psi \in \mathcal D$ which vanishes in a neighborhood of the origin one has $H_{\mu, \gamma}\psi=-\Delta \psi$, i.e., the interaction is supported at the origin. Furthermore, at the origin a nonlinear boundary condition embodied in the definition of the operator domain is given, namely  $ \phi(\oo) = \gamma |q|^{2\mu} q$. This boundary condition relates in a nonlinear way the regular part of the field with the coefficient $q$ of the singular part, usually called charge.

\n
In the physical literature such a boundary condition is often written in the equivalent form
\begin{equation}\label{scle}
\frac{\partial}{\partial r}(r\psi)(0)=\alpha(r\psi)(0).
\end{equation}
In the linear case ($\mu=0$), the one which is  mostly treated in the physical literature, $\alpha$ is a constant related to the {\it scattering length}; the nonlinear model is formally obtained by setting   
$\alpha=\gamma(4\pi)^{2\mu+1}|(r\psi(0))|^{2\mu}$.

\n
About the linear model, also called contact or point interaction or more informally $\delta$ pseudo-potential, a huge literature exists, and we refer to \cite{Albeverio} as a comprehensive treatment. Concerning the nonlinear model, the well-posedness of the Cauchy problem for \eqref{limiteq} was studied in its weak form (energy domain), in \cite{at2}, establishing also usual NLS conservation laws of mass ($L^2$-norm) and energy. The blow-up of solutions of \eqref{limiteq} was studied instead in \cite{at3}.  More recently asymptotic stability of standing waves of the model has been treated in \cite{Cecilia1}. A wave dynamics with the same nonlinear generator appears in \cite{NP}, and on the same lines in \cite{Kopylova16} the Klein Gordon Cauchy problem with concentrated nonlinearity is considered (see also \cite{Kopylova17}). Much better known is the one dimensional model (see \cite{at1,CFNT1,Claudio,HolmerLiu15} and references therein). Only recently the two dimensional case has been considered, see \cite{CCT, CFTjfa17}.\\
To introduce the main motivation of our work, we stress the fact that the definition of nonlinear operator $H_{\mu,\gamma}$ has been given in \cite{at2} only by analogy with the linear case. So the question arises to find a satisfactory justification of such a model. This is the problem  studied here. Some remarks are in order. While the natural introduction of the linear model is through the theory of self-adjoint extensions (see e.g. \cite{Albeverio}), in the nonlinear setting there is no such a possibility and the only conceivable strategy seems to resort to an approximation procedure from a regular model. 
There are at least two different and both natural ways to represent a nonlinearity with a small support around zero. The first one is the {\it local} inhomogeneous nonlinearity defined by
\begin{equation}\label{local}
i\pd{}{t}\psi^\ve(t)= -\Delta \psi^\ve(t) + V_{\varepsilon}(x)|\psi^\varepsilon(t)|^{2\mu}\psi^\varepsilon(t)\  ,\ \ \mu>0
\end{equation}
with $V_{\varepsilon}=s(\varepsilon)V(\frac{x}{\varepsilon}),$ $V$ integrable and $s(\varepsilon)$ a suitable scaling function. Convergence of the flow \eqref{local} in {\it dimension one}   has been studied  in \cite{CFNT1}, and in this case the correct scaling is the $\delta$ scaling $s(\varepsilon)=\frac{1}{\varepsilon}$. The limit dynamics in this one dimensional case is related to a ``nonlinear delta potential'', a subject which appears also in a number of physical applications (see references in \cite{CFNT1}). 
A second procedure is a nonlocal or ``mean field'' approximation which consists in smearing the wavefunction $\psi$ appearing in the nonlinear term by means of a form factor $\rho^{\varepsilon}$, the support of which is shrinking to zero when $\varepsilon $ is going to zero:
\begin{equation*}
%\label{appeq0}
i\pd{}{t}\psi^\ve(t)= -\Delta \psi^\ve(t) + g(\varepsilon,\mu,|(\rho^\ve,\psi^\ve(t))|^{2\mu}) (\rho^\ve,\psi^\ve(t)) \rho^\ve\ .
\end{equation*}
Here the function $g(\ve,\mu,|(\rho^\ve,\psi^\ve(t))|^{2\mu})$ depends on the form factor $\rho^{\ve}$, on the power $\mu$ and contains the self-coupling term. It has to be gauged in such a way that a limit dynamics exists when the regularization is removed, i.e., when $\rho^\ve \stackrel{w}\longrightarrow\delta_0$.\\ In the present paper we consider the three dimensional case following this second procedure. Specifically, we choose 
\[
\rho^\ve(x)=\frac{1}{\ve^3}\, \rho \! \left(\frac{x}{\ve} \right)
\]
where  $\rho$ belongs to  the Schwartz space $ \mathcal S(\RE^3)$, is real, positive, spherically symmetric and satisfies $\int \!\rho(x) dx=1$ (actually, much less regularity is needed for the proof).
 The mean field regularization and its limit, in the much simpler one dimensional case, has been treated in \cite{KK07}.\\
The crucial choice of the form of $g$ is guided by an analysis of the corresponding linear problem. In this case it is known that the only possibility is given by
\begin{equation*}
%\label{linrun}
g(\ve)=-\frac{\ve}{\ell}+ \gamma \, \frac{\ve^2}{\ell^2}+\text{o}(\ve^2)
\end{equation*}
where
\begin{equation}\label{M}
\ell=(\rho,(-\Delta)^{-1}\rho) 
\end{equation}
and $\gamma$ is an arbitrary real constant. This result (see \cite{Albeverio}) comes from the analysis of the norm resolvent convergence of the right hand side of \eqref{local}, which of course implies the convergence of the evolution groups.
The two terms in the previous formula have different roles.  The term $-\ve/\ell$ is needed to guarantee a cancellation of a divergent term arising from $-\Delta\psi^\ve(t)$; moreover it  shows that in the limit the quantity $\ve(\rho^\ve,\psi^\ve(t))$ is convergent. The term $\gamma \ve^2/\ell^2$  gives in the limit  the scattering length of the interaction defined in Eq. \eqref{scle} (see also for a similar analysis in a different context \cite{NP2}).
\\
These considerations suggest a nonlinear coupling of the form
\begin{equation*}
%\label{nlinrun}
g(\varepsilon,\mu,|(\rho^\ve,\psi^\ve(t))|^{2\mu})= - \frac{\ve}{\ell} + \gamma \, \frac{\ve^2}{\ell^2} \left| \frac{\ve}{\ell}(\rho^\ve,\psi^\ve(t)) \right|^{2\mu}.
\end{equation*}
With these premises, we introduce the evolution problem (from now on the {\it scaled problem})
\begin{equation}
\label{appeq}
i\pd{}{t}\psi^\ve(t)= -\Delta \psi^\ve(t) + \left( - \frac{\ve}{\ell} + \gamma\frac{\ve^{2+2\mu}}{\ell^{2+2\mu}} |(\rho^\ve,\psi^\ve(t))|^{2\mu} \right) (\rho^\ve,\psi^\ve(t)) \rho^\ve
\end{equation} with initial datum 
\begin{equation}\label{appid}
\psi^\ve(0) = \psi^\ve_0
\end{equation}
where   $\gamma\in\RE$ and $ \mu \geq 0 $.\\
We want to show that the solutions of the scaled problem converge to the solutions of the limit problem for a reasonably large class of initial data. 
Preliminarily, we notice that in the linear case the study of the convergence of the dynamics can be tackled through convergence of the resolvents. In the presence of a nonlinear term this approach fails and for this reason we study the problem by looking directly at the time dependent formulation. We stress that our approach does not need any previous information on the convergence of the linear dynamics, which is obtained instead as an independent (and, to the best of our knowledge, new) byproduct of our proof.\\
A second remark is that one of the main difficulties of the analysis consists in the fact that the scaled problem is well posed in $H^2(\RE^3)$ (see Section \ref{s:limprob}) while the limit problem is well posed in the nonlinear operator domain $\DD$, which is a different (and larger) space including singularities. A first consequence is the need to tune the approximation of the initial datum of the limit problem through a sequence of initial data of the scaled problem, i.e., the scaling acts not only on the form of the nonlinearity, but also on the solution spaces. We will choose a sequence of initial data of the form
\begin{equation}\label{psi0ve}
\psi_0^\ve = \phi_0 + q_0 \rho^\ve * G \in H^2(\RE^3).
\end{equation}
Any element
$
\psi_0 = \phi_0 +q_0 G \in \DD
$
can be approximated in $L^2(\RE^3)$ in this way. Notice that in the approximated initial datum the regular part $\phi_0$ is fixed and the only $\ve$-dependent  term is the one which reconstructs the singular part. Other choices could be done, but we content ourselves with this one.\\
\n
After these premises we state the following theorem which is our main result
%THEOREM
\begin{theorem}\label{t:main}
Assume $\gamma\geq 0$ and $\mu\geq 0$ or $\gamma<0$ and $0\leq\mu<1$. Let $\psi(t)$ be the solution of the limit problem \eqref{limiteq} with $\psi_0 =\phi_0+q_0G\in \DD$, $\psi^{\ve}(t)$ be the solution of the approximating problem \eqref{appeq} with $\psi_0^\ve=\phi_0+q_0\rho^{\ve}*G$ as in \eqref{psi0ve} and fix $T>0$. Then
\begin{equation}\label{main}\sup_{t\in[0,T]} \|\psi^\ve(t) - \psi(t)\|_{L^2(\RE^3)} \leq C \ve^\de \end{equation}
where $C$ is a positive constant and $0\leq\de<1/4\ .$
\end{theorem}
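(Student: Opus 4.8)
The plan is to recast both evolutions as integral equations driven by a single scalar unknown, the \emph{charge}, to reduce the whole problem to the comparison of two Volterra equations, and finally to reconstruct the $L^2$ convergence of the wavefunctions from the convergence of the charges. Writing $U(t):=e^{it\Delta}$ for the free group, Duhamel's formula for the scaled problem \eqref{appeq} reads
\[
\psi^\ve(t)=U(t)\psi_0^\ve-i\int_0^t U(t-s)\,g^\ve(s)\,(\rho^\ve,\psi^\ve(s))\,\rho^\ve\,\ds,
\]
where $g^\ve(s)=-\f\ve\ell+\ga\f{\ve^{2+2\mu}}{\ell^{2+2\mu}}|(\rho^\ve,\psi^\ve(s))|^{2\mu}$. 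Since the nonlinear source is a multiple of $\rho^\ve$, testing this identity against $\rho^\ve$ produces a \emph{closed} scalar Volterra equation for $z^\ve(t):=(\rho^\ve,\psi^\ve(t))$. The natural unknown, however, is the \emph{effective charge} $q^\ve(t):=\f\ve\ell\,z^\ve(t)$: a direct Fourier computation gives $(\rho^\ve,\rho^\ve*G)=\ell/\ve$, so that $q^\ve(0)=\f\ve\ell(\rho^\ve,\phi_0)+q_0\to q_0$, matching the charge $q_0$ of the limit datum. In terms of $q^\ve$ the scalar equation becomes
\[
q^\ve(t)=\f\ve\ell(\rho^\ve,U(t)\psi_0^\ve)+i\int_0^t K^\ve(t-s)\Big(q^\ve(s)-\ga\f\ve\ell|q^\ve(s)|^{2\mu}q^\ve(s)\Big)\ds,\qquad K^\ve(\tau):=\f\ve\ell(\rho^\ve,U(\tau)\rho^\ve).
\]

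I would in parallel reduce the limit problem \eqref{limiteq} to its own charge equation. Using $-\Delta(qG)=q\,\delta_0$ one sees that the limit flow solves, distributionally, $i\partial_t\psi=-\Delta\psi-q(t)\delta_0$, hence $\psi(t)=U(t)\psi_0+i\int_0^t U(t-s)q(s)\delta_0\,\ds$; imposing the nonlinear boundary condition $\phi(t)(0)=\ga|q(t)|^{2\mu}q(t)$ and extracting the coefficient of the $1/(4\pi|x|)$ singularity as $x\to0$ yields a Volterra equation for $q$ with a \emph{weakly singular Abel kernel} $\sim(t-s)^{-1/2}$ and the nonlinear boundary term $\ga|q(t)|^{2\mu}q(t)$. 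Both charge equations are solved on $[0,T]$ with bounds uniform in $\ve$: these come from the conservation of mass and energy for \eqref{appeq} and \eqref{limiteq} (the latter established in \cite{at2}), which is precisely where the hypotheses enter. In the defocusing case $\ga\ge0$ the energy is coercive for every $\mu$, while in the focusing case $\ga<0$ the subcriticality $\mu<1$ is needed to keep $q^\ve$, $q$ and the relevant norms bounded a priori and to rule out blow-up.

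The crux of the whole argument is the analysis of the regularized kernel $K^\ve$ and its comparison with the Abel kernel. By Plancherel and the scaling $u=\ve k$ one has $K^\ve(\tau)=\f1{\ell\ve^2}\int|\hat\rho(u)|^2 e^{-i\tau|u|^2/\ve^2}\dd u=\f1{\ell\ve^2}F(\tau/\ve^2)$, a highly concentrated and oscillatory object: for fixed $\tau>0$ it vanishes like $\ve$, yet it carries $O(1)$ mass near $\tau=0$, since $\int_0^\infty F(\sigma)\,\dd\sigma=-i\ell$. Consequently the leading term of $i\int_0^t K^\ve(t-s)q^\ve(s)\ds$ is exactly $q^\ve(t)$, the naive leading orders \emph{cancel}, and the effective limit equation, including the surviving $\ga$ term and the Abel kernel, only emerges at the next order. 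One extracts it, after the change of variables $s=t-\ve^2\sigma$, by writing $q^\ve(t-\ve^2\sigma)=q^\ve(t)-[q^\ve(t)-q^\ve(t-\ve^2\sigma)]$ and exploiting the $C^{1/2}$-in-time Hölder regularity of the charge together with the tail $F(\sigma)\sim\sigma^{-3/2}$. I expect this step to be the main obstacle: the pointwise blow-up and the fine cancellations of $K^\ve$ force every estimate to be carried out in an integrated sense and to be quantitative, and it is the quantitative version of these estimates, weighed against the $C^{1/2}$ time regularity of $q$, that produces an algebraic convergence rate for the charge equations.

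With the kernel estimates in hand the remaining steps are comparatively routine. A \emph{singular} Gronwall inequality (the kernel is only weakly integrable, so one iterates until the iterated kernel is bounded, or invokes a Mittag--Leffler type estimate) applied to the difference of the two Volterra equations, using the local Lipschitz bound for $z\mapsto|z|^{2\mu}z$ on the uniformly bounded range of the charges, yields $\sup_{[0,T]}|q^\ve(t)-q(t)|\le C\ve^\de$. Finally I would reconstruct the $L^2$ convergence from the Duhamel representations: the difference $\psi^\ve(t)-\psi(t)$ splits into the initial-data error $U(t)(\psi_0^\ve-\psi_0)=q_0\,U(t)(\rho^\ve*G-G)$, which is $O(\ve^{1/2})$ in $L^2$ since $\|\rho^\ve*G-G\|_{L^2}^2=\ve\int|\hat\rho(u)-1|^2|u|^{-4}\dd u$, the charge-difference term $\int_0^t U(t-s)(q^\ve(s)-q(s))\rho^\ve\,\ds$ controlled by $\sup|q^\ve-q|$, the lower-order $\ga$-term carrying an explicit factor $\ve/\ell$, and the genuine regularization term $\int_0^t U(t-s)q(s)(\rho^\ve-\delta_0)\,\ds$, whose $L^2$ norm is again a positive power of $\ve$ by a direct Plancherel estimate. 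Collecting all these contributions gives the bound \eqref{main} for any $\de<1/4$.
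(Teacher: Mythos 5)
Your overall architecture (charge equations for both flows, a singular Gronwall/Neumann-series argument for the difference, reconstruction of the $L^2$ convergence from the charge) coincides in outline with the paper's, but two of your key steps contain genuine gaps. First, your kernel comparison hinges on ``the $C^{1/2}$-in-time H\"older regularity of the charge'': in the expansion $q^\ve(t-\ve^2\sigma)=q^\ve(t)-[\,q^\ve(t)-q^\ve(t-\ve^2\sigma)\,]$ the increment is of the \emph{regularized} charge, and no $\ve$-uniform H\"older-$1/2$ bound on $q^\ve$ is available. The a priori information one can actually extract (Prop.~\ref{l:5.4}) is $\|\dot q^\ve\|_{L^\infty}\leq C\ve^{-3/2}$ and $\|D^{1/2}q^\ve\|_{L^1}\leq C\ve^{-1/2-\de}$, both divergent; the paper circumvents this by integrating the charge equation once (working with $I^{1/2}q^\ve$ rather than $q^\ve$), writing the scaled equation as the limit equation plus explicit remainders $Y_j^\ve$, and trading the divergence of $\dot q^\ve$ and $D^{1/2}q^\ve$ against the smallness of $\hat\rho(\ve k)-\hat\rho(0)$. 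Relatedly, your target $\sup_{[0,T]}|q^\ve-q|\leq C\ve^\de$ is stronger than what this machinery yields: the remainder $Y^\ve$ is small in $L^\infty$ but its half-derivative is only small in $L^1$, which is why the paper settles for $\|I^{1/2}(q^\ve-q)\|_{L^\infty(0,T)}\leq C\ve^\de$ (Lemma~\ref{l:IDelta}).

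Second, in the reconstruction step the claim that $\int_0^t U(t-s)(q^\ve(s)-q(s))\rho^\ve\,\ds$ is ``controlled by $\sup|q^\ve-q|$'' in $L^2$ does not hold as stated: $\|U(t-s)\rho^\ve\|_{L^2}=\|\rho^\ve\|_{L^2}\sim\ve^{-3/2}$, and even after Plancherel the region $|\kk|\lesssim\ve^{-1}$ contributes a factor $O(\ve^{-3})$ unless one exploits oscillation in $s$, i.e.\ roughly an $\dot H^{1/4}$-in-time bound on $q^\ve-q$, which is exactly the information that is \emph{not} uniformly available. The paper avoids this by estimating $|(\psi(t),\psi^\ve(t)-\psi(t))|$ together with mass conservation and moving half a derivative onto the boundary trace $(U(\cdot)\bar\phi(t))(\oo)$ via the identity \eqref{this}; this duality step is also what produces the square roots in Prop.~\ref{l:psiconv} and hence the final exponent $\de<1/4$ out of a charge estimate at rate $\de<1/2$ --- a loss of a factor two that your accounting does not explain. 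Without substitutes for these two steps the argument does not close.
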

We now comment about some aspects of the result and on its proof, which involve non standard procedures and techniques.\\
As quoted before, problem \eqref{limiteq} was studied, in its weak form, in \cite{at2}. There the authors focused on the formulation of the problem in the \emph{energy domain}, i.e., for initial data of the form $\psi(0)=\phi_0+q_0G \in L^2(\RE^3)$, with $\phi_0\in \dot H^1(\RE^3)$. For these initial data and under certain assumptions on the  parameters $\mu$ and $\gamma$,  it is proved that the problem \eqref{limiteq} admits a unique and global in time weak solution of the form $ \phi(t) + q(t) G \in L^2(\RE^3)$, with $\phi(t)\in \dot H^1(\RE^3)$. 
In the present paper we concentrate on {\it strong solutions}, i.e., solutions in the operator domain. This additional smoothness is needed to guarantee the validity of several estimates used in the comparison of the scaled and limit flows. The global well-posedness of the limit problem for subcritical nonlinearity $0\leq\mu<1$ in strong form is studied in Section \ref{s:limprob}, see in particular Theorem \ref{t:wp}. 
On the other hand, in Section \ref{s:appprob} it is studied the global well-posedness of the scaled problem in strong form, which holds true for every power nonlinearity $\mu>0$, see Theorem \ref{t:wpscaled}.\\
In order to compare the scaled and the limit flows it is convenient to recast the two problems in  integral form. Let us consider first the scaled problem. We introduce the ``approximated charge''
\begin{equation}
\label{apprcharge}
q^\ve(t) = \frac{\ve}{\ell} ( \rho^\ve,\psi^\ve(t)),
\end{equation}
and denote by $U(t)$  the free Schr\"odinger group  with integral kernel
\begin{equation*}
%\label{group}
U(t,x) = \frac{e^{i \frac{|x|^2}{4t}}}{(4\pi i t )^{3/2}}.
\end{equation*}
By Duhamel formula we write the scaled equation as
\begin{equation}
\label{solveps}
\psi^\ve(t )  = U(t)\psi_0^\ve 
+ i  \int_0^t \ds  (U(t-s)\rho^\ve) q^\ve(s) \\ 
- i \gamma  \frac{\ve}\ell  \int_0^t \ds  (U(t-s)\rho^\ve)\left|q^\ve(s)\right|^{2\mu}   q^\ve(s)
\end{equation}
and from this we obtain an equation for the evolution of the approximated charge
\begin{equation}
\label{eqapprchargeeq}
\frac\ell\ve q^\ve(t ) = (\rho^\ve, U(t)\psi_0^\ve) 
+ i  \int_0^t \ds (\rho^\ve, U(t-s)\rho^\ve) q^\ve(s) \\ 
- i \gamma  \frac{\ve}\ell  \int_0^t \ds (\rho^\ve, U(t-s)\rho^\ve)\left|q^\ve(s)\right|^{2\mu}   q^\ve(s).
\end{equation}
The limit problem has an equivalent formulation in terms of the (proper) charge as well. Define
\begin{equation}\label{solinhom}
\psi(t)=U(t)\psi_0 + i\int_0^t \ds U(t-s,\cdot)q(s) .
\end{equation}
The r.h.s. of \eqref{solinhom} represents a strong solution of the nonlinear limit problem if and only if the charge $q(t)$ satisfies the nonlinear Volterra integral equation (see for the linear case \cite{SY})
\begin{equation}
\label{limit2}
q(t)+4 \sqrt{\pi i}\, \ga \int_0^t \ds \frac{|q(s)|^{2\mu} q(s)}{\sqrt{t-s}} =  4 \sqrt{\pi i} \int_0^t \ds \frac{(U(s)\psi_0)(0)}{\sqrt{t-s}}.
\end{equation}
Eqs. \eqref{eqapprchargeeq} and \eqref{limit2} however are not the right starting point for the comparison of the two nonlinear dynamics. The reason resides in the following fundamental estimate (proven at the beginning of Section \ref{s:6})
\[
 \sup_{t\in[0,T]} \| \psi^\ve (t) - \psi(t) \|_{L^2} \leqslant c \, \left(  \| \psi^\ve_0 - \psi_0 \|_{L^2}+
\| I^{1/2} (q-q^\ve)\|_{L^{\infty} (0,T)}+  \ve^{1/4} \right).
\]
Here
\[
I^{1/2} f (t) = \int_0^t \dd s\, \f{ f(s)}{\sqrt{t-s} }
\] 
is the well known fractional integral of order $1/2$, or Riemann-Liouville operator, which in the present paper is considered as a linear operator between H\"older spaces or homogeneous Sobolev spaces (see Section \ref{s:2} for essential definitions and properties). According to the above estimate, to achieve the main result one has to control the convergence of the initial datum in $L^2(\RE^3)$, which is easy, and the convergence of  $I^{1/2} (q-q^\ve)$, which is much more delicate and involved.\
To this end we recast the charge equations \eqref{limit2} and \eqref{eqapprchargeeq} in the following way. For the limit charge
\[
I^{1/2} q (t) + 4\pi \sqrt{\pi i } \ga \int_0^t \dd s\, |q(s)|^{2\mu} q(s) = 4\pi \sqrt{\pi i } \ga \int_0^t \dd s\,  (U(s)\psi_0) (\bf{0}).
\]
For the approximated charge
\[
I^{1/2} q^\ve (t) + 4\pi \sqrt{\pi i } \ga \int_0^t \dd s\, |q^\ve(s)|^{2\mu} q^\ve(s) = 4\pi \sqrt{\pi i } \ga \int_0^t \dd s\,  (U(s)\psi_0) ({\bf 0}) + Y^\ve (t).
\]
\noindent
The remainder or source term $ Y^\ve(t)$ is a sum of terms explicitly given in Prop. \ref{p:5.1} and Eq. \eqref{Yve4}.
Starting from
\[
I^{1/2} (q^\ve - q)(t)  +4\pi \sqrt{\pi i}\, \ga \int_0^t \dd s ( |q^\ve(s)|^{2\mu} q^\ve(s) - |q(s)|^{2\mu} q(s) ) ={Y}^\ve(t),
\]
to establish the convergence of $I^{1/2} q^\ve$ to  $I^{1/2} q$ it is sufficient to prove the following estimates on the remainders:
\[
\begin{aligned}
&\|{Y}^{\ve}\|_{L^{\infty}(0,T)} \leq C \ve^{1/2} \\ 
&\| D^{1/2}{Y}^{\ve}(t)\|_{L^1(0,T)} \leq C\ve^{\delta} \ \ \ \ \ \ \  (\delta\in [0,1/4),\ \ \ D^{1/2}=\frac{d}{dt} I^{1/2}).
\end{aligned}
\]
To obtain them we need a priori quantitative bounds on approximated charge and wave function
\[
\begin{aligned}
&\|q^\ve \|_{L^{\infty}} \leqslant c \qquad\qquad \  \sup_t \|  \nabla \phi^\ve (t) \|_{L^2} \leqslant c \\
&\|\dot q^\ve \|_{L^{\infty}} \leqslant c \ve^{-3/2} \qquad \|D^{1/2}  q^\ve \|_{L^{1}} \leqslant c \ve^{-1/2+\de}.
\end{aligned}
\]
The first couple of estimates depends on $\ve$-uniform energy estimates on the solution of the scaled problem, proved in Section \ref{s:appprob}. These are non trivial, in particular in the case of attractive nonlinearity ($\gamma<0$).
They hold only on the range of power nonlinearity $\mu$ and coefficients $\gamma$ in which the limit problem has a global solution ($\gamma\geq 0$ and $\mu\geq 0$ or $\gamma<0$ and $0\leq\mu<1$). The second couple of estimates give a quantitative control in $\ve$ on the rate of divergence of the quantities at the l.h.s. and are deduced from the approximate charge equation in Section \ref{s:5}.\\
We add some possible developments.\\
We first notice that the structure of the proof works well for more general gauge invariant nonlinearities than power-type. Moreover the scaling procedure used here could be adapted to a number of non-autonomous problems such as the ionization problem in dimension three (\cite{CdAFM} and \cite{CdA}) or the time-dependent point-interactions (\cite{CCF17}, \cite{HMN} and \cite{NZ}).  In particular  time dependent point interactions could be considered as a limit of charge transfer models, which have a relevant interest in several different contexts (see, e.g., \cite{Grf1990,NieSf2003,RoScgSf-p,Yj,Zi1997}).
Finally, the same approximation procedure could be extended to different equations with concentrated nonlinearities already studied in the literature, in particular the wave and Klein-Gordon dynamics (see, e.g., \cite{NP,Kopylova16,Kopylova17}).\\
\noindent
In conclusion, for the convenience of the reader, we summarize the content of the following Sections.

\n
In Section \ref{s:2} we  introduce some useful notations involving Fourier transforms and various Sobolev spaces needed in the subsequent analysis and we also prove two technical lemmata. 

\n
In Section \ref{s:limprob} we establish the global well-posedness of the limit problem in the domain $\mathcal D$.

\n
In Section \ref{s:appprob} we prove the global well-posedness of the scaled problem  in $H^2(\RE^3)$, two a priori bounds deduced from the conservation of the energy and the convergence of the initial data. 

\n
In Section \ref{s:5} we prove  a priori bounds for the derivative and the fractional derivative of the charge $q^{\ve}$.

\n
In Section \ref{s:6} we finally prove the convergence of the scaled flow to the limit flow using the estimates obtained in the previous Sections.

%%%%%%
%SECTION
%%%%%%
\section{Notation and preliminaries \label{s:2}}

From now on we denote vectors by bold letters (this is a departure from the usage in the introduction). 

When no misunderstanding is possible, the modulus $|\xx|$ will be simply denoted by $x$, and the modulus $|\kk|$ by $k$. We indicate with $C$ a generic constant, possibly dependent on parameters, which can change from line to line.

%SUBSECTION
\subsection{Fourier transform and convolutions}
We denote by $\hat \psi$ the spatial Fourier transform of $\psi$ 
\[
\hat \psi(\kk) =
\frac{1}{(2\pi)^{3/2}}\int_{\RE^3} \dx e^{-i\kk\cdot \xx} \psi(\xx)\ .
\]
Recall that the Fourier transform defined in this way is unitary.

The time-Fourier transform of $f$ is denoted by $\FF f$ and defined as
\[
\FF f (\ome) := 
\frac{1}{(2\pi)^{1/2}}\int_{\RE} \dt e^{-i\ome t} f(t)\ .
\]
With these definitions the  Fourier transform of the convolution is \[(\widehat{\psi*\phi})(\kk) = (2\pi)^{3/2} \hat \psi(\kk) \hat \phi(\kk),\] and the time-Fourier transform of the convolution is  \[\FF(f*g)(\omega) = (2\pi)^{1/2} \FF f(\ome) \FF g(\ome).\] 
The Fourier transform of the free unitary group kernel $U(t,\xx)$ is 
\[\hat U(t, \kk) = \frac{e^{-i k^2 t }}{(2\pi)^{3/2}},  \] 
and \[(\widehat{U(t)\psi})(\kk) = e^{-ik^2 t} \hat \psi(\kk).\]
\par\noindent
Note moreover that the Fourier transform of the fundamental solution $G(\xx)= 1/(4\pi|\xx|)$ is $$\hat G(\kk) = \frac{1}{(2\pi)^{3/2}|\kk|^2}\ .$$
\par\noindent
We list several elementary properties of the Fourier transform of the form factor $\rho$. Since $\rho$ is spherically symmetric so is $\hat\rho$, in particular 
\[\hat \rho (k) = \sqrt{\frac{2}\pi} \int_0^\infty \dd x \, x\, \frac{\sin kx }{k} \rho(x). \]
Since $\rho\in \mathcal S(\RE^3)$, we have that  $x^\eta  \rho(x)$ is integrable for all $\eta\geq0$.  Hence,  for all $k>0$ there exists a positive constant $C$ such that   $|\hat \rho(k) -\hat\rho(0)|\leq C k^2$, because $\hat \rho (0) = \sqrt{\frac{2}\pi} \int_0^\infty \dd x \, x^2 \rho(x)$ and   $| \frac{\sin x }{x} - 1| \leq x^2/6!$.  Since $|\hat \rho(k) -\hat\rho(0)|\leq C$  as well, one has 
\begin{equation}\label{working}
 |\hat \rho(k) -\hat\rho(0)|\leq C k^\eta \qquad \forall\; 0\leq\eta\leq2.
\end{equation}
We also note that
\[\hat \rho' (k) = \sqrt{\frac{2}\pi} \int_0^\infty \dd x \, x\, \frac{kx\cos kx - \sin kx  }{k^2} \rho(x). \]
\noindent
Hence,  for all $k>0$, one has  $|\hat\rho'(k)|\leq C k$, because $|\frac{x\cos x - \sin x }{x^2}| \leq x$. Since it is also true that $|\hat\rho'(k)|\leq C$  we have that 
\begin{equation}\label{working2}
 |\hat \rho'(k)|\leq C k^\eta \qquad \forall\; 0\leq\eta\leq1.
\end{equation}
\par\noindent
Finally, the constant $\ell$ defined in Eq. \eqref{M} can be written as 
\begin{equation}\label{M2}
\ell =  \int_{\RE^3} \dk \frac{(\hat \rho(\kk))^2}{|\kk|^2}.
\end{equation}

%SUBSECTION
\subsection{Sobolev spaces}
When writing  norms and inner products in Lebesgue or Sobolev spaces the domain $\RE^3$ will be omitted. Moreover  for norms  in $L^p(\RE^3)$, only the index will be written and it will be omitted 
when it is two. That is $\|\psi\|_{H^2}= \|\psi\|_{H^2(\RE^3)}$, $\|\psi\|_{H^1}= \|\psi\|_{H^1(\RE^3)}$,  $\|\psi\|_p = \|\psi\|_{L^p (\RE^3)}$, $\|\psi\| = \|\psi\|_{L^2 (\RE^3)}$ and $(\psi_1,\psi_2)_{L^2(\RE^3)}=(\psi_1,\psi_2)$.

For any $\nu\geq 0$, we denote by $\dot H^{\nu}(\RE^d)$ the  homogeneous Sobolev space over $\RE^d$, which is the space  of tempered distributions with Fourier transform in $L^1_{loc}(\RE^d)\cap L^2(\RE^d,|\kk|^{2\nu}\dk)$.

We shall make large use of fractional (homogeneous and nonhomogeneous) Sobolev spaces in dimension one. For this reason, we recall several definitions and known results. For more details, the reader is referred to \cite{bahouri}, \cite{hitchhiker} and \cite{adams-fournier}.

For any $-\infty\leq a <b\leq +\infty$ and $\nu\in(0,1)$,  we use the notation  
\[
[f]_{\dot H^\nu(a,b)} :=  \left(\int_{[a,b]^2 } \dd s \dd s' \f{ | f(s) -f(s') |^2  }{ |s-s'|^{1+2\nu} }\right)^{1/2},
\]
which is sometimes  referred to as Gagliardo (semi)norm of $f$. 

We recall that for $\nu\in(0,1)$ there exists a constant $C_\nu$ such that 
\[[f]_{\dot H^\nu(\RE)} = C_\nu \|\FF f\|_{L^2(\RE,|\omega|^{2\nu}\dd \omega)}\]
for any $f\in \dot H^\nu(\RE)$ (\cite{bahouri}, Proposition 1.37). This equality implies  that, for $\nu\in(0,1)$,  one can define $H^\nu(\RE)$ as the space of measurable functions such that the norm 
\[\|f\|_{H^\nu(\RE)} = \|f\|_{L^2(\RE)} + [f]_{\dot H^\nu(\RE)}\]
is finite. 

The space $H^{\nu}(a,b)$, for generic $-\infty\leq a <b\leq +\infty$ and $\nu\in(0,1)$,  is  defined in a similar way, i.e., it is the space of functions for which the norm \[\|f\|_{H^\nu(a,b)} = \|f\|_{L^2(a,b)} + [f]_{\dot H^\nu(a,b)}\] is finite. 

To define the space $H^\nu(a,b)$  for  finite $a$ and $b$, and $\nu>1$ one sets $\nu = m + \sigma$, where $m$ is an integer and $\sigma\in(0,1)$.  Then $H^\nu(a,b)$ is the space of functions such that $f\in H^m(a,b)$ and  $f^{(m)}\in H^\sigma(a,b)$.

We recall that the space  $L^2(a,b)$ can be identified with $H^0(a,b)$ and  $\dot H^0(a,b)$. 

We conclude by recalling that if $f\in \dot H^{\nu}(\RE)$, for $\nu\geq 0$, then $f\in L^2_{loc}(\RE)$, hence $f \in H^\nu(a,b)$ for any finite $a$  and $b$ (see, e.g., \cite[Prop. 1.37]{bahouri}).  

%SUBSECTION
\subsection{Technical lemmata}\
To prove Lemma \ref{l:wpq} in Section \ref{s:limprob}, we shall use tools from Fourier analysis. To this aim we shall need to extend functions on finite intervals  to functions  on $\RE$, preserving at the same time their regularity properties.  To control the regularity of  the  prolonged function we shall use the following lemma. 
 
%LEMMA
 \begin{lemma}\label{l:prolong}
 Let $-\infty<a<b<\infty$ and let $f\in H^\nu(a,b)$ with $\nu\geq0$. Define 
 \[\tilde f (s) = \left\{ \begin{aligned} &f(s) \quad && \text{if} \quad s \in [a, b] \\ 
 & 0 && \text{otherwise}  \end{aligned}\right. \]
 \begin{enumerate}[i)]
 \item If  $0\leq\nu <1/2$, then $\tilde f\in H^\nu(\RE)$. 
 \item  If  $1/2 <\nu <3/2$ and $f(a)=f(b)=0$, then $\tilde f\in H^\nu(\RE)$ 
 \end{enumerate}
 In both cases there exists a constant $C$ such that $\|\tilde f \|_{H^{\nu}}\leq C\|f\|_{H^\nu}\ .$
\end{lemma}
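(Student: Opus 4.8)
The plan is to reduce everything to the known characterization of fractional Sobolev norms in terms of the Gagliardo seminorm and to estimate the seminorm of $\tilde f$ by splitting the double integral according to whether the points $s,s'$ lie inside or outside $[a,b]$. Throughout I work with the intrinsic definition $\|\tilde f\|_{H^\nu(\RE)}^2 = \|\tilde f\|_{L^2(\RE)}^2 + [\tilde f]_{\dot H^\nu(\RE)}^2$, noting at once that the $L^2$ part is trivial since $\|\tilde f\|_{L^2(\RE)} = \|f\|_{L^2(a,b)} \leq \|f\|_{H^\nu(a,b)}$. The whole content is therefore the control of the Gagliardo seminorm of the zero-extension. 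The case $\nu=0$ is immediate, and for $1<\nu<3/2$ one writes $\nu = 1+\sigma$ with $\sigma\in(0,1/2)$ and reduces to the case $0\leq\nu<1/2$ applied to $\tilde f\,' = \widetilde{f'}$ (the zero-extension of the derivative coincides with the distributional derivative of $\tilde f$ precisely because the boundary values $f(a)=f(b)=0$ kill the jump that would otherwise produce Dirac masses); so the two genuinely distinct cases are $0\leq\nu<1/2$ and $1/2<\nu<1$.

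First I would decompose
\[
[\tilde f]_{\dot H^\nu(\RE)}^2 = \iint_{[a,b]^2} + 2\iint_{[a,b]\times([a,b])^c} + \iint_{([a,b])^c\times([a,b])^c},
\]
where the integrand is everywhere $|\tilde f(s)-\tilde f(s')|^2/|s-s'|^{1+2\nu}$. The first piece is exactly $[f]_{\dot H^\nu(a,b)}^2\leq \|f\|_{H^\nu(a,b)}^2$. The third piece vanishes, since $\tilde f\equiv 0$ on the complement of $[a,b]$. The only nontrivial term is the cross term, where $s\in[a,b]$ and $s'\notin[a,b]$, so that $\tilde f(s')=0$ and the integrand reduces to $|f(s)|^2/|s-s'|^{1+2\nu}$. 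Integrating out $s'$ over the two components $(-\infty,a)$ and $(b,\infty)$ of the complement produces, up to the constant $1/(2\nu)$, the weighted boundary integral
\[
\int_a^b \dd s\; |f(s)|^2 \left( \frac{1}{(s-a)^{2\nu}} + \frac{1}{(b-s)^{2\nu}} \right).
\]
The main obstacle is therefore to bound this Hardy-type integral by $C\|f\|_{H^\nu(a,b)}^2$.

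For $0\leq\nu<1/2$ the weight $(s-a)^{-2\nu}$ is integrable up to the endpoint, and the required bound is a direct consequence of the Hardy inequality on the half-line in the form $\int_a^b |f(s)|^2 (s-a)^{-2\nu}\,\dd s \leq C\,[f]_{\dot H^\nu(a,b)}^2 + C\,\|f\|_{L^2(a,b)}^2$, valid precisely in this subcritical range (this is where the restriction $\nu<1/2$ is used, the borderline $\nu=1/2$ being excluded because the weight ceases to be controllable by the seminorm). I would isolate the behaviour near $s=a$ by writing $f(s)=[f(s)-f_{a,s}]+f_{a,s}$ with $f_{a,s}$ the average of $f$ over $(a,s)$, estimate the fluctuation term by the Gagliardo seminorm and the average term by a one-dimensional Hardy inequality; the contribution near $s=b$ is handled symmetrically. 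In the case $1/2<\nu<1$ the hypothesis $f(a)=f(b)=0$ is exactly what restores integrability of the now-supercritical weight: using $f(s)=f(s)-f(a)$ one gains the factor $|f(s)-f(a)|^2$ against $(s-a)^{-2\nu}$, and a fractional Hardy (Hardy–Rellich) inequality on the interval, whose validity in this range hinges on the vanishing trace, yields the bound by $[f]_{\dot H^\nu(a,b)}^2$. Collecting the three pieces gives $[\tilde f]_{\dot H^\nu(\RE)}^2 \leq C\|f\|_{H^\nu(a,b)}^2$, and combined with the trivial $L^2$ estimate this proves $\|\tilde f\|_{H^\nu(\RE)}\leq C\|f\|_{H^\nu(a,b)}$ in both cases.
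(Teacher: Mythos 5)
Your proposal is correct and follows essentially the same route as the paper: the same three-way splitting of the Gagliardo seminorm reducing everything to the weighted boundary integral $\int_a^b |f(s)|^2\big((s-a)^{-2\nu}+(b-s)^{-2\nu}\big)\,\dd s$, the same subcritical Hardy-type inequality (proved by subtracting an average and combining the Gagliardo seminorm with the classical Hardy inequality) for $0\leq\nu<1/2$, the same fractional Hardy inequality for functions with vanishing trace for $1/2<\nu<1$, and the same reduction to the derivative for $1<\nu<3/2$. The only cosmetic difference is that the paper passes through the average over the whole interval (its inequality \eqref{part} and identity \eqref{ghost}) before reaching the running average you use directly, but the ingredients and their roles are identical.
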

%proof
\begin{proof} 
The case $\nu=0$ is trivial. For  $0<\nu<1$,  it remains to prove that $\FF \tilde f\in L^2(|\omega|^{2\nu} \dd \omega)$, or equivalently that $[\tilde f]_{\dot H^\nu(\RE)}<\infty$. We note that 
\[[\tilde f]_{\dot H^\nu(\RE)}^2 =  [f]_{\dot H^\nu(a,b)}^2 + \frac{1}{\nu} \int_a^b \frac{|f(s)|^2}{(s-a)^{2\nu}}\dd s + \frac{1}{\nu} \int_a^b \frac{|f(s)|^2}{(b-s)^{2\nu}}\dd s, \]
where we used the definition of $\tilde f$ and the Fubini-Tonelli theorem to exchange the order of integration. \\
In case $i)$ we use  the inequality  
\begin{equation}\label{part}
\int_0^d \frac{1}{s^{2\nu}} \left|h(s) - \frac{1}{d}\int_0^dh(s')\dd s'\right|^2\dd s \leq  C \int_{(0,d)^2} \frac{|h(s) -h(s')|^2}{|s-s'|^{2\nu+1}} \dd s\,\dd s' , 
\end{equation}  
which holds true for any $0<\nu<1/2$ and $0<d<\infty$.  To prove it we adapt an argument from  \cite{KP}, Sec. 5.2. We note first the identity
\begin{equation}\label{ghost} h(s) - \frac{1}{d} \int_0^d h(s') \dd s' = h(s) -\frac1s \int_0^s h(s') \dd s' - \int_s^d \frac1{s'} \left( h(s') -\frac1{s'} \int_0^{s'} h(\eta) \dd \eta\right) \dd s' \end{equation}
 which can be checked by integrating by parts in $s'$ the latter integral at the r.h.s. of the equality. Next we use the Hardy inequality 
\[\int_0^d \frac{1}{s^{2\nu}}\left|\int_s^d g(s') \dd s'\right|^2 \leq C \int_0^d \frac{|g(s)|^2}{s^{2\nu-2}} \dd s \]
which holds true for any $0<\nu<1/2$ and $0<d<\infty$, see, e.g., \cite[Eq. (0.23)]{KP}.  We set $g(s) = \frac1s \left( h(s) -\frac1s \int_0^s h(\eta) \dd \eta\right)  $ in Eq. \eqref{ghost} and obtain 
\[ 
\int_0^d \frac{1}{s^{2\nu}} \left|h(s) - \frac{1}{d} \int_0^d h(s') \dd s'\right|^2 \dd s \leq  C \int_0^d \frac{1}{s^{2\nu}} \left|h(s) - \frac{1}{s} \int_0^s h(s') \dd s'\right|^2 \dd s.
\]
To conclude the proof of inequality \eqref{part} we use a straightforward adaptation of the argument in the proof of Th. 5.9 in \cite{KP}, which gives  
\begin{multline*}\int_0^d \frac{1}{s^{2\nu}} \left|h(s) - \frac{1}{s} \int_0^s h(s') \dd s'\right|^2 \dd s = 
\int_0^d \frac{1}{s^{2\nu}} \left| \frac{1}{s} \int_0^s \big(h(s) - h(s')\big) \dd s'\right|^2 \dd s  \\ \leq \int_0^d \frac{1}{s^{2\nu+1}} \int_0^s |h(s) - h(s')|^2 \dd s' \dd s \leq \int_0^d \int_0^d  \frac{ |h(s) - h(s')|^2}{|s-s'|^{2\nu+1}} \ \dd s' \,\dd s. 
\end{multline*}
By using inequality \eqref{part} with  $h(s) = f(s+a)$ and $d =b-a$, we have that  
 \begin{multline*}
  \int_a^b \frac{|f(s)|^2}{(s-a)^{2\nu}}\dd s \\ 
   \leq 2\int_a^b \frac{1}{(s-a)^{2\nu}} \left|  f(s) - \frac{1}{b-a}\int_a^b f(s') \dd s'\right|^2 \ \dd s   +2  \left| \frac{1}{b-a}\int_a^b f(s') \dd s'\right|^2  \int_a^b \frac{1}{(s-a)^{2\nu}} \ \dd s  \\ 
   \leq C[f]^2_{\dot H^\nu(a,b)} + \frac{2\|f\|^2_{L^2(a,b)}}{(1-2\nu)(b-a)^{2\nu}} \leq     C  \|f\|_{ H^\nu(a,b)}^2.
 \end{multline*}
 A similar argument gives 
\[
 \int_a^b \frac{|f(s)|^2}{(b-s)^{2\nu}}\dd s  \leq C  \|f\|_{ H^\nu(a,b)}^2. 
\]

In case $ii)$ we use the Hardy inequality 
\[ \int_a^b |f(s)|^2 \left(\frac{1}{(s-a)^{2\nu}}+\frac{1}{(b-s)^{2\nu}}\right)\dd s \leq C \int_{(a,b)^2} \frac{|f(s)-f(s')|^2}{|s-s'|^{1+2\nu}}  \dd s \, \dd s' \] which holds true for any $f\in H^{\nu}(a,b)$, for $1/2<\nu<1$, with $f(a) = f(b) =0$. For the proof we refer to \cite{SL10}. There, the inequality is proved for functions in $C_c^{\infty}(a,b)$ (infinitely differentiable and compactly supported in $(a,b)$). By a density argument, one infers that it holds true for any function $f\in H^{\nu}(a,b)$, with $f(a) = f(b) =0$ (see also \cite[Th. 2.4.8]{Sthesis}). This implies the proof of case $ii)$, $ [\tilde f]_{\dot H^\nu(\RE)} \leq C \|f\|_{H^\nu(a,b)}$ with $1/2<\nu<1$.  To conclude the proof of the Lemma it remains to show that the case $ii)$ holds true also for $1\leq \nu <3/2$. By Th. 5.29 in \cite{adams-fournier} the statement is true for $\nu = 1$, hence it is true also for $1<\nu<3/2$, by applying case  $i)$ to $\dot f$. 
\end{proof}
\begin{remark}
The argument in the above proof fails for $\nu=1/2$, where the Hardy inequality fails (see \cite{KP}).
\end{remark}

It will be occasionally useful the embedding of homogeneous fractional Sobolev spaces in H\"older spaces. In particular (see \cite{bahouri}, Theorem 1.50) for $\nu-1/2 >0$ and not an integer, $\dot H^\nu(\RE)$ is embedded in $C^{[\nu-1/2],\ \nu-1/2-[\nu-1/2]}(\RE)$.

To show that the charge $q$ enjoys   the regularity properties stated in Lemma \ref{l:wpq} in Section \ref{s:limprob}, we shall bootstrap on Eq. \eqref{limit2}. To this aim we define the operator  $I^{1/2}$ and its (up to a factor $\pi$) inverse $D^{1/2}$ as  
\[%\begin{equation}\label{abel}
I^{1/2}f(t) : = \int_0^t \ds \frac{f(s) }{\sqrt{t-s}} , \qquad D^{1/2} f(t):= \frac{d}{dt} \int_0^t \dd s \frac{f(s)}{\sqrt{t-s}}.  
\]%\end{equation}
We note that with this definition $D^{1/2}$ is proportional to the inverse of $I^{1/2}$, i.e., $I^{1/2}D^{1/2} f  = D^{1/2} I^{1/2}f = \pi f $.  In  the following lemma, we recall the regularizing properties of $I^{1/2}$ as an operator    between homogeneous Sobolev spaces, for the proof we refer to \cite[Lemma 3]{at1}.
%LEMMA
\begin{lemma}\label{l:reg} Let  $\nu \geq0$ and $T>0$. Assume that  $f \in \dot H^\nu (\RE)$ and has  support in $[0,T]$.  Then $I^{1/2}f\in\dot H^{\nu+1/2}(\RE)$.
\end{lemma}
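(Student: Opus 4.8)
The plan is to recognize $I^{1/2}$ as a convolution operator, so that membership in $\dot H^{\nu+1/2}(\RE)$ can be read off directly from the Fourier-side definition of the homogeneous Sobolev spaces. Because $f$ is supported in $[0,T]$, we may write, for $t\ge 0$,
\[
I^{1/2}f(t)=\int_{-\infty}^{t}\ds\,(t-s)^{-1/2}f(s)=(K*f)(t),\qquad K(u)=u^{-1/2}\mathbf 1_{\{u>0\}},
\]
and we regard $I^{1/2}f$ as the function $K*f$ on all of $\RE$ (it vanishes for $t<0$). A first preliminary remark is that $f\in\dot H^{\nu}(\RE)\subset L^2_{loc}(\RE)$ together with $\supp f\subset[0,T]$ gives $f\in L^2(\RE)$ and, $f$ having compact support, also $f\in L^1(\RE)$; hence $\FF f$ is bounded and continuous, a fact that will be used to control the low--frequency behaviour.

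Next I would compute the Fourier transform of the kernel $K$. Although $K\notin L^1\cup L^2$, it is a tempered distribution, and it is homogeneous of degree $-1/2$; in one dimension this forces $\FF K$ to be homogeneous of degree $-1/2$ as well, so that $|\FF K(\ome)|=c\,|\ome|^{-1/2}$ for a constant $c>0$ (the phase, a fixed multiple of $\sgn\ome$, is irrelevant here). Applying the convolution rule $\FF(K*f)(\ome)=(2\pi)^{1/2}\FF K(\ome)\,\FF f(\ome)$, legitimate since $f\in L^1\cap L^2$, yields the pointwise identity
\[
|\FF(I^{1/2}f)(\ome)|=C\,|\ome|^{-1/2}\,|\FF f(\ome)|,\qquad \ome\neq0 .
\]

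It then remains to verify the two defining conditions of $\dot H^{\nu+1/2}(\RE)$. The weighted $L^2$ bound follows by a direct integration, namely $\int_{\RE}\dome\,|\ome|^{2(\nu+1/2)}|\FF(I^{1/2}f)(\ome)|^2=C\int_{\RE}\dome\,|\ome|^{2\nu}|\FF f(\ome)|^2$, which is finite precisely because $f\in\dot H^{\nu}(\RE)$. For local integrability of $\FF(I^{1/2}f)$ only the neighbourhood of $\ome=0$ is at issue, and there $|\FF(I^{1/2}f)(\ome)|\le C\,|\ome|^{-1/2}\|\FF f\|_{\infty}$ by the boundedness of $\FF f$ established above, while $|\ome|^{-1/2}$ is integrable near the origin in dimension one. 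Together these give $I^{1/2}f\in\dot H^{\nu+1/2}(\RE)$.

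The only delicate point, which I expect to be the main technical obstacle, is the identification $|\FF K(\ome)|=c\,|\ome|^{-1/2}$, i.e.\ the justification of the distributional Fourier transform of the non-integrable kernel $u_+^{-1/2}$ and of the convolution theorem in this setting. It can be handled either through the standard formula for the Fourier transform of $u_+^{\lambda}$ (obtained by analytic continuation from $\Re\lambda>0$), or by regularising $K$ with a factor $e^{-\epsilon u}$ and passing to the limit $\epsilon\to0$; both routes confirm the homogeneity and hence the claimed bound.
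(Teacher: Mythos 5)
Your argument is correct, and it is essentially the proof behind the statement: the paper itself does not prove this lemma but refers to \cite[Lemma 3]{at1}, where the same Fourier-multiplier computation is carried out — writing $I^{1/2}f=K*f$ with $K(u)=u_+^{-1/2}$ and using $|\FF K(\ome)|=c\,|\ome|^{-1/2}$ to trade the weight $|\ome|^{2\nu}$ for $|\ome|^{2\nu+1}$. The one point you rightly flag as delicate, the distributional Fourier transform of the non-integrable kernel and the validity of the convolution theorem, is genuine but is settled by either of your proposed regularizations, or most economically by splitting $K=K\II_{[0,1]}+K\II_{(1,\infty)}\in L^1+L^2$ and applying the classical convolution theorem to each piece; your use of the compact support of $f$ to get $f\in L^1\cap L^2$ and hence $\FF f$ bounded, which controls both the low-frequency $L^1_{loc}$ condition and the product with $\FF K$, is exactly what makes this work.
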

\begin{remark}\label{r:2.4}
The mapping properties of $I^{1/2}$ between H\"older spaces (\cite{GF}, Theorem 4.2.1) will be useful: If $0<\beta<1/2$, $f\in C^{0,\beta}([0,T])$ and $f(0)=0$ then $I^{1/2}f\in C^{0,\beta+1/2}([0,T])\ .$
\end{remark}
We end this Section with the following identity that will be used later on in the proof of the main theorem and which holds for $g\in L^{\infty}((0,T)),$ $f\in H^{\nu}((0,T))$ with $\nu>1/2$ and $f(0)=0:$  
\begin{equation}\label{this}
\int_0^t \dd s\, g(s) f(t-s) = \frac{1}{\pi}\int_0^t \dd s \, I^{1/2} g (s) \frac{d}{dt} I^{1/2}f(t-s).  
\end{equation}
The identity follows using integration by parts. 

%%%%%%
%SECTION
%%%%%%
\section{The limit problem}\label{s:limprob}

In this section, making use of the integral formulation  \eqref{solinhom}-\eqref{limit2},  we prove that the limit problem  \eqref{limiteq} is well posed and admits global solutions in case of repulsive nonlinearities or mildly attractive nonlinearities (see Th. \ref{t:wp} and Rem.  \ref{r:3.6} below). 
First we prove local existence  in $\DD$, then, using the conserved quantities, namely mass and energy, we prove that the solution is global.
This evolution problem was analyzed also in \cite{at2}. Compared to  \cite{at2}, we need to prove the existence 
of more regular solutions. The increase in regularity requires new technical
ingredients such as the fractional Hardy inequalities used in Lemma \ref{l:prolong}.
The first step in proving local existence is the analysis of \eqref{limit2}.
To this aim we recall the following proposition (see \cite{at2}), which establishes the Sobolev regularity of $(U(\cdot)\phi_0)(\oo)$.
\begin{proposition} \label{p:harvey}
Let $\phi\in \dot H^2(\RE^3)$, then $(U(\cdot)\phi)(\oo) \in \dot H^{3/4}(\RE)$.
\end{proposition}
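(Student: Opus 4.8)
The plan is to recognize $f(t):=(U(t)\phi)(\oo)$, viewed as a function of $t\in\RE$, as a one--dimensional Fourier transform of an explicit profile supported on the half--line, and then to read off its time regularity directly from the weighted $L^2$ description of $\dot H^{3/4}(\RE)$. Writing $(U(t)\phi)(\oo)$ through its spatial Fourier transform gives
\[
f(t) = \frac{1}{(2\pi)^{3/2}}\int_{\RE^3}\dk\, e^{-ik^2 t}\,\hat\phi(\kk).
\]
Passing to polar coordinates and introducing the angular average $\bar\phi(k):=\frac{1}{4\pi}\int_{S^2}\hat\phi(k\sigma)\,d\sigma$, I would then perform the substitution $u=k^2$, which turns the Schr\"odinger phase $e^{-ik^2 t}$ into a genuine Fourier kernel $e^{-iut}$ and recasts $f$ as
\[
f(t) = c\int_0^\infty du\, g(u)\, e^{-iut},\qquad g(u):=\sqrt{u}\,\bar\phi(\sqrt u),
\]
for a suitable constant $c$; here $g$ is extended by zero to $u<0$, so that $\FF f$ is supported on $(-\infty,0]$ and, up to a constant, $\FF f(\ome)=c\sqrt{2\pi}\,g(-\ome)$.

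With this representation the claim reduces to bookkeeping of exponents. Since $\dot H^{3/4}(\RE)$ requires $\FF f\in L^2(|\ome|^{3/2}\dome)$, I would compute
\[
\int_\RE |\ome|^{3/2}|\FF f(\ome)|^2\dome = C\int_0^\infty u^{3/2}|g(u)|^2\,du = C\int_0^\infty u^{5/2}|\bar\phi(\sqrt u)|^2\,du,
\]
and the substitution $u=k^2$ converts the right--hand side into $C\int_0^\infty k^6|\bar\phi(k)|^2\,dk$. The exponent $3/4$ is precisely the value for which this weight matches the one produced by the $\dot H^2$ norm after passing to polar coordinates, since $\int_{\RE^3}|\kk|^4|\hat\phi|^2\dk$ carries exactly the factor $k^6$. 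A single application of the Cauchy--Schwarz (Jensen) inequality on $S^2$, $4\pi|\bar\phi(k)|^2\leq \int_{S^2}|\hat\phi(k\sigma)|^2\,d\sigma$, then bounds $\int_0^\infty k^6|\bar\phi(k)|^2\,dk$ by $\|\phi\|_{\dot H^2}^2$, giving the weighted $L^2$ membership together with the quantitative estimate $\|\FF f\|_{L^2(|\ome|^{3/2}\dome)}\leq C\|\phi\|_{\dot H^2}$.

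The only delicate point, and the step I expect to require the full hypothesis $\phi\in\dot H^2(\RE^3)$ rather than merely its seminorm, is the local integrability of $\FF f$ near $\ome=0$, which is part of the definition of the homogeneous Sobolev space adopted here. The weighted $L^2$ bound degenerates at the origin and does not by itself control $\int_{-R}^0|\FF f|\dome$, since that quantity amounts to $\int_0^{\sqrt R}k^2|\bar\phi(k)|\,dk$ and the naive Cauchy--Schwarz split reintroduces a non--integrable factor $k^{-2}$ at $k=0$. To close this gap I would not use the $\dot H^2$ weight near the origin at all, but invoke instead the other half of the definition of $\dot H^2(\RE^3)$, namely $\hat\phi\in L^1_{loc}(\RE^3)$; indeed
\[
\int_0^{\sqrt R}k^2|\bar\phi(k)|\,dk \leq \frac{1}{4\pi}\int_{|\kk|\leq\sqrt R}|\hat\phi(\kk)|\dk<\infty,
\]
which yields $\FF f\in L^1_{loc}(\RE)$ and completes the verification that $f\in\dot H^{3/4}(\RE)$. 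Thus the heart of the argument is the dispersion--induced change of variables $u=k^2$, which produces the half--power gain, while the genuine care is needed only in separating the $L^1_{loc}$ behaviour near $\ome=0$ from the high--frequency weighted--$L^2$ estimate.
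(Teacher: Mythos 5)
Your proof is correct, and it is essentially the argument the paper relies on: the paper does not reproduce a proof but refers to Th.~4 of \cite{at2}, where the same change of variables $\omega=k^2$ converts the Schr\"odinger phase into a time Fourier kernel and the $\dot H^{3/4}$ weight $|\omega|^{3/2}$ into the $k^6$ weight of the $\dot H^2(\RE^3)$ norm in polar coordinates. Your additional care about the $L^1_{loc}$ part of the definition of $\dot H^{3/4}(\RE)$ near $\omega=0$ is a welcome (and correct) piece of bookkeeping consistent with the definition of homogeneous Sobolev spaces adopted in Section \ref{s:2}.
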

%PROOF
\noindent The  proof of Prop. \ref{p:harvey} follows without modifications the argument used in the proof of Th. 4 in  \cite{at2}. 
\noindent Now we prove local existence for Eq. \eqref{limit2}. 
 %LEMMA
 \begin{proposition}\label{l:wpq}
 Let $\psi_0 \in \DD$, then there exists $T_0>0$ such that equation \eqref{limit2} admits a unique solution $q\in  H^{5/4}(0,T_0)$. 
 \end{proposition}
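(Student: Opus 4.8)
The plan is to solve \eqref{limit2} by a contraction argument in a ball of $H^{5/4}(0,T_0)$, after first putting the equation into a form in which the solution manifestly enjoys $H^{5/4}$ regularity. Writing $N(q):=|q|^{2\mu}q$, Eq. \eqref{limit2} reads $q = z - 4\sqrt{\pi i}\,\ga\, I^{1/2}[N(q)]$, where $z(t)=4\sqrt{\pi i}\,I^{1/2}\big[(U(\cdot)\psi_0)(\oo)\big](t)$. First I would unfold the source using $\psi_0=\phi_0+q_0G$. A direct computation of the Fourier integral gives $(U(s)G)(\oo)=c\,s^{-1/2}$ for an explicit constant $c$, and since $I^{1/2}[s^{-1/2}]$ is the constant $\int_0^1 u^{-1/2}(1-u)^{-1/2}\dd u=\pi$, the singular part $q_0G$ contributes to $z$ exactly the constant $q_0$. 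Thus $z(t)=q_0+4\sqrt{\pi i}\,I^{1/2}\big[(U(\cdot)\phi_0)(\oo)\big](t)$, and by Prop. \ref{p:harvey} the function $(U(\cdot)\phi_0)(\oo)$ belongs to $\dot H^{3/4}(\RE)$.

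The key structural point is that $z$ is \emph{not} in $H^{5/4}(0,T_0)$: since $(U(0)\phi_0)(\oo)=\phi_0(\oo)$ and $I^{1/2}$ of a nonzero constant equals $2\phi_0(\oo)\sqrt t$, which fails to lie in $H^{5/4}(0,T_0)$, the source carries a $\sqrt t$ singularity with coefficient proportional to $\phi_0(\oo)$. The same computation applied to $I^{1/2}[N(q)]$ produces a $\sqrt t$ term with coefficient proportional to $N(q(0))=|q_0|^{2\mu}q_0$ on the subspace $\{q(0)=q_0\}$. The nonlinear boundary condition $\phi_0(\oo)=\ga|q_0|^{2\mu}q_0$ built into $\DD$ forces these two contributions to cancel exactly. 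Concretely, writing $h:=(U(\cdot)\phi_0)(\oo)-\phi_0(\oo)\in\dot H^{3/4}(\RE)$ with $h(0)=0$, and using $I^{1/2}[c](t)=2c\sqrt t$, the equation takes the cancelled form
\[
q(t)=q_0+4\sqrt{\pi i}\,I^{1/2}[h](t)-4\sqrt{\pi i}\,\ga\,I^{1/2}\big[N(q)-N(q_0)\big](t),
\]
in which both functions to which $I^{1/2}$ is applied vanish at $t=0$. I would take the right-hand side as the fixed-point map $\Phi$, acting on the affine set $\{q\in H^{5/4}(0,T_0):q(0)=q_0\}$, which $\Phi$ preserves because $I^{1/2}[\,\cdot\,](0)=0$.

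To run the contraction I would use the regularizing estimate for $I^{1/2}$. For $f\in H^{3/4}(0,T_0)$ with $f(0)=0$, Lemma \ref{l:prolong}(ii) (after a causal extension to the right that tapers $f$ back to $0$, which does not affect $I^{1/2}f$ on $[0,T_0]$ since $I^{1/2}$ is causal) produces a zero extension in $\dot H^{3/4}(\RE)$ with compact support; Lemma \ref{l:reg} then gives $I^{1/2}f\in\dot H^{5/4}(\RE)$, and restriction yields $I^{1/2}f\in H^{5/4}(0,T_0)$. Applied to $h$ this shows the source term lies in $H^{5/4}(0,T_0)$. For the nonlinear term I would use the pointwise bound $|N(z)-N(w)|\leq C\,(|z|^{2\mu}+|w|^{2\mu})|z-w|$, valid for every $\mu\geq0$, together with its second-difference refinement, to estimate $N(q)-N(q_0)$ and the difference $N(q_1)-N(q_2)$ in $H^{3/4}(0,T_0)$ in terms of $\|q\|_{L^\infty}$ and the $H^{3/4}$ norms of $q$ and $q_1-q_2$. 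The smallness needed to close the contraction comes from the scaling inequality $[f]_{\dot H^{3/4}(0,T_0)}\leq C\,T_0^{1/2}\,[f]_{\dot H^{5/4}(0,T_0)}$, valid for $f(0)=0$ (and the analogous gains for the $L^2$ and $L^\infty$ norms), since on $\{q(0)=q_0\}$ all relevant differences vanish at $t=0$. Choosing $T_0$ small makes $\Phi$ a contraction of a ball into itself, and the Banach fixed point theorem gives a unique $q\in H^{5/4}(0,T_0)$; uniqueness in the full space follows from the same Lipschitz estimate.

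The step I expect to be the main obstacle is the composition estimate for the nonlinearity in the fractional space $H^{3/4}$, in particular controlling $[\,N(q_1)-N(q_2)\,]_{\dot H^{3/4}}$ when $0\leq\mu<1/2$, where $z\mapsto|z|^{2\mu}z$ is only Hölder and not $C^1$; for $\mu\geq1/2$ this is routine. Secondary difficulties are the exact $\sqrt t$ cancellation above, which is what makes the $H^{5/4}$ statement (rather than merely an $H^{3/4}$ or a Hölder statement) possible and which relies crucially on the boundary condition, and the bookkeeping with Lemma \ref{l:prolong} needed to pass between $H^\nu(0,T_0)$ and $\dot H^\nu(\RE)$ so that Lemma \ref{l:reg} applies.
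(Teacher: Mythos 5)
Your reduction of the equation is exactly the paper's: the explicit computation $(U(s)G)(\oo)=\tfrac{1}{4\pi}(\pi i s)^{-1/2}$ together with \eqref{identities2} turns the contribution of $q_0G$ into the constant $q_0$, and the boundary condition $\phi_0(\oo)=\ga|q_0|^{2\mu}q_0$ is precisely what makes the function $f(t)=(U(t)\phi_0)(\oo)-\ga|q_0|^{2\mu}q_0$ vanish at $t=0$, so that the extension-by-reflection plus Lemma \ref{l:prolong}-\emph{ii)} and Lemma \ref{l:reg} put the source $I^{1/2}f$ in $H^{5/4}(0,T_0)$. Your ``cancelled form'' is identical to \eqref{limit2b2}. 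Where you genuinely diverge is the existence mechanism: you propose a Banach fixed point directly in the affine set $\{q\in H^{5/4}(0,T_0):q(0)=q_0\}$, which forces you to prove that $q\mapsto|q|^{2\mu}q$ is locally Lipschitz from a ball of $H^{5/4}$ into $H^{3/4}$. You correctly flag this as the main obstacle, and it is a real gap, not a technicality: controlling $[\,N(q_1)-N(q_2)\,]_{\dot H^{3/4}}$ requires second-difference (i.e.\ $C^{1,1}$-type) information on $z\mapsto|z|^{2\mu}z$, which fails for $0\leq\mu<1/2$; the naive route via H\"older continuity of $DN$ (exponent $2\mu$) combined with $H^{5/4}\hookrightarrow C^{0,3/4}$ needs $2\mu\cdot\tfrac34>\tfrac34$, i.e.\ $\mu>1/2$ again. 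Since the proposition is stated for every $\mu\geq0$, the proposal as written does not prove it.

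The paper avoids this obstruction by splitting the argument in two: it first obtains a unique \emph{continuous} solution on $[0,T_0]$ by contraction in $C([0,T_0])$ (citing Miller), where only the pointwise tame bound $|N(a)-N(b)|\leq C(|a|^{2\mu}+|b|^{2\mu})|a-b|$ is needed and no fractional norm of a difference ever appears; it then bootstraps the regularity of that single, already-constructed $q$ through $H^{1/4}\to H^{3/4}\to H^{5/4}$, using estimate \eqref{tame} applied to $q$ alone, for which the Gagliardo seminorm transfers directly ($[\,|q|^{2\mu}q\,]_{\dot H^\nu}\leq C\|q\|_{L^\infty}^{2\mu}[q]_{\dot H^\nu}$) with no smoothness of $N$ beyond local Lipschitz. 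This two-step structure is what makes the result hold for all $\mu\geq0$, and it also quietly handles a second issue your scheme skips over: Lemma \ref{l:prolong} excludes $\nu=1/2$, so the bootstrap must step $1/4\to3/4\to5/4$ rather than $1/2\to1\to\dots$. If you restrict to $\mu\geq1/2$ your direct contraction can be made to work, but to recover the full statement you should adopt the paper's order of operations: existence and uniqueness first in $C([0,T_0])$, regularity afterwards.
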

%PROOF
\begin{proof}
First we cast \eqref{limit2} in a more convenient form.
Since $\psi_0\in \DD$, 
\begin{equation}\label{anta}
4 \sqrt{\pi i} \int_0^t \ds \frac{(U(s)\psi_0)(\oo)}{\sqrt{t-s}}  = 4 \sqrt{\pi i} \int_0^t \ds \frac{(U(s)\phi_0)(\oo)}{\sqrt{t-s}}  + q_0 .
\end{equation}
Indeed one has  
\begin{multline*}
4 \sqrt{\pi i} \int_0^t \ds \frac{(U(s)G)(\oo)}{\sqrt{t-s}} = 4 \sqrt{\pi i} \int_0^t \ds \frac1{\sqrt{t-s}}  \int_{\RE^3} \dk \hat U(s,\kk) \hat G(\kk) \\ 
=  4 \sqrt{\pi i} \int_0^t \ds \frac1{\sqrt{t-s}}  \int_{\RE^3} \dk \frac{e^{-i|\kk|^2s}}{(2\pi)^3|\kk|^2} 
=  \frac{ 2\sqrt{ i}}{\pi^{3/2}} \int_0^t \ds \frac1{\sqrt{t-s}}  \int_0^\infty \dd k\, e^{-i k^2s} 
=1,
\end{multline*}
using the Fresnel integral
\begin{equation} \label{identities1}
\int_0^\infty \dd k\, e^{-i k^2 s} = \frac{1}{2\sqrt s} \sqrt{\frac{\pi}{ i}} 
\end{equation}
and the elementary identity
\begin{equation}\label{identities2}
  \int_0^t\dd s\, \frac{1}{\sqrt{t-s}} \frac{1}{\sqrt{s}}  = \pi .
\end{equation}
Adding and subtracting a suitable quantity we rewrite \eqref{anta} as
\[%\begin{equation}\label{limit2b1}
q(t)-q_0 +4 \sqrt{\pi i}\, \ga \int_0^t \ds \frac{|q(s)|^{2\mu} q(s)-|q_0|^{2\mu}q_0}{\sqrt{t-s}} =  4 \sqrt{\pi i} \int_0^t \ds \frac{(U(s)\phi_0)(\oo)-\gamma|q_0|^{2\mu}q_0}{\sqrt{t-s}},
\]%\end{equation}
or in a more compact form
\begin{equation}\label{limit2b2}
(q-q_0) +4 \sqrt{\pi i}\, \ga\,  I^{1/2} ( |q|^{2\mu} q-|q_0|^{2\mu}q_0 ) = 4 \sqrt{\pi i}\, I^{1/2} f
\end{equation}
where we have denoted $(U(t)\phi_0)(\oo)-\gamma|q_0|^{2\mu}q_0$ by $f(t)$ for the sake of brevity. Notice that $f$ is a continuous function vanishing at $t=0$ due to Prop.  \ref{p:harvey},
Sobolev embedding theorem and boundary condition. More precisely (see Rem. \ref{r:2.4}), $f\in C^{0,1/4}(\RE^+)$. Then $I^{1/2} f$ is continuous too; in fact $I^{1/2}f\in C^{0,3/4}(\RE^+)\ .$

It is straightforward to prove that \eqref{limit2b2} admits a solution $q\in C([0,T_0] )$ for sufficiently small $T_0$ by a contraction argument (see \cite{Miller}).
Due to the continuation properties of solutions of nonlinear Volterra integral equations, there exists a 
maximal time of existence $T^*$ and the following alternative holds: either $T^*=+\infty$, that is the solution is global, or $\limsup_{t\to T^*} |q(t)|= +\infty$.

Now we prove that the solution has $H^{5/4}(0,T_0)$ regularity by means of a bootstrap argument.
Let us analyze the source term $I^{1/2} f$. As already remarked, $f(0)=0$ because of the boundary condition in $\DD$.
Let us take the restriction of $f$ to $[0, T_0]$ and extend it to $[0,2T_0]$ by reflection w.r.t. $t=T_0$, i.e., 
\[
f_s(t) = \left\{\begin{aligned} &f(t) \quad && t\in [0,T_0] \\ 
& f(2T_0 - t)&& t\in(T_0 ,2T_0]\end{aligned}\right.
\]
Moreover we extend $f_s$ to a function $\tilde f$ defined on the real line in the following way: 
\[
\tilde f(t) = \left\{\begin{aligned}&f_s(t) \quad && t\in (0,2T_0) \\ 
& 0 && \text{otherwise}\end{aligned}\right.
\]
It is well know, see \cite{hitchhiker} , that the extension by reflection is continuous w.r.t. fractional Sobolev norm, that is:
\[
\|f_s\|_{L^2(0,2T_0)}^2 = 2\|f\|_{L^2(0,T_0)}^2\ ,  \qquad
 [f_s]^2_{\dot H^{3/4}(0,2T_0)}  \leq 4[f]^2_{\dot H^{3/4}(0,T_0)}. 
\] 
Since $f_s(0) =f_s(2T_0) = 0$, by Lemma \ref{l:prolong}-\emph{ii)},  we conclude that  $\tilde f \in H^{3/4}(\RE)$ and is compactly supported. 
Then by Lemma \ref{l:reg} we have $I^{1/2} \tilde f \in  \dot H^{5/4}(\RE)$  
and  $I^{1/2} \tilde f \in L^2(0,T_0)$ hence $I^{1/2} \tilde f \in   H^{5/4}(0,T_0)$.
Notice that by construction for any $t\in [0,T_0 ]$ we have 
\[
\int_0^t \dd s \frac{f(s)}{\sqrt{t-s}}   = \int_0^t \dd s \frac{\tilde f(s)}{\sqrt{t-s}},
\]
and then $I^{1/2} f \in H^{5/4} (0, T_0) $.

Next we start to bootstrap. Recall that $q$ is continuous and bounded in $[0,T_0]$. Repeating the previous argument, 
using this time Lemma \ref{l:prolong}-\emph{i)} and Lemma \ref{l:reg}, we have 
$  I^{1/2} ( |q|^{2\mu} q-|q_0|^{2\mu}q_0 ) \in H^{1/2}(0, T_0)$ and then $q\in  H^{1/2}(0, T_0)$ by \eqref{limit2b2} since $I^{1/2} f \in H^{5/4} (0, T_0) $.

We can not immediately iterate the argument since in Lemma \ref{l:prolong} the case $\nu=1/2$ is not included.
However it is enough to notice  that $q\in  H^{1/4}(0, T_0)$ as well and we can repeat the argument. 
Since  $q\in  H^{1/4}(0, T_0)$ and it is bounded then, by the inequality
\begin{equation} \label{tame}
\big||q(s)|^{2\mu}q(s) - |q(s')|^{2\mu}q(s')\big| \leq C \big||q(s)|^{2\mu}+ |q(s')|^{2\mu}\big| \, |q(s)-q(s')| \leq C \| q\|^{2\mu}_{L^{\infty}(0,T_0) } |q(s)-q(s')|
\end{equation}
 we have that  $\||q|^{2\mu}q\|_{H^{1/4}(0,T_0)} \leq C\|q\|^{2\mu}_{L^\infty(0,T_0)} \, \|q\|_{H^{1/4}(0,T_0)}$, from which  $|q|^{2\mu} q \in H^{1/4} (0,T_0)$.
 Again, by the extension argument above  and Lemma \ref{l:prolong}-\emph{i)}, we have 
$  I^{1/2} ( |q|^{2\mu} q-|q_0|^{2\mu}q_0 ) \in H^{3/4}(0, T_0)$ and then $q  \in H^{3/4}(0, T_0)$.
In the final step of the bootstrap we notice that $|q|^{2\mu} q \in H^{3/4} (0,T_0)$ by \eqref{tame} and
 use Lemma \ref{l:prolong}-\emph{ii)} to prove 
$  I^{1/2} ( |q|^{2\mu} q-|q_0|^{2\mu}q_0 ) \in H^{5/4}(0, T_0)$; it follows $q  \in H^{5/4}(0, T_0)$.
\end{proof}

Next we study the local well-posedness properties of \eqref{limiteq}. 
%LEMMA
 \begin{proposition}\label{l:locwp}
 Let $\psi_0 \in \DD$, then there exists $T_0$ such that problem  \eqref{limiteq} admits a unique solution $\psi(t)\in \DD$ for all $t\in[0,T_0]$. Moreover the map $t \leadsto \psi(t)$  belongs to $C([0,T_0],\DD)\cap C^1([0,T_0],L^2(\RE^3))$. 
 \end{proposition}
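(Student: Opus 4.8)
The plan is to build the solution from the charge and then check membership in $\DD$. First I would apply Proposition \ref{l:wpq} to the datum $\psi_0=\phi_0+q_0G$, obtaining $T_0>0$ and the unique $q\in H^{5/4}(0,T_0)$ solving \eqref{limit2}; note that the reformulation \eqref{limit2b2} forces $q(0)=q_0$. I then define $\psi(t)$ by the Duhamel formula \eqref{solinhom}. Taking the spatial Fourier transform and integrating by parts once in the time integral (here $q(0)=q_0$ is used) shows that $\psi(t)=\phi(t)+q(t)G$ with
\[
\hat\phi(t,\kk)=e^{-ik^2t}\hat\phi_0(\kk)-\frac{1}{(2\pi)^{3/2}k^2}\int_0^t\ds\,e^{-ik^2(t-s)}\dot q(s),
\]
so that the singular part $q(t)G$ is extracted and $\phi(t)$ is the candidate regular part.

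The step I expect to be the main obstacle is proving $\phi(t)\in\dot H^2(\RE^3)$. The first summand lies in $\dot H^2$ since $\phi_0$ does and $|e^{-ik^2t}|=1$. For the second, $\|\phi(t)\|_{\dot H^2}$ reduces to controlling $\int_{\RE^3}\dk\,|\int_0^t\ds\,e^{-ik^2(t-s)}\dot q(s)|^2$; passing to radial coordinates and substituting $u=k^2$ turns this into $C\int_{\RE}|u|^{1/2}|\FF g(u)|^2\,\dd u$, where $g$ is the extension by zero of $\dot q|_{[0,t]}$. This quantity is exactly $C[g]_{\dot H^{1/4}(\RE)}^2$. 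Since $q\in H^{5/4}$ gives $\dot q\in H^{1/4}(0,T_0)$ and $1/4<1/2$, Lemma \ref{l:prolong}-\emph{i)} yields $g\in H^{1/4}(\RE)$, so the integral is finite. This is precisely where the sharp regularity furnished by Proposition \ref{l:wpq} enters.

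Next I would verify the nonlinear boundary condition $\phi(t)(\oo)=\gamma|q(t)|^{2\mu}q(t)$. Evaluating $\phi(t)(\oo)=(2\pi)^{-3/2}\int_{\RE^3}\dk\,\hat\phi(t,\kk)$ with the Fresnel identity \eqref{identities1} gives $\phi(t)(\oo)=(U(t)\phi_0)(\oo)-c\,I^{1/2}\dot q(t)$ with $c=(4\pi^{3/2}\sqrt i)^{-1}$. On the other hand, applying $D^{1/2}$ to \eqref{limit2} (recall $D^{1/2}I^{1/2}=\pi$), splitting $(U(t)\psi_0)(\oo)=(U(t)\phi_0)(\oo)+q_0(U(t)G)(\oo)$, and using $D^{1/2}q(t)=I^{1/2}\dot q(t)+q_0/\sqrt t$ together with $(U(t)G)(\oo)=(4\pi^{3/2}\sqrt{it})^{-1}$, I obtain $(U(t)\phi_0)(\oo)=c\,I^{1/2}\dot q(t)+\gamma|q(t)|^{2\mu}q(t)$. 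Substituting into the expression for $\phi(t)(\oo)$ cancels the $I^{1/2}\dot q$ contributions and leaves exactly the boundary condition, so $\psi(t)\in\DD$.

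Finally, for the regularity in time I would differentiate \eqref{solinhom} in Fourier: a direct computation gives $i\partial_t\hat\psi=k^2\hat\psi-(2\pi)^{-3/2}q(t)$, and since $k^2\hat G=(2\pi)^{-3/2}$ this equals $k^2\hat\phi=\widehat{-\Delta\phi}$, i.e.\ $i\partial_t\psi=-\Delta\phi=H_{\mu,\gamma}\psi$. As $\phi(t)\in\dot H^2$ we have $-\Delta\phi(t)\in L^2$, and continuity of $t\mapsto-\Delta\phi(t)$ in $L^2$ (from the explicit formula, dominated convergence and $\dot q\in L^2$) gives $\psi\in C([0,T_0],\DD)\cap C^1([0,T_0],L^2)$ solving \eqref{limiteq}. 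Uniqueness follows since any $\DD$-valued strong solution has a charge (the coefficient of $G$) solving \eqref{limit2}, which is unique by Proposition \ref{l:wpq}, and $\psi$ is then determined by \eqref{solinhom}.
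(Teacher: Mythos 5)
Your proposal is correct and follows essentially the same route as the paper: the same Duhamel/Fourier decomposition $\psi(t)=\phi(t)+q(t)G$ with the integrated-by-parts formula for $\hat\phi$, the same reduction of $\|\phi(t)\|_{\dot H^2}$ to the $\dot H^{1/4}(\RE)$ seminorm of the zero-extension of $\dot q$ via Lemma \ref{l:prolong}-\emph{i)}, and uniqueness through the charge equation. The one step you organize differently is the boundary condition: you apply $D^{1/2}$ to \eqref{limit2}, cancel the $q_0/\sqrt t$ contributions, and obtain the pointwise identity $(U(t)\phi_0)(\oo)=\frac{1}{4\pi\sqrt{\pi i}}I^{1/2}\dot q(t)+\gamma|q(t)|^{2\mu}q(t)$, whereas the paper goes the other way, showing that $I^{1/2}$ applied to the defect $\phi(\cdot,\oo)-\gamma|q|^{2\mu}q$ reproduces \eqref{limit2} and then invoking injectivity of $I^{1/2}$ on continuous functions. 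The two arguments are mirror images; yours is marginally more direct at the small price of justifying the term-by-term application of $D^{1/2}$, which is harmless here since $q\in H^{5/4}(0,T_0)$ and $|q|^{2\mu}q$, $(U(\cdot)\psi_0)(\oo)$ are continuous.

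The only place where your justification is too thin is the continuity of $t\mapsto-\Delta\phi(t)$ in $L^2(\RE^3)$. Dominated convergence together with $\dot q\in L^2$ does not suffice: the quantity to control is
\begin{equation*}
\int_0^\infty \left|\int_t^{t+\de}e^{-ik^2s}\dot q(s)\,\dd s\right|^2 k^2\,\dd k,
\end{equation*}
and the only $k$-uniform majorant available from $\dot q\in L^2$, namely $\de\|\dot q\|^2_{L^2(0,T_0)}$, is not integrable against $k^2\dd k$ at infinity, so there is no dominating function. One genuinely needs the oscillation in $k$, i.e.\ exactly your $\dot H^2$ computation run on the interval $[t,t+\de]$: after the change of variables the integral equals $\int_0^\infty|\FF(\chi_{[t,t+\de]}\dot q)(\ome)|^2|\ome|^{1/2}\dd\ome\leq C[\dot q]^2_{\dot H^{1/4}(t,t+\de)}$ by Lemma \ref{l:prolong}-\emph{i)}, which tends to zero by absolute continuity of the Gagliardo integral since $\dot q\in H^{1/4}(0,T_0)$. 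This is how the paper argues, and it is the intended repair of your step; with it the proof is complete.
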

%PROOF
\begin{proof}Let $[0,T_0]$ and $q\in H^{5/4}(0,T_0)$, be the time interval and the solution of Eq.\eqref{limit2} given by Proposition \ref{l:wpq}. %We shall prove first that 
Let $\psi(t)$ be represented in the form \eqref{solinhom} and define $\phi(t)$ by \[\psi(t) = \phi(t) + q(t)G.\]
We claim that $\phi(t)\in \dot H^2(\RE) $ for $\forall t\in[0,T_0]$. To prove this assertion we need a suitable representation of $\phi(t)$.
By writing \eqref{solinhom}  in Fourier space and integrating by parts, we have that
\begin{align}
\hat \phi(t,\kk) &= \hat  \psi(t,\kk) - \frac{q(t)}{(2\pi)^{3/2}k^2}  \nonumber \\
&= e^{-ik^2t} \hat\phi_0(\kk) + \frac{q_0 e^{-ik^2t}}{(2\pi)^{3/2}k^2}     - \frac{q(t)}{(2\pi)^{3/2}k^2} + \frac{i}{(2\pi)^{3/2}} \int_0^t   e^{-ik^2(t-s)} q(s)\dd s \nonumber\\
&= e^{-ik^2t} \hat\phi_0(\kk) - \frac{1}{(2\pi)^{3/2}k^2} \int_0^t   e^{-ik^2(t-s)} \dot q(s)\dd s \equiv\hat \phi_1(t,\kk) + \hat \phi_2(t,\kk) \label{hundred}
\end{align}
We have $\phi_1 (t) \in \dot H^2 (\RE)$ since the free Schr\"odinger evolution preserves all the homogeneous Sobolev spaces.
 We need to prove that $ \hat \phi_2 (t)\in L^1_{loc}(\RE^3)\cap L^2(\RE^3,|\kk|^{4}\dk)$. 
Local integrability follows from the pointwise estimate $|\phi_2 (t, \kk)|\leq c k^{-2} \|  \dot q \|_{L^2(0,T_0)}$. Moreover knowing that $ \dot q \in H^{1/4}(0,T_0)$; by Lemma \ref{l:prolong}-\emph{i)}, the function 
\[\widetilde{\dot{q}_t}(s) = \left\{ \begin{aligned}&\dot q(s)\qquad && s\in[0,t]\\ 
&0&& \text{otherwise} \end{aligned}\right.\]
is in $\dot H^{1/4}(\RE)$ for all $t\in[0,T_0]$.  Then we have 
\begin{align}
\|\hat \phi_2(t)\|^2_{ L^2(\RE^3,|\kk|^{4}\dk)} &
= 4\pi \int_0^{\infty}|\hat \phi_2(t,k)|^2 k^6 \dd k  = \frac1\pi \int_0^\infty \left|  \int_0^t e^{-i k^2 s} \dot q (s) \dd s   \right|^2 k^2 \dd k \nonumber \\
&= \frac1\pi \int_0^\infty \left|  \int_\RE e^{-i k^2 s}    \tilde{ \dot q}_t (s) \dd s                        \right|^2 k^2 \dd k  
= \frac1\pi \int_0^\infty \left|     \int_\RE e^{-i \ome s}    \tilde{ \dot q}_t (s) \dd s               \right|^2 |\omega|^{1/2} \dd \omega\nonumber\\
&\leq 2 \int_\RE \left|     \FF  \tilde{ \dot q}_t (\ome)            \right|^2 |\omega|^{1/2} \dd \omega 
= 2[ \tilde{ \dot q}_t ]^2_{\dot H^{1/4} (\RE)} \leq C  \|\dot q \|^2_{H^{1/4} (0,T_0)} \label{prison}
\end{align}
which concludes the proof that $\phi(t)\in \dot H^2(\RE)$.  

To prove that $\psi(t)\in\DD$, it remains to show that the ``boundary condition'' $\phi(t, \oo) - \gamma |q(t)|^{2\mu}q(t)=0$ is satisfied. 
Integrating \eqref{hundred} and using the Fresnel integral \eqref{identities1}, we have 
\begin{equation} \label{vite}
\phi(t, \oo)= \f{1}{(2\pi)^{3/2} } \int _{\RE^3}\hat\phi (t, \kk) \, \dd \kk
=( U(t)\phi_0) (\oo) - \frac{1}{4\pi \sqrt{\pi i }} \int_0^t \frac{\dot q (s)}{\sqrt{t-s} }\dd s 
\end{equation}
Moreover, since $I^{1/2} f = 0 $ implies $f=0$ for continuous $f$, it is sufficient to prove that
\begin{equation} \label{cerniera}
\int_0^t \frac{\dd s}{\sqrt{t-s} } \lf( \phi(s, \oo) - \gamma |q(s)|^{2\mu}q(s) \ri)=0\ .
\end{equation}
Substituting \eqref{vite}, exchanging integrals and using \eqref{identities2}, we see that \eqref{cerniera} is equivalent to \eqref{limit2} and therefore
$\psi(t) \in \DD$.

Notice that the Sobolev regularity 5/4 of $q$ obtained in Proposition \ref{l:wpq} is optimal: a weaker result would not suffice
to prove the invariance of $\DD$ under the dynamics.

Now we prove that the map $t \leadsto\psi(t)$ belongs to $C([0,T_0], \DD)$. Since $q(t) $ is a continuous function, it is enough to prove that 
$t \leadsto\phi(t)$ is a continuous $\dot H^2$-valued function. This holds true for $t\leadsto\phi_1(t)$ due to the properties of the free Schr\"odinger group.
By \eqref{prison} it is sufficient to prove that
\[
\hat \varphi (t, \kk) = \int_0^t e^{-i k^2 s} \dot q (s) \dd s
\] 
is an $L^2(\RE^3)$-valued continuous function, that is
\[
\lim_{\de\to 0}\|\hat \varphi(t+\de)-\hat \varphi(t)\|=0.
\]
This amounts to
\[%\begin{equation}
\lim_{\de\to 0} \int_0^\infty \lf| \int_t^{t+\de} e^{-i k^2 s} \dot q (s) \dd s\ri|^2 k^2 \, \dd k=0.
\]%\end{equation}
The proof is almost contained in the previous arguments but we write it explicitly for the sake of completeness.
As in \eqref{prison}, we have
\begin{align*}
\int_0^\infty \lf| \int_t^{t+\de} e^{-i k^2 s} \dot q (s) \dd s\ri|^2 k^2 \, \dd k 
&= \int_0^\infty \lf| \int_\RE e^{-i k^2 s} \chi_{[t,t+\de]}(s)\, \dot q (s) \dd s\ri|^2 k^2 \, \dd k\nonumber  \\
&= \int_0^\infty \lf| \FF( \chi_{[t,t+\de]}\, \dot q )(\ome) \ri|^2 |\ome|^{1/2} \, \dd\ome  \nonumber \\
&\leq \int_\RE \lf| \FF( \chi_{[t,t+\de]}\, \dot q )(\ome) \ri|^2 |\ome|^{1/2} \, \dd\ome \nonumber  \\
&= [ \chi_{[t,t+\de]}\, \dot q ]_{\dot H^{1/4}(\RE) }^2 \leq C [ \dot q ]_{\dot H^{1/4}(t,t+\de) }^2 \ .
\end{align*}
where in the last step, we have used Lemma \ref{l:prolong}-\emph{ i)}.
Since 
\[
\lim_{\de\to 0} [ \dot q ]_{H^{1/4}(t,t+\de) }^2  = \lim_{\de\to 0}\int_t^{t+\de} \frac{ |\dot q (s) -\dot q (s')|^2 }{|s-s'|^{3/2} }\dd s \, \dd s' =0
\]
by the absolute continuity of the integral and $\dot q \in H^{1/4}(0,T_0)$, continuity of $\phi(t)$ is proved.

To prove that the map  $t\leadsto\psi(t)$ belongs to $C^1([0,T_0], L^2(\RE^3))$, 
we note that by taking  the derivative of the Fourier transform of Eq. \eqref{solinhom}, see also \eqref{hundred}, one obtains
\[
\pd{}{t} \widehat{\psi}(t,\kk) = -ik^2 e^{-ik^2t} \hat\phi_0(\kk) + \frac{i}{(2\pi)^{3/2}}\hat \varphi(t,\kk). 
\]
Hence the statement follows by the same argument used above.
\end{proof}
The limit problem admits two conserved quantities: mass and energy. The former is given by the $L^2(\RE^3)$ norm while
the latter is defined in the following way:
\begin{equation} \label{energyb}
E[\psi] = \| \nabla \phi \| ^2+ \frac{\ga}{\mu+1} |q|^{2\mu+2}
\end{equation}
The energy space $\EE$ is defined as follows: 
\[
\EE:= \left\{ \psi\in L^2 (\RE^3)| \, \psi(\xx)= \phi(\xx) +\f{q}{4\pi|\xx|} ;\, \phi \in \dot H^1 (\RE^3), \, q\in \CO\right\},
\]
Comparing  $\EE$ with $\DD$ we
remark the different  smoothness of the regular part $\phi$ and the absence of the boundary condition. In \cite{at2} the well-posedness in $\EE$ was proved.

\begin{remark}\label{r:EEDD} Notice that, even if homogeneous Sobolev spaces do not form a scale, $\psi\in \DD $ has finite energy due to the $L^2$-constraint on $\psi$.
Indeed let $\psi=\phi+q G\in \DD$; in order to prove that $E[\psi]<\infty$ it is sufficient to consider bounded $k$ and to prove that 
\[
\int_{k<1} | k\hat \phi(\kk)|^2 \dd \kk <+\infty\ .
\]
Since $k \hat \phi(\kk) = k\hat \psi(\kk)- \frac{q}{(2\pi)^{3/2} k}$ then
\[
\int_{k<1} | k\hat \phi(\kk)|^2 \dd \kk \leq c  \int_{k<1} | k\hat \psi(\kk)|^2 \dd \kk + C |q|^2 \int_{k<1} \frac{1}{k^2} \dd \kk\leq C \|\psi\|^2 +|q|^2\ .
\]
\end{remark}
\vskip10pt

\begin{proposition}\label{p:conservation}
Let $\psi(t)$ be the local solution constructed in Proposition \ref{l:locwp}; then we have
\[
\|\psi(t)\|=\| \psi_0\|\ , \qquad \qquad E[\psi(t)]=E[\psi_0] \quad \quad \forall t\in [0,T_0]\ .
\]
\end{proposition}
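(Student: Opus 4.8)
The plan is to differentiate the two quantities in time and to show, using only the nonlinear boundary condition $\phi(t,\oo)=\ga|q(t)|^{2\mu}q(t)$ built into $\DD$, that the singular ``boundary terms'' which appear are either real (for the mass) or cancel exactly (for the energy). All the manipulations are most transparent in Fourier space, where by \eqref{hundred} one has the splitting $\hat\psi(t,\kk)=\hat\phi(t,\kk)+q(t)/((2\pi)^{3/2}k^2)$ together with the identity $\partial_t\hat\psi=-ik^2\hat\phi$ (the Fourier transform of $\partial_t\psi=i\Delta\phi$, i.e.\ of $H_{\mu,\ga}\psi=-\Delta\phi$).

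For the mass, since by Proposition \ref{l:locwp} the map $t\leadsto\psi(t)$ lies in $C^1([0,T_0],L^2(\RE^3))$, I may write $\f{d}{dt}\|\psi(t)\|^2=2\Re(\dot\psi(t),\psi(t))$ as a genuine $L^2$ inner product. Passing to Fourier and inserting $\partial_t\hat\psi=-ik^2\hat\phi$ gives
\[
\tfrac12\f{d}{dt}\|\psi\|^2=\Re\Big(i\int_{\RE^3}\dk\,k^2|\hat\phi|^2+i\,\f{q}{(2\pi)^{3/2}}\int_{\RE^3}\dk\,\overline{\hat\phi}\Big).
\]
The first integral is $\|\nabla\phi\|^2$, which is finite by Remark \ref{r:EEDD} and real, so $\Re(i\,\cdot\,\text{real})=0$ kills it. For the second, the integrability $\hat\phi\in L^1(\RE^3)$ (which follows from $\phi\in\dot H^2$ together with the finite energy of Remark \ref{r:EEDD}, via Cauchy--Schwarz near and away from $\kk=\oo$) yields $(2\pi)^{-3/2}\int\hat\phi\,\dk=\phi(\oo)$, so the remaining term is $\Re\big(iq\,\overline{\phi(\oo)}\big)$. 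The boundary condition now gives $q\,\overline{\phi(\oo)}=\ga|q|^{2\mu}q\bar q=\ga|q|^{2\mu+2}\in\RE$, whence this term also vanishes and the $L^2$ norm is conserved.

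For the energy I differentiate the two summands of $E[\psi]=\|\nabla\phi\|^2+\tfrac{\ga}{\mu+1}|q|^{2\mu+2}$ separately. Writing $\|\nabla\phi\|^2=\int k^2|\hat\phi|^2\,\dk$ and using $\partial_t\hat\phi=-ik^2\hat\phi-\dot q/((2\pi)^{3/2}k^2)$ (obtained by differentiating \eqref{hundred}), differentiation under the integral sign gives
\[
\f{d}{dt}\|\nabla\phi\|^2=2\Re\Big(-i\int_{\RE^3}\dk\,k^4|\hat\phi|^2-\f{\dot q}{(2\pi)^{3/2}}\int_{\RE^3}\dk\,\overline{\hat\phi}\Big)=-2\Re\big(\dot q\,\overline{\phi(\oo)}\big),
\]
since $\int k^4|\hat\phi|^2=\|\Delta\phi\|^2<\infty$ is real. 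On the other hand, as $q\in H^{5/4}(0,T_0)\subset H^1(0,T_0)$, the function $|q|^{2\mu+2}$ is absolutely continuous with $\f{d}{dt}|q|^{2\mu+2}=2(\mu+1)|q|^{2\mu}\Re(\bar q\dot q)$ for a.e.\ $t$, so that $\tfrac{\ga}{\mu+1}\f{d}{dt}|q|^{2\mu+2}=2\ga\,\Re(|q|^{2\mu}\bar q\dot q)$. Substituting $\overline{\phi(\oo)}=\ga|q|^{2\mu}\bar q$ into the first expression shows that the two contributions are exactly opposite, so $\f{d}{dt}E[\psi(t)]=0$ for a.e.\ $t$; since $E[\psi(t)]$ is absolutely continuous in $t$, this yields $E[\psi(t)]=E[\psi_0]$.

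I expect the main obstacle to be the rigorous justification of these Fourier-space operations rather than the algebra: one must verify $\hat\phi\in L^1$ (so that $(2\pi)^{-3/2}\int\hat\phi=\phi(\oo)$), the finiteness of $\int k^2|\hat\phi|^2$ and $\int k^4|\hat\phi|^2$, and the admissibility of differentiating $\int k^2|\hat\phi|^2$ under the integral sign, all of which rest on $\phi(t)\in\dot H^2$, the representation \eqref{hundred}, and Remark \ref{r:EEDD}. A second delicate point is that $\dot q$ is only of class $H^{1/4}\subset L^2$ in time, so both the chain rule for $|q|^{2\mu+2}$ and the final integration of $\f{d}{dt}E=0$ must be read in the a.e./absolutely-continuous sense furnished by $q\in H^{5/4}(0,T_0)$. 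Once these technical points are settled, the entire conservation mechanism reduces to the single algebraic fact that the nonlinear boundary condition converts the singular boundary term $q\,\overline{\phi(\oo)}$ into the real quantity $\ga|q|^{2\mu+2}$.
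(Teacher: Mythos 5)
Your argument is correct in substance, and it is worth noting that the paper does not actually supply a proof here: it simply defers to \cite{at2}, where the conservation laws are established for the weak (energy-space) formulation. Your direct computation is therefore a legitimate, self-contained alternative, and it exploits precisely the extra regularity that Propositions \ref{l:wpq} and \ref{l:locwp} provide for strong solutions ($\phi(t)\in\dot H^2$, $q\in H^{5/4}(0,T_0)$, $\psi\in C^1([0,T_0],L^2)$). The algebra checks out: $\partial_t\hat\psi=-ik^2\hat\phi$ is exactly the equation, $\hat\phi\in L^1(\RE^3)$ follows from $\int k^4|\hat\phi|^2<\infty$ at high frequency and from Remark \ref{r:EEDD} at low frequency, and the boundary condition turns both boundary terms $q\,\overline{\phi(\oo)}$ and $\dot q\,\overline{\phi(\oo)}$ into the quantities that make mass and energy conservation work. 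Two of the technical points you flag are genuinely routine (the $L^1$ bound on $\hat\phi$, and the chain rule for $|q|^{2\mu+2}$, since $q\in H^{5/4}\subset H^1$ is absolutely continuous and $x\mapsto x^{\mu+1}$ is $C^1$ on $[0,\infty)$). The one place where "differentiation under the integral sign" hides a real issue is $\f{d}{dt}\|\nabla\phi(t)\|^2$: since $\partial_t\hat\phi=-ik^2\hat\phi-\dot q/((2\pi)^{3/2}k^2)$ and you only know $\phi(t)\in\dot H^2$, the map $t\leadsto\nabla\phi(t)$ is \emph{not} $C^1$ with values in $L^2$, so you cannot write $2\Re(\nabla\phi,\nabla\dot\phi)$ as an inner product of two $L^2$ functions, and the a.e.-derivative-plus-integrability route does not by itself give absolute continuity. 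The standard repair, compatible with your setup, is to compute the increment directly,
\[
\|\nabla\phi(t+h)\|^2-\|\nabla\phi(t)\|^2=\Re\int_{\RE^3}\dk\, k^2\big(\hat\phi(t+h)-\hat\phi(t)\big)\overline{\big(\hat\phi(t+h)+\hat\phi(t)\big)},
\]
insert $\hat\phi(t+h)-\hat\phi(t)=\int_t^{t+h}\partial_s\hat\phi\,\dd s$, apply Fubini (each resulting integrand is controlled by $k^4|\hat\phi|^2$ or by $|\dot q|\,|\hat\phi|\in L^1$), and pass to the limit using the $C([0,T_0],\dot H^2)$ continuity of $\phi$; this yields $\|\nabla\phi(t)\|^2$ as the integral of the density you computed, and hence the exact cancellation against $\tfrac{\ga}{\mu+1}\f{d}{dt}|q|^{2\mu+2}$. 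With that step made explicit, your proof is complete.
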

The proof is as in \cite{at2}.

\vspace{.5cm}

We are now ready to prove the global well-posedness in the domain $\DD$. 
%THEOEM
\begin{theorem}\label{t:wp} 
Let  $\psi_0\in\DD,$  $\gamma\geq 0$ and $\mu\geq 0$ or $\gamma<0$ and $0\leq\mu<1$; then the solution of \eqref{solinhom} - \eqref{limit2} is global in time in $\DD$,  i.e., for any $T>0$ there exists a unique solution $\psi(t)\in \DD$ for all $t\in[0,T]$.  Moreover the map $t\leadsto \psi(t)$  belongs to $C([0,T],\DD)\cap C^1([0,T],L^2(\RE^3))$. 
\end{theorem}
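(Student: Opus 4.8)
The plan is to upgrade the local solution of Proposition~\ref{l:locwp} to a global one by a continuation argument fed with the conservation laws of Proposition~\ref{p:conservation}. Recall from the analysis following Proposition~\ref{l:wpq} that the maximal solution is defined on $[0,T^*)$ and obeys the dichotomy $T^*=+\infty$ or $\limsup_{t\to T^*}|q(t)|=+\infty$. Hence it suffices to show that, on every finite time interval, $|q(t)|$ is bounded by a constant depending only on the conserved mass $\|\psi_0\|$ and energy $E[\psi_0]$; the blow-up alternative is then excluded, $T^*=+\infty$, and uniqueness together with the regularity $C([0,T],\DD)\cap C^1([0,T],L^2(\RE^3))$ are inherited from the local statement.

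I would first dispose of the case $\ga\geq0$. When $\ga>0$ the two terms of the conserved energy $E[\psi]=\|\nabla\phi\|^2+\frac{\ga}{\mu+1}|q|^{2\mu+2}$ in \eqref{energyb} are both nonnegative, so discarding $\|\nabla\phi\|^2\geq0$ gives immediately $|q(t)|^{2\mu+2}\leq\frac{\mu+1}{\ga}E[\psi_0]$. When $\ga=0$ the nonlinear term disappears from the charge equation \eqref{limit2}, which becomes linear: by \eqref{anta} one has $q(t)=4\sqrt{\pi i}\,I^{1/2}\!\big((U(\cdot)\phi_0)(\oo)\big)(t)+q_0$, and this is continuous, hence bounded on $[0,T]$, by Proposition~\ref{p:harvey} and the mapping properties of $I^{1/2}$ recalled in Remark~\ref{r:2.4}. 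In both sub-cases $|q|$ is uniformly bounded and the solution is global.

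The genuinely delicate case is $\ga<0$ with $0\leq\mu<1$, where the energy is sign-indefinite and its conservation does not by itself control $|q|$. Here I would rewrite energy conservation as $\|\nabla\phi(t)\|^2=E[\psi_0]+\frac{|\ga|}{\mu+1}|q(t)|^{2\mu+2}$ and close the estimate by means of a Gagliardo--Nirenberg-type inequality for the elements of $\DD$ of the scale-covariant form $|q|^2\leq C\,\|\nabla\phi\|\,\|\psi\|$, up to a lower-order term controlled by the mass $\|\psi\|^2$. Its homogeneity is forced by the $L^2$-preserving dilation $\psi\mapsto\lambda^{3/2}\psi(\lambda\,\cdot)$, under which $\|\nabla\phi\|\mapsto\lambda\|\nabla\phi\|$ and $|q|\mapsto\lambda^{1/2}|q|$ while $\|\psi\|$ is invariant. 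Substituting this inequality yields $\|\nabla\phi\|^2\leq E[\psi_0]+C\,\|\nabla\phi\|^{\mu+1}\|\psi_0\|^{\mu+1}+C\|\psi_0\|^{2\mu+2}$, and since $\mu+1<2$ exactly when $\mu<1$, the nonlinear term $\|\nabla\phi\|^{\mu+1}$ is subordinate to $\|\nabla\phi\|^2$; Young's inequality absorbs it into the left-hand side and produces a uniform bound on $\sup_t\|\nabla\phi(t)\|$, and then, through the same inequality, on $\sup_t|q(t)|$.

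I expect the main obstacle to be the Gagliardo--Nirenberg inequality $|q|^2\leq C\,\|\nabla\phi\|\,\|\psi\|$ itself. Unlike in the usual NLS, the charge $q$ is not a Lebesgue norm of $\psi$ but the coefficient of the $|\xx|^{-1}$ singularity of $\psi=\phi+qG$, so the estimate has to isolate $q$ from the interplay between the $L^2$ constraint on $\psi$ and the $\dot H^2$ regularity of $\phi$; the computation in Remark~\ref{r:EEDD} is a simpler instance of the same mechanism. I would carry it out in Fourier variables, where $\hat\psi(\kk)=\hat\phi(\kk)+\frac{q}{(2\pi)^{3/2}|\kk|^2}$ exhibits $q$ as the coefficient of the $|\kk|^{-2}$ tail, and interpolate between the $L^2$ norm $\|\psi\|$ and the norm $\|\nabla\phi\|=\|\,|\kk|\hat\phi\,\|_{L^2}$ by testing against a suitable weight. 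The restriction to the subcritical range $\mu<1$ is precisely what makes the resulting algebraic inequality close and, consistently, is the hypothesis under which the theorem is stated.
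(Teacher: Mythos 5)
Your proposal is correct and follows essentially the same route as the paper: blow-up alternative for the Volterra charge equation, energy conservation, and, in the attractive case, the interpolation inequality $|q|^2\leq C\|\psi\|\,\|\nabla\phi\|$ combined with the subcriticality $\mu+1<2$ to absorb the nonlinear term. The only difference is that the paper simply cites \cite{at2} for that inequality (and disposes of $\gamma\geq 0$ in one line), whereas you sketch its proof and treat $\gamma=0$ separately via the linear charge equation; both fillings are sound.
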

\begin{proof}
The strategy of the proof is standard.
The maximal time of existence $T^*$ of equation \eqref{limit2} is infinite since the smallness of $T_0$ does not play any role in the proof of Proposition
\ref{l:locwp} and, due to the blow up alternative, $q(t)$ stays bounded uniformly in time.
For repulsive nonlinearities this last claim is obvious due energy conservation. For attractive nonlinearities we take into account the following estimate (see \cite{at2})
\[%\begin{equation} \label{key}
 |q(t)|^2\leq C  \|\psi(t)\| \|\nabla \phi(t)\|\leq C   \|\nabla \phi(t)\|;
 \]% \end{equation} 
uniform boundedness of $q$ follows from conservation of energy under the hypothesis $0\leq\mu<1$ by a straightforward argument.
\end{proof}

\begin{remark}\label{r:3.6}
The solution of  \eqref{solinhom} - \eqref{limit2}  in  Th. \ref{t:wp} is also a solution of problem  \eqref{limiteq}.
\end{remark}

%%%%%%
%SECTION
%%%%%%
\section{The scaled problem}\label{s:appprob}

Here we prove the global well-posedness of the approximating problem, the convergence of the initial datum $\psi_0^\ve$ and some a priori estimates derived from energy conservation.\\
The integral form of the scaled problem \eqref{appeq} - \eqref{appid} is 
\begin{equation}\label{weakap3d}
\psi^\ve(t,\xx ) = (U(t)\psi^\ve_0)(\xx )  - i \frac{\ve}{\ell} \int_0^t ds (U(t-s)  \rho^\ve)(\xx ) \left(-1 + \gamma \frac{\ve^{1+2\mu} |(\rho^\ve,\psi^\ve(s))|^{2\mu}}{\ell^{2\mu+1}}\right)(\rho^\ve,\psi^\ve(s)). 
 \end{equation}
 The following Theorem gives local well-posedness for strong solutions of problem \eqref{weakap3d} for any fixed $\ve$.
 
 \begin{theorem}\label{t:app-wp}
Fix  $\ve>0$ and  let   $\psi^{\ve}_0\in H^2(\RE^3)$. Then  there exists $T^{*}>0$ such that Eq. \eqref{weakap3d} has a unique solution $\psi^\ve \in C((0,T^{*}), H^2(\RE^3))\cap C^{1}((0,T^{*}), L^2(\RE^3)).$  \\
On the existence interval $[0,T^{*})$, mass $M[\psi]$ and energy $E^\ve[\psi]$ given by
\[%\begin{equation}\label{me}
M[\psi]= \| \psi\|^2 \qquad \text{and} \qquad E^\ve[\psi ] =   \| \nabla \psi\|^2 - \frac{\ve}{\ell} |(\rho^\ve, \psi )|^2 +\frac{ \ga}{\mu+1}\frac{ \ve^{2+2\mu}}{\ell^{2+2\mu}} |(\rho^\ve,\psi )|^{2+2\mu}
\]%\end{equation}
are finite and conserved along the flow.
\end{theorem}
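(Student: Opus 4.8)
The plan is to solve the integral equation \eqref{weakap3d} for fixed $\ve>0$ by a contraction argument in $X_T:=C([0,T],H^2(\RE^3))$, endowed with the norm $\|\psi\|_{X_T}=\sup_{t\in[0,T]}\|\psi(t)\|_{H^2}$. The structural feature that makes the problem elementary in $H^2$ is that the nonlinearity is nonlocal and strongly smoothing: it depends on $\psi^\ve$ only through the scalar $(\rho^\ve,\psi^\ve(s))\in\CO$, and the corresponding source is the fixed Schwartz function $\rho^\ve$ multiplied by that scalar. Accordingly I set
\[
\Phi(\psi)(t)=U(t)\psi^\ve_0-i\frac{\ve}{\ell}\int_0^t\ds\,U(t-s)\rho^\ve\,N_\psi(s),\qquad N_\psi(s):=\left(-1+\gamma\frac{\ve^{1+2\mu}|(\rho^\ve,\psi(s))|^{2\mu}}{\ell^{2\mu+1}}\right)(\rho^\ve,\psi(s)),
\]
and look for a fixed point. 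Since $\rho^\ve\in\mathcal S(\RE^3)\subset H^2$ and $U(t)$ is unitary on every $H^s$, the free term $U(t)\psi^\ve_0$ lies in $C([0,T],H^2)$ and $\|U(t-s)\rho^\ve\|_{H^2}=\|\rho^\ve\|_{H^2}$; strong continuity of the group together with dominated convergence then show that $\Phi$ maps $X_T$ into itself.

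To run the contraction I use that $\psi\mapsto(\rho^\ve,\psi)$ is a bounded linear functional, $|(\rho^\ve,\psi)|\le\|\rho^\ve\|\,\|\psi\|_{H^2}$, and that the scalar map $z\mapsto(-1+c|z|^{2\mu})z$ (with $c=\gamma\ve^{1+2\mu}/\ell^{2\mu+1}$) is locally Lipschitz on $\CO$ for every $\mu\ge0$, with Lipschitz constant $\le C(1+R^{2\mu})$ on $\{|z|\le R\}$ — this is precisely the pointwise bound \eqref{tame}. Hence on the ball $\{\|\psi\|_{X_T}\le R\}$ one gets $\sup_s|N_{\psi_1}(s)-N_{\psi_2}(s)|\le L(R)\|\psi_1-\psi_2\|_{X_T}$ and therefore $\|\Phi(\psi_1)-\Phi(\psi_2)\|_{X_T}\le\frac{\ve}{\ell}\|\rho^\ve\|_{H^2}\,T\,L(R)\,\|\psi_1-\psi_2\|_{X_T}$. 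Choosing $R\sim2\|\psi^\ve_0\|_{H^2}$ and $T$ so small that this prefactor is $<1$ (and that $\Phi$ preserves the ball) gives, by the Banach fixed point theorem, a unique local solution in $X_T$. The usual continuation yields a maximal existence time $T^*$ with the blow-up alternative $\limsup_{t\to T^*}\|\psi^\ve(t)\|_{H^2}=+\infty$ when $T^*<\infty$. The regularity $\psi^\ve\in C^1((0,T^*),L^2)$ follows by differentiating the Duhamel term, $\pd{}{t}\int_0^t U(t-s)\rho^\ve N_{\psi^\ve}(s)\ds=\rho^\ve N_{\psi^\ve}(t)+i\Delta\int_0^t U(t-s)\rho^\ve N_{\psi^\ve}(s)\ds$, which is $L^2$-continuous because $\Delta\rho^\ve\in L^2$; equivalently $\psi^\ve$ solves \eqref{appeq} in $L^2$ with $\pd{}{t}\psi^\ve=i\Delta\psi^\ve-i\,g\,(\rho^\ve,\psi^\ve)\rho^\ve\in C((0,T^*),L^2)$, where $g$ denotes the real scalar coupling in \eqref{appeq}.

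For the conservation laws I use that \eqref{appeq} is the Hamiltonian flow $i\pd{}{t}\psi^\ve=\delta E^\ve/\delta\bar\psi$ generated by the functional $E^\ve$ in the statement: the variational derivative of $\|\nabla\psi\|^2$ is $-\Delta\psi$, and the derivative of the two potential terms reproduces exactly the nonlinearity $g\,(\rho^\ve,\psi^\ve)\rho^\ve$ of \eqref{appeq}. Both functionals are $C^1$ in $t$ on $(0,T^*)$ since $\psi^\ve\in C(H^2)\cap C^1(L^2)$. For the mass, $\frac{d}{dt}\|\psi^\ve\|^2=2\Re(\psi^\ve,\pd{}{t}\psi^\ve)$; substituting the equation gives $2\Re[\,i(\psi^\ve,\Delta\psi^\ve)\,]-2\Re[\,i\,g\,|(\rho^\ve,\psi^\ve)|^2\,]$, and both terms vanish because $(\psi^\ve,\Delta\psi^\ve)$ and $|(\rho^\ve,\psi^\ve)|^2$ are real while $\Re(i\cdot\text{real})=0$. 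For the energy I write the kinetic part as $(-\Delta\psi^\ve,\psi^\ve)$ and differentiate it, and differentiate the potential part directly through the scalar $(\rho^\ve,\psi^\ve(t))$; collecting terms yields $\frac{d}{dt}E^\ve[\psi^\ve]=2\Re(\delta E^\ve/\delta\bar\psi,\pd{}{t}\psi^\ve)=2\Re(i\pd{}{t}\psi^\ve,\pd{}{t}\psi^\ve)=0$.

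The only genuinely delicate point is the differentiation of the kinetic energy $(-\Delta\psi^\ve,\psi^\ve)$, since I control $\pd{}{t}\psi^\ve$ only in $L^2$ and not in $H^2$. Here I invoke the symmetry of $-\Delta$ together with $\psi^\ve(t)\in H^2=D(-\Delta)$: the term $(-\Delta\pd{}{t}\psi^\ve,\psi^\ve)$, a priori merely a distributional pairing, equals $(\pd{}{t}\psi^\ve,-\Delta\psi^\ve)$ precisely because $\psi^\ve\in D(-\Delta)$, so that $\frac{d}{dt}(-\Delta\psi^\ve,\psi^\ve)=2\Re(-\Delta\psi^\ve,\pd{}{t}\psi^\ve)$ is legitimate. (Alternatively one proves conservation first for $H^4$ data, where $\pd{}{t}\psi^\ve\in H^2$ makes every step classical, and then passes to general $H^2$ data by the continuity of the flow established above.) Everything else — the fixed point, the blow-up alternative, and the scalar differentiations — is routine once one exploits the smoothing structure of the nonlocal nonlinearity; this is also why the local theory holds for all $\mu\ge0$, without the restrictions needed for the limit problem.
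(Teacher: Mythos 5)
Your proposal is correct and takes essentially the same route as the paper: a Banach fixed point argument for the Duhamel map on a closed ball in $C([0,T],H^2(\RE^3))$, differentiation of the integral equation to get $C^1((0,T^*),L^2(\RE^3))$, and a direct computation (exploiting the Hamiltonian structure and the fact that the nonlinearity acts only through the scalar $(\rho^\ve,\psi)$) for the conservation of mass and energy. The paper states these steps as ``standard''; you merely supply the details, including the one genuinely delicate point (justifying $\frac{d}{dt}\|\nabla\psi^\ve\|^2=2\Re(-\Delta\psi^\ve,\partial_t\psi^\ve)$ with $\partial_t\psi^\ve$ controlled only in $L^2$), which you resolve correctly via the symmetry of $-\Delta$ on $H^2$ or by regularizing the data.
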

\begin{proof}
The proof follows from a standard application of Banach fixed point theorem to the map $K:\overline{B_X(\psi_0^\ve, M)}\rightarrow \overline{B_X(\psi_0^\ve, M)}$, where $X=C((0,T^{*}), H^2(\RE^3))$, $\overline{B_X(\psi_0^\ve, M)}$ is the closed ball in $X$ of radius $M$ and center $\psi_0^\ve$ and

$$
K\psi^\ve(\xx)=(U(t)\psi^\ve_0)(\xx )  - i \frac{\ve}{\ell} \int_0^t ds (U(t-s)  \rho^\ve)(\xx ) \left(-1 + \gamma \frac{\ve^{1+2\mu} |(\rho^\ve,\psi^\ve(s))|^{2\mu}}{\ell^{2\mu+1}}\right)(\rho^\ve,\psi^\ve(s)).$$
\par\noindent
A direct bootstrap argument shows then that $\psi^\ve\in C^{1}((0,T^{*}), L^2(\RE^3))$.\\
Conservation of mass and energy are a consequence of standard computations as well, taking in account the regularity of the solution $\psi^\ve$.

\end{proof}
\begin{remark}
Due to the regularity shown in the previous result, the strong solution of the integral equation \eqref{weakap3d} is also a solution of the differential equation \eqref{appeq}.
\end{remark}
In the following theorem we state the global well-posedness of the scaled problem for initial data in $H^2(\RE^3)$. 
%THEOREM
\begin{theorem}\label{t:wpscaled}
Fix $\ve>0$, and $\psi^{\ve}_0\in H^2(\RE^3)$. Then the unique solution $\psi^\ve(t) $ of Eq. \eqref{weakap3d}  is global in time, i.e., for any $T>0$ there exists a unique solution $\psi^\ve \in C((0,T), H^2(\RE^3))\cap C^{1}((0,T), L^2(\RE^3)). $  
\end{theorem}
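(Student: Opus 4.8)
The plan is to upgrade the local solution furnished by Theorem \ref{t:app-wp} to a global one, for $\ve$ fixed, by establishing an a priori bound on $\|\psi^\ve(t)\|_{H^2}$ on every bounded time interval and then invoking the standard blow-up alternative attached to the fixed-point construction: if the maximal time of existence $T^*$ were finite, then necessarily $\|\psi^\ve(t)\|_{H^2}\to\infty$ as $t\uparrow T^*$. The crucial observation is that, for $\ve$ fixed, the nonlinearity is of finite-rank (nonlocal) type, being the fixed Schwartz function $\rho^\ve$ multiplied by a scalar coefficient that depends on $\psi^\ve$ only through the single number $(\rho^\ve,\psi^\ve(t))$. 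Consequently the whole nonlinear forcing can be controlled by the conserved mass alone.

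First I would exploit conservation of mass. By Theorem \ref{t:app-wp} one has $\|\psi^\ve(t)\|=\|\psi_0^\ve\|$ on $[0,T^*)$, so by Cauchy--Schwarz
\[
|(\rho^\ve,\psi^\ve(t))|\leq \|\rho^\ve\|\,\|\psi^\ve(t)\|=\|\rho^\ve\|\,\|\psi_0^\ve\| =: C_0,
\]
a bound uniform in $t$ (depending on $\ve$ and on the datum, which is harmless since $\ve$ is fixed). Hence the scalar coefficient
\[
\left(-1+\ga\,\f{\ve^{1+2\mu}|(\rho^\ve,\psi^\ve(t))|^{2\mu}}{\ell^{2\mu+1}}\right)(\rho^\ve,\psi^\ve(t))
\]
is bounded, uniformly in $t\in[0,T^*)$, by a constant $C_1=C_1(\ve,\mu,\ga,\|\psi_0^\ve\|)$.

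Next I would insert this bound into the Duhamel formula \eqref{weakap3d} and estimate in $H^2$. Since the free propagator $U(t-s)$ is unitary on $\dot H^\nu(\RE^3)$ for every $\nu\geq 0$, it preserves the $H^2$ norm, so $\|U(t-s)\rho^\ve\|_{H^2}=\|\rho^\ve\|_{H^2}$, which is finite because $\rho^\ve\in\mathcal S(\RE^3)$. Taking $H^2$ norms in \eqref{weakap3d} therefore yields
\[
\|\psi^\ve(t)\|_{H^2}\leq \|\psi_0^\ve\|_{H^2}+\f{\ve}{\ell}\,C_1\,\|\rho^\ve\|_{H^2}\,t,
\]
an at most linear growth. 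In particular $\sup_{t\in[0,T]}\|\psi^\ve(t)\|_{H^2}<\infty$ for every finite $T$. This contradicts the blow-up alternative whenever $T^*<\infty$; hence $T^*=+\infty$ and the solution is global, with the regularity $C((0,T),H^2(\RE^3))\cap C^1((0,T),L^2(\RE^3))$ inherited from the local theorem on each bounded interval.

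The point to get right here is not so much an obstacle as a structural remark: for $\ve$ fixed there is no focusing/defocusing dichotomy and no energy argument is needed, because $(\rho^\ve,\psi^\ve(t))$ is a scalar controlled purely by the conserved $L^2$ norm, so the nonlinear forcing is a priori bounded in $H^2$. This is precisely why the statement holds for all $\ga\in\RE$ and all $\mu\geq0$, with no smallness or sign condition. Conservation of the energy $E^\ve[\psi]$ would equally provide a uniform $H^1$ bound, but it is not required for this fixed-$\ve$ result; it becomes essential only for the $\ve$-\emph{uniform} estimates of the later sections, where the constants must be tracked as $\ve\to0$.
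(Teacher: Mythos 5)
Your proposal is correct and follows essentially the same route as the paper: conservation of the $L^2$ norm controls the scalar coefficient $(\rho^\ve,\psi^\ve(t))$, the Duhamel formula \eqref{weakap3d} then gives an at most linear growth of $\|\psi^\ve(t)\|_{H^2}$ on the existence interval, and the continuation (blow-up alternative) argument yields global existence. You merely make explicit the ``easy estimates'' that the paper leaves implicit, and your closing remark that no energy or sign condition is needed at fixed $\ve$ matches the paper's own observation that the scaled problem is well posed for a wider range of parameters than the limit problem.
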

%PROOF
\begin{proof}
Let us call $T^*$  the existence time obtained in the local existence theorem. Using easy estimates and conservation of $L^2$ norm of the local solution one shows that for every $t\in (0,T^*)$ the following inequality holds true
$$
\|\psi(t) \|_{H^2} \leq C(\rho^{\ve}, \|\psi(t)\|^{2\mu})\|\psi^{\ve}_0\|_{H^2}\ T^*=C(\rho^{\ve}, \|\psi_0\|^{2\mu})\|\psi^{\ve}_0\|_{H^2}\ T^* \ ,
$$
where $C(\rho^{\ve}, \|\cdot\|^{2\mu})$ denotes a constant which may depend on $\rho^{\ve}$ and $\|\cdot\|^{2\mu}$. This allows for the continuation of the $H^2$ solution beyond $T^*$, obtaining a similar estimate with the same constant, and the iteration of the continuation gives existence for all $t\in \RE\ .$
\end{proof}
Now we introduce the regularized charge and rewrite the total energy of the scaled problem in a form analogous to the limit energy, see \eqref{energyb}.
For any function $\psi\in L^2(\RE^3)$ we set 
\begin{equation}\label{qve}
q^\ve = \frac{\ve}{\ell} (\rho^\ve, \psi),
\end{equation}
and define
\[%\begin{equation}\label{phiepsi}
\phi^\ve = \psi -  q^\ve \rho^\ve * G.  
\]%\end{equation}
Note that, by Eq. \eqref{M2} and by the definition of $\rho^\ve$, one has that $\ve (\rho^\ve, \rho^\ve * G) /\ell =1 $,  hence $(\rho^\ve,\phi^\ve) = 0 $.  Such decomposition of a  function $\psi$ allows us  to rewrite the energy functional $E^\ve$ in a more convenient form in terms of $q^\ve$ and $\phi^\ve$. We first note that by the definition of $q^\ve$, for any $\psi\in H^1(\RE^3)$,  one has  
\[
E^\ve [\psi] = \| \nabla \psi\|^2- \frac{\ell}{\ve}|q^\ve|^2 +\frac{\gamma}{\mu+1}|q^\ve|^{2+2\mu}. 
\]
Next we note 
\begin{align*} \| \nabla \psi\|^2 & =  \| \nabla \phi^\ve + q^\ve \nabla \rho^\ve *G\|^2  \\
& =  \| \nabla \phi^\ve\|^2 + |q^\ve|^2 \|\nabla \rho^\ve *G\|^2 + 2 \Re q^\ve (\nabla \phi^\ve, \nabla \rho^\ve * G ) \\ 
& =  \| \nabla \phi^\ve\|^2 + \frac{\ell}{\ve} |q^\ve|^2 ,
 \end{align*}
where we used 
\[  \|\nabla \rho^\ve *G\|^2  =  (\nabla \rho^\ve *G,  \nabla \rho^\ve *G)  =   ( \rho^\ve ,   \rho^\ve *G) = \frac{\ell}{\ve},  \]
and 
\[ (\nabla \phi^\ve, \nabla \rho^\ve * G ) = ( \phi^\ve,  \rho^\ve  ) =0 .   \]
We conclude that the energy functional  $E^\ve[\psi]$ can be rewritten as 
\begin{equation} \label{energy2}E^\ve [\psi] = \| \nabla \phi^\ve\|^2 +\frac{\gamma}{1+\mu}|q^\ve|^{2+2\mu}. \end{equation}

Now we prove a simple proposition on the convergence of the initial data needed in the proof of the main Theorem \ref{t:main} (see Proposition \ref{l:psiconv}).
%PROPOSITION
\begin{proposition}\label{p:initdata}
Let $\psi_0=\phi_0+q_0G\in \DD$ and define  $\psi_0^\ve=\phi_0+q_0G*\rho^{\ve}$. Then there exists a positive constant $C$ such that for all   $\ve>0$
\begin{equation}\label{5.4a}
\|\psi_0 - \psi_0^\ve\| \leq C \ve^{1/2}.
\end{equation}
Moreover $\psi_0^\ve\in H^2(\RE^3)$.     
\end{proposition}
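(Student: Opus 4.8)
The plan is to isolate the only $\ve$-dependent object, namely $G*\rho^\ve - G$, since $\psi_0^\ve - \psi_0 = q_0\,(G*\rho^\ve - G)$ with the regular part $\phi_0$ left unchanged. Both assertions then reduce to properties of this single difference.

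For the $L^2$ estimate \eqref{5.4a} I would work in Fourier space. Using $\widehat{\rho^\ve}(\kk) = \hat\rho(\ve\kk)$, the convolution rule, $\hat G(\kk) = 1/((2\pi)^{3/2}|\kk|^2)$, and the normalization $\int\rho = 1$ (which gives $(2\pi)^{3/2}\hat\rho(\oo) = 1$, hence $\hat G(\kk) = \hat\rho(\oo)/|\kk|^2$), one obtains
\[
\widehat{G*\rho^\ve - G}(\kk) = \frac{\hat\rho(\ve\kk) - \hat\rho(\oo)}{|\kk|^2}.
\]
By Plancherel, the spherical symmetry of $\hat\rho$, and the change of variable $u=\ve k$,
\[
\|G*\rho^\ve - G\|^2 = 4\pi\int_0^\infty\frac{|\hat\rho(\ve k) - \hat\rho(0)|^2}{k^2}\,dk = 4\pi\,\ve\int_0^\infty\frac{|\hat\rho(u) - \hat\rho(0)|^2}{u^2}\,du,
\]
so the rate $\ve^{1/2}$ (multiplied by $|q_0|$) follows at once, provided the last integral is finite.

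Establishing the finiteness of $\int_0^\infty u^{-2}|\hat\rho(u)-\hat\rho(0)|^2\,du$ is the only genuinely delicate point, and it is where the regularity and decay of $\rho$ enter. Near $u=0$ the weight $u^{-2}$ is not integrable by itself, so I would invoke \eqref{working} with $\eta=2$, giving $|\hat\rho(u)-\hat\rho(0)|\le Cu^2$ and hence an integrand bounded by $Cu^2$ there. For large $u$, since $\rho\in\mathcal S(\RE^3)$ its transform $\hat\rho$ is bounded, so $|\hat\rho(u)-\hat\rho(0)|\le C$ and the integrand is dominated by $Cu^{-2}$, integrable at infinity. This closes \eqref{5.4a}.

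It remains to show $\psi_0^\ve\in H^2(\RE^3)$. Membership in $L^2$ is immediate: $\psi_0=\phi_0+q_0G\in L^2$ by hypothesis and $G*\rho^\ve-G\in L^2$ by the first step, so $\psi_0^\ve=\psi_0+q_0(G*\rho^\ve-G)\in L^2$. For the top-order part I would verify $|\kk|^2\widehat{\psi_0^\ve}\in L^2$: from $\widehat{\psi_0^\ve}(\kk)=\hat\phi_0(\kk)+q_0\,\hat\rho(\ve\kk)/|\kk|^2$ one has $|\kk|^2\widehat{\psi_0^\ve}(\kk)=|\kk|^2\hat\phi_0(\kk)+q_0\,\hat\rho(\ve\kk)$. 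The first term lies in $L^2$ because $\phi_0\in\dot H^2(\RE^3)$, and the second because $\int_{\RE^3}|\hat\rho(\ve\kk)|^2\dk=\ve^{-3}\|\rho\|^2<\infty$. Together with the $L^2$ bound this yields $\psi_0^\ve\in H^2(\RE^3)$, the first-order derivatives being controlled by interpolation; note that the $H^2$-norm degenerates like $\ve^{-3/2}$ as $\ve\to0$, consistent with $\psi_0\notin H^2$.
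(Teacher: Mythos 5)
Your proposal is correct and follows essentially the same route as the paper: Plancherel applied to $q_0(G*\rho^\ve - G)$, the scaling change of variables producing the factor $\ve$, and the bound \eqref{working} (with $\eta=2$ near the origin and boundedness at infinity) to ensure finiteness of the remaining integral, together with the observation that $|\kk|^2\widehat{G*\rho^\ve}=\hat\rho(\ve\kk)\in L^2(\RE^3)$ for fixed $\ve$ (the paper phrases this as $\Delta(\rho^\ve*G)=-\rho^\ve$). The only differences are cosmetic: you reduce to a radial one-dimensional integral where the paper splits the three-dimensional integral at $|\kk|=1$.
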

%PROOF
\begin{proof}
We note that 
\[\|\psi_0 - \psi_0^\ve\| = |q_0| \|G-\rho^\ve*G\|.\]
In Fourier transform, one has that,
\begin{equation}\label{5.4c}
\|G-\rho^\ve*G\|^2 =\ve  \int_{\RE^3}\dk \frac{(\hat\rho(k) - \hat\rho(0))^2}{k^4} \leq \ve C \left( \int_{|\kk|<1} \dk  +\int_{|\kk|\geq1} \dk \frac{1}{k^4}\right) \leq C\ve ,
\end{equation}
where we used  bound \eqref{working}.  

The fact that $\psi_0^\ve\in L^2(\RE^3) $ is an immediate consequence of \eqref{5.4a}. Moreover $\Delta\phi_0\in L^2(\RE^3)$ and $\|\Delta \rho^\ve * G\|$ is finite for any $\ve>0$, hence $\Delta\psi^{\ve}_0\in L^2(\RE^3)$.  
 \end{proof}

Next we prove several useful bounds on $q^\ve(t)$ defined as in \eqref{apprcharge}.  
Consider the solution $\psi^{\ve}(t)$ of  problem \eqref{weakap3d} corresponding to initial datum $\psi^{\ve}_0=\phi_0+q_0G*\rho^{\ve}$. We set 
\begin{equation}
\label{5.4b}
\phi^\ve(t) = \psi^\ve(t) - q^\ve(t) \rho^\ve* G.
\end{equation}
The following result gives a priori bounds on $\phi^\ve(t)$ and $q^\ve(t)$ as deduced by the conserved energy functional rewritten in the form \eqref{energy2} and evaluated on $\psi^{\ve}(t).$
\begin{proposition}\label{l:5.2}
Let $\psi_0=\phi_0+q_0G\in \DD$, define  $\psi_0^\ve=\phi_0+q_0G*\rho^{\ve}$ and fix $T>0$. Then if $\gamma\geq 0$ and $\mu\geq 0$ or $\gamma<0$ and $0\leq\mu<1$, there exist two positive constants  $\ve_0$ and $C$ not depending on $\ve$ such that, for all  $0<\ve<\ve_0$, 
\begin{equation}\label{apriori0}
 \sup_{t\in [0,T]}\|\nabla\phi^\ve(t)\| \leq C
\end{equation}
and
\begin{equation}\label{apriori1}
 \|q^{\ve}\|_{L^\infty(0,T)} \leq C.
\end{equation}
\end{proposition}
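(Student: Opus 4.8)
The two estimates \eqref{apriori0} and \eqref{apriori1} are coupled through the conserved energy written in the form \eqref{energy2}, and my plan is to close them simultaneously. The first step is to show that the initial mass and energy of $\psi_0^\ve$ are bounded uniformly in $\ve$. By Proposition \ref{p:initdata} we have $\|\psi_0^\ve\|\to\|\psi_0\|$, so mass conservation gives that $\|\psi^\ve(t)\|=\|\psi_0^\ve\|$ stays bounded (and, if $\psi_0\neq0$, also bounded below) for $\ve<\ve_0$. For the energy, writing $q_0^\ve=\frac{\ve}{\ell}(\rho^\ve,\psi_0^\ve)$ and using $\frac{\ve}{\ell}(\rho^\ve,\rho^\ve*G)=1$ (a consequence of \eqref{M2}) one finds $q_0^\ve=q_0+\frac{\ve}{\ell}(\rho^\ve,\phi_0)=q_0+O(\ve)$, since $\phi_0$ is continuous and bounded near the origin. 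Consequently $\phi_0^\ve=\phi_0+(q_0-q_0^\ve)\,\rho^\ve*G$ satisfies $\|\nabla\phi_0^\ve\|\leq\|\nabla\phi_0\|+|q_0-q_0^\ve|\,\|\nabla\rho^\ve*G\|=\|\nabla\phi_0\|+O(\ve)\cdot O(\ve^{-1/2})=\|\nabla\phi_0\|+O(\ve^{1/2})$, which is finite by Remark \ref{r:EEDD}. Hence $E^\ve[\psi_0^\ve]$ is bounded uniformly for $\ve<\ve_0$, and by conservation so is $E^\ve[\psi^\ve(t)]$ on $[0,T]$.

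The key analytic ingredient is a Gagliardo--Nirenberg-type estimate controlling the charge by the gradient of the regular part,
\[
|q^\ve(t)|^2\leq C\,\|\psi^\ve(t)\|\,\|\nabla\phi^\ve(t)\|,
\]
with $C$ independent of $\ve$. Since neither $\phi^\ve$ nor $\rho^\ve*G$ is separately in $L^2(\RE^3)$, the mass cannot be split term by term; instead I would work in Fourier space starting from $\hat\psi^\ve(\kk)=\hat\phi^\ve(\kk)+q^\ve\,\hat\rho(\ve k)/k^2$. Integrating over a ball $\{|\kk|<R\}$ isolates the charge through $\int_{|\kk|<R}\frac{\hat\rho(\ve k)}{k^2}\dk=4\pi\int_0^R\hat\rho(\ve k)\,\dd k$, which is bounded below by $cR$ thanks to \eqref{working} as long as $R\lesssim\ve^{-1}$, while the two remaining terms are bounded by Cauchy--Schwarz by $CR^{3/2}\|\psi^\ve\|$ and $CR^{1/2}\|\nabla\phi^\ve\|$ respectively. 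Dividing by $4\pi\int_0^R\hat\rho(\ve k)\,\dd k$ and optimizing in $R$ (the optimal radius $R\sim\|\nabla\phi^\ve\|/\|\psi^\ve\|$ stays in the admissible range $R\lesssim\ve^{-1}$ precisely when the quantities are $O(1)$) yields the displayed estimate.

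Finally I would combine these ingredients with energy conservation. If $\gamma\geq0$, both terms of $E^\ve[\psi^\ve(t)]$ in \eqref{energy2} are nonnegative, so $\|\nabla\phi^\ve(t)\|^2\leq E^\ve[\psi_0^\ve]\leq C$ gives \eqref{apriori0} at once, and inserting this into the charge estimate gives \eqref{apriori1}. The delicate case is $\gamma<0$, where \eqref{energy2} only yields $\|\nabla\phi^\ve(t)\|^2\leq C+\frac{|\gamma|}{1+\mu}|q^\ve(t)|^{2+2\mu}$; substituting $|q^\ve|^{2+2\mu}\leq C\|\nabla\phi^\ve\|^{1+\mu}$ from the charge estimate produces the closed scalar inequality $X^2\leq C+CX^{1+\mu}$ with $X=\|\nabla\phi^\ve(t)\|$. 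Here the subcriticality hypothesis $0\leq\mu<1$ (so that $1+\mu<2$) is essential: it forces $X$ below a constant independent of $\ve$, whence \eqref{apriori0}, and then \eqref{apriori1} follows. I expect the main obstacle to be exactly this attractive regime, namely avoiding circular reasoning between the energy identity and the charge estimate (both involving $\|\nabla\phi^\ve\|$ and $q^\ve$), which I would resolve by a continuity/bootstrap argument in $t\in[0,T]$: starting from the uniform bound at $t=0$, one checks that the admissible range $R\lesssim\ve^{-1}$ in the charge estimate is respected so long as the gradient remains $O(1)$, and the self-improving inequality then prevents it from leaving a fixed bounded set.
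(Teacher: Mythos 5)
Your proposal is correct and follows the same overall strategy as the paper's proof: uniform control of the initial data and energy, an $\ve$-uniform interpolation bound $|q^\ve|^2\leq C\|\psi^\ve\|\,\|\nabla\phi^\ve\|$, and the scalar inequality $X^2\leq C+CX^{1+\mu}$ (with $X=\|\nabla\phi^\ve(t)\|$, $0\leq\mu<1$) to close the attractive case. You deviate at two points, and both variants are sound. First, you prove the interpolation bound by integrating $\hat\psi^\ve(\kk)=\hat\phi^\ve(\kk)+q^\ve\hat\rho(\ve k)/k^2$ over a Fourier ball of radius $R$ and optimizing in $R$; the paper instead pairs the decomposition \eqref{5.4b} with a second mollifier $\rho^{\ve_1}$ and optimizes in $\ve_1$ (see \eqref{4}), with $\ve_1$ playing the role of your $1/R$ --- the two are dual formulations of the same frequency-truncation argument, and both need the scale constraint ($R\lesssim\ve^{-1}$, resp.\ $\ve_1\gtrsim\ve$) so that \eqref{working} keeps $\hat\rho$ bounded away from zero. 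Second, to break the circularity between the energy identity and the charge estimate, the paper derives the intermediate rough bounds $|q^\ve(t)|\leq C\ve^{-1/2}$ (from mass conservation) and hence $\|\nabla\phi^\ve(t)\|\leq C\ve^{-(1+\mu)/2}$, so that $\ve\|\nabla\phi^\ve(t)\|\leq C\ve^{(1-\mu)/2}\to0$ validates the scale constraint for all $t$ at once; you instead run a continuity/bootstrap argument in $t$, which works since $t\mapsto\|\nabla\phi^\ve(t)\|$ is continuous by Theorem \ref{t:wpscaled} and the bootstrap threshold can be chosen independently of $\ve$. Your route is marginally leaner in that $\mu<1$ enters only through the final scalar inequality, whereas the paper uses it a second time to validate the scale constraint. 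One small imprecision: to get $q_0^\ve=q_0+O(\ve)$ you invoke boundedness of $\phi_0$ ``near the origin'', but since $\rho$ is Schwartz rather than compactly supported you need a global bound; this is harmless because $\hat\phi_0\in L^1(\RE^3)$ (argue as in Rem.\ \ref{r:EEDD}), or one can simply use the second inequality in \eqref{2} as the paper does, which gives the weaker but sufficient $|q_0^\ve-q_0|\leq C\ve^{1/2}$.
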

%PROOF
\begin{proof}
We note the following inequalities 
\begin{equation} \label{2}
\ve|(\rho^{\ve},\psi)| \leq C\ve^{-1/2}\|\psi\|\;; \qquad  \ve| (\rho^{\ve},\psi)| \leq C \ve^{1/2}  \|\nabla\psi\|\ .
\end{equation} 
The first one holds true for any $\psi\in L^2(\RE^3)$ and the second one for $\psi\in \dot H^1(\RE^3). $ Both of them follow directly from the Cauchy-Schwarz inequality. In the first one we use $\|\rho^\ve\| \leq C \ve^{-3/2}$. In the second one we use $|(\rho^\ve,\psi) |\leq \||\cdot|^{-1} \hat \rho^\ve\|\, \||\cdot|\hat \psi\|$ and $\||\cdot|^{-1} \hat \rho^\ve\| = \sqrt{\ell/\ve}$, by \eqref{M2}. 

By the conservation of the energy and by identity \eqref{energy2} we have that 
\[E^\ve[\psi^\ve(t)] = \|\nabla \phi^\ve(t)\|^2 + \frac{\gamma}{1+\mu} |q^\ve(t)|^{2+2\mu} = \|\nabla \phi^\ve_0\|^2 + \frac{\gamma}{1+\mu} |q^\ve_0|^{2+2\mu} = E^\ve[\psi_0^\ve]  \]
where, according to  \eqref{qve} and \eqref{5.4b},  $q_0^\ve$ and $\phi_0^\ve$ are given by
\begin{equation}\label{5.7a}
q_0^\ve : = \frac{\ve}{\ell} (\rho^\ve,\psi_0^\ve)
\end{equation}
and 
\begin{equation}\label{phi0ve}
\phi_0^\ve : = \psi_0^\ve - q_0^\ve \rho^\ve*G.
\end{equation}
In Eq. \eqref{5.7a}, we use Eq. \eqref{psi0ve}  and the fact that $\ve(\rho^\ve, \rho^\ve *G)/\ell = 1$, rearranging the terms, we end up with  
\[q_0^\ve - q_0 = \frac{\ve}{\ell} (\rho^\ve,\phi_0) .\]
By the second inequality in \eqref{2} it follows that 
\begin{equation} \label{machine}|q_0^\ve - q_0| \leq  C  \ve^{1/2} \|\nabla\phi_0\|.\end{equation} 
The latter bound, together with Eqs. \eqref{phi0ve}  gives 
\begin{equation} \label{machine2}
\|\nabla \phi_0^\ve\| = \|\nabla \phi_0 + (q_0 -   q_0^\ve) \nabla \rho^\ve*G   \| \leq C \|\nabla \phi_0\|,
\end{equation}
where we used the identity  $\|\nabla \rho^\ve*G   \| =\sqrt{ \ell/\ve}$.\\
Next we distinguish the three cases: $\gamma>0$, $\gamma <0$, and $\gamma=0$. 

For $\gamma > 0$ the energy functional $E^\ve$ is nonnegative. This piece of information, together with  \eqref{machine} and \eqref{machine2}, and the inequality $|q_0^{\ve}|^{2+2\mu} \leq C(|q_0^{\ve}-q_0|^{2+2\mu} +|q_0|^{2+2\mu} )$, give the following chain of inequalities 
\begin{equation}\label{burning}
0<E^\ve[\psi_0^\ve] \leq C (\|\nabla\phi_0\|^2+\ve^{1+\mu}\|\nabla\phi_0\|^{2+2\mu} + |q_0|^{2+2\mu})\leq  C_0.
 \end{equation}
By the conservation of the energy, this implies that $0 < E^\ve[\psi^\ve(t)] \leq C_0$. Since $E^\ve[\psi^\ve(t)]$ is the sum of two positive terms, both of them must satisfy the same bound, and this concludes the proof of  bounds \eqref{apriori0} and \eqref{apriori1} for $\gamma>0$. \\

For $\gamma<0$  the proof is more involved because the sign of the energy is not definite. We start by noting that, by the same argument as above, one has a bound similar to \eqref{burning} for the absolute value of the energy, precisely 
\[|E^\ve[\psi_0^\ve]| \leq C_0.\]
 By the conservation of the energy this gives  
\begin{equation}\label{stone}|E^\ve[\psi^\ve(t)]| \leq C_0.\end{equation}
We claim that  for any $0\leq \mu<1$  the following bound holds true 
\begin{equation}\label{mainbound} |q^\ve(t)| \leq \widetilde C \|\nabla \phi^\ve(t)\|^{1/2}.\end{equation}
By using this bound and \eqref{stone}, we find the chain of  inequalities 
\begin{equation} \label{5.14a}
C_0\geq E^\ve[\psi^\ve(t)] \geq \|\nabla \phi^\ve(t)\|^2 - \frac{|\gamma|\widetilde C}{1+\mu}  \|\nabla \phi^\ve(t)\|^{1+\mu}. 
\end{equation}
Noticing that if  $x^2 -\tilde cx^{1+\mu} \leq c_0$, with $\tilde c,c_0>0$, $x>0$, and $0\leq\mu<1$ then it must be  $x\leq c$, we conclude that Eq. \eqref{5.14a} implies  $\|\nabla \phi^\ve(t)\|\leq C$, hence $|q^\ve(t)|\leq C$,  for all $0\leq\mu<1$. \\
We are left to prove the claim in  \eqref{mainbound}. To this aim we start by noting that  by the first inequality in \eqref{2} and by the conservation of the mass 
\[|q^\ve(t)| \leq \frac{\ve}{\ell} \|\rho^\ve\| \,\|\psi^\ve_0\| \leq C \ve^{-1/2},\]
where we used the fact that by Prop. \ref{p:initdata} $\|\psi_0^\ve\| \leq C$.  By using this rough bound on $|q^\ve(t)|$ together with  bound \eqref{stone}, we obtain the chain of inequalities  
\[
C_0 \geq E^\ve[\psi^\ve(t)]\geq  \|\nabla \phi^\ve(t)\|^2 - \frac{|\ga|C}{\mu+1} \ve^{-(1+\mu)}
\]
which in turn implies that for $0<\ve<1$
\[  \|\nabla \phi^\ve(t)\|^2\leq C_0+ \frac{|\ga|C}{\mu+1} \ve^{-(1+\mu)} \leq C(1+ \ve^{-(1+\mu)}) \leq  C\ve^{-(1+\mu)}.\]
Hence we have the following bound on $ \|\nabla \phi^\ve(t)\|$
 \begin{equation}\label{boundapriori}
\|\nabla \phi^\ve(t)\| \leq C \ve^{-(1+\mu)/2}.\end{equation}
Now we are ready to prove   bound \eqref{mainbound}. Take $\rho^{\ve_1} $, where $\ve_1$ is a parameter which will be fixed later on.  \\
By the first inequality in  \eqref{2}, together with the  conservation of the mass and Prop. \ref{p:initdata}, we get  
\[\ve_1|(\rho^{\ve_1},\psi^\ve(t))| \leq C \ve_1^{-1/2}. \] 
On the other hand the second inequality in  \eqref{2} gives 
\[
\ve_1|(\rho^{\ve_1},\phi^\ve(t))| \leq C \ve_1^{1/2} \|\nabla\phi^\ve(t)\|. \]
We can use these two bounds together with the rough bound \eqref{boundapriori}  to prove that  $|q^\ve(t)|$ is bounded uniformly in $\ve$. We take the scalar product of  identity \eqref{5.4b} for $\rho^{\ve_1}$ and rearrange the terms to obtain  
\begin{equation}\label{4}
|q^\ve(t)| \leq \frac{ |(\rho^{\ve_1},\psi^\ve(t))|  + | (\rho^{\ve_1},\phi^\ve(t))| }{|(\rho^{\ve_1},\rho^\ve*G)|}
\leq C \frac{ \ve_1^{-1/2}  + \ve_1^{1/2}\|\nabla \phi^\ve(t)\|}{\ve_1|(\rho^{\ve_1},\rho^\ve*G)|} .
\end{equation}
The latter bound gives \eqref{mainbound} if we set $\ve_1 = \|\nabla\phi^\ve(t)\|^{-1}$ and we can guarantee that with this choice of $\ve_1$ the denominator at the r.h.s. does not go to zero as $\ve$ goes to zero.  Let us check that this is indeed the case. We note that 
\[\begin{aligned}
\ve_1(\rho^{\ve_1},\rho^\ve*G) = & \frac{\ve_1}{\ve} \int_{\RE^3}\dk  \frac{\hat \rho(\ve_1 k/\ve) \hat \rho(k)  }{|\kk|^2}  \\  
   = & A_1 + A_2^{\ve_1/\ve}
\end{aligned}\]
with 
\[
A_1 =  \hat \rho(0) \int_{\RE^3}\dk  \frac{ \hat \rho(k)  }{|\kk|^2} = \int_{\RE^3} \dx \frac{\rho(|\xx|) }{4\pi |\xx|} ,
\]
and 
\[
A_2^{\ve_1/\ve} = \frac{\ve_1}{\ve} \int_{\RE^3}\dk  \frac{\hat \rho(\ve_1 k/\ve)( \hat \rho(k)- \hat \rho(0))  }{|\kk|^2 }  .
\]
Then we note that $A_1>0$, because $\rho$ is non-negative, and that  
\[\begin{aligned}
|A_2^{\ve_1/\ve}| \leq & \frac{\ve_1}{\ve}  \left(\int_{\RE^3}\dk (\hat \rho(\ve_1 k/\ve))^2\right)^{1/2} \left(\int_{\RE^3}\dk  \frac{( \hat \rho(k)- \hat \rho(0))^2  }{|\kk|^4 }\right)^{1/2} \leq  C  \sqrt{\frac{\ve}{\ve_1}} , 
\end{aligned}\]
where we used the same argument as in \eqref{5.4c}. Recalling that $\ve_1 = \|\nabla \phi^\ve(t)\|^{-1}$ and the rough bound  \eqref{boundapriori}, we obtain  
\[%\begin{equation}\label{inequality}
\frac{\ve}{\ve_1} = \ve \|\nabla \phi^\ve(t)\| \leq C \ve^{(1-\mu)/2}.
\]% \end{equation}
Which in turn implies that $|A_2^{\ve_1/\ve}| \leq C \ve^{(1-\mu)/4}$, and that, for any $0\leq\mu<1$ and $\ve$ small enough, 
\begin{equation}\label{lower}
|\ve_1(\rho^{\ve_1},\rho^\ve*G)| \geq ||A_1| -|A_2^{\ve_1/\ve}|| \geq |A_1|/2.
\end{equation}
By using this lower bound in \eqref{4}, together with the fact that we set $\ve_1 = \|\nabla \phi^\ve(t)\|^{-1}$, we obtain bound \eqref{mainbound}, and conclude the proof of the lemma for $\gamma<0$. \\ 

It remains to show that for $\gamma=0$  bounds \eqref{apriori0} and  \eqref{apriori1} hold for any $\mu\geq 0$. This is proven by noticing that in  this case  the conservation of the energy  immediately implies  the bound $\|\nabla\phi^\ve(t)\| \leq C$. To prove \eqref{apriori1}, note that  inequality \eqref{4} is still satisfied, because it is a direct  consequence of the decomposition in Eq. \eqref{5.4b}. Taking again  $\ve_1 = \|\nabla\phi^\ve(t)\|^{-1}$ one has that the lower bound \eqref{lower} holds true for any $\mu\geq 0$, because $\ve/\ve_1 \leq C\ve $. Hence  the bound \eqref{mainbound} holds true for any $\mu\geq 0$, thus implying  $|q^\ve(t)| \leq \|\nabla \phi^\ve(t)\| \leq C$.
\end{proof}

\begin{remark}
Notice that the above uniform bounds hold true only for the values of $\mu$ and $\gamma$ for which the limit problem admits a global solution. The scaled problem is well posed for a much wider range of parameters. The restricted range of parameters in the limit problem   is related to the lack of uniform bounds. An analogous phenomenon occurs for the one dimensional case, see \cite{CFNT1}, Remark 2.1.
\end{remark}
 
\section{A priori estimates on the regularized charge\label{s:5}}

This section  lays the groundwork for the proof of the main theorem in the next section. First we cast  Eq. \eqref{eqapprchargeeq} in an equivalent form, then we use it to derive a priori estimates on the derivative and half derivative of the approximating charge $q^\ve(t)$ (see  Prop.  \ref{l:5.4}).

%PROPOSITION
\begin{proposition}\label{p:5.1}
If the function $q^\ve(t)$ satisfies  Eq. \eqref{eqapprchargeeq}, it also satisfies:
\begin{equation}\label{Iqve}
I^{1/2} q^\ve(t) = - 4\pi \sqrt{\pi i} \ga \int_0^t \ds  |q^\ve(s)|^{2\mu} q^\ve(s)  + 4\pi \sqrt{\pi i}  \int_0^t \dd s \, (\rho^\ve,U(s)*\psi^\ve_0)  + \sum_{j=1}^3Y_j^\ve(t), 
\end{equation}
with 
\begin{equation}
\label{Yve1}
Y_1^\ve(t) =    
   -(4\pi)^2  \sqrt{\pi i}   \int_0^t \dd \tau q^\ve(\tau)    \int_0^\infty \dd k\, ( (\hat \rho(\ve k))^2 -(\hat\rho(0))^{2})\, e^{-ik^2(t-\tau)},  \end{equation}
\begin{equation}\label{Yve2}
Y_2^\ve(t) =    (4\pi)^2  \sqrt{\pi i}   \gamma \frac{\ve}{\ell}  \int_0^t \dd \tau |q^\ve(\tau)|^{2\mu} q^\ve(\tau)     \int_0^\infty \dd k\, ( (\hat \rho(\ve k))^2 -(\hat\rho(0))^{2})\, \, e^{-ik^2(t-\tau)},
\end{equation}
and
\begin{equation}
\label{Yve3}
Y_3^\ve(t) =    \gamma \frac{\ve}{\ell}  \int_0^t \dd \tau \frac{|q^\ve(\tau)|^{2\mu} q^\ve(\tau)}{\sqrt{t- \tau}}.
\end{equation}
\end{proposition}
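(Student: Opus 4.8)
The plan is to integrate \eqref{eqapprchargeeq} once in time and then to peel off from the kernel $(\rho^\ve,U(\cdot)\rho^\ve)$ a Fresnel $t^{-1/2}$ factor that rebuilds $I^{1/2}$, the three source terms $Y_j^\ve$ being exactly the leftovers of this procedure.

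First I would rewrite the kernel in Fourier variables. Since $\widehat{\rho^\ve}(\kk)=\hat\rho(\ve\kk)$ and $\rho$ is real and spherically symmetric, one has $(\rho^\ve,U(t)\rho^\ve)=4\pi\int_0^\infty\dd k\,k^2(\hat\rho(\ve k))^2e^{-ik^2t}$. Setting $L^\ve(t):=4\pi\int_0^\infty\dd k\,(\hat\rho(\ve k))^2e^{-ik^2t}$ and using $k^2e^{-ik^2t}=i\,\partial_t e^{-ik^2t}$ gives $(\rho^\ve,U(t)\rho^\ve)=i\dot L^\ve(t)$; moreover $L^\ve\in C^1([0,\infty))$ and, by \eqref{M2}, $L^\ve(0)=\ell/\ve$. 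Isolating $\hat\rho(0)=(2\pi)^{-3/2}$ and invoking the Fresnel integral \eqref{identities1} yields the decomposition $L^\ve(t)=\frac{1}{4\pi\sqrt{\pi i}}\,t^{-1/2}+M^\ve(t)$, with $M^\ve(t):=4\pi\int_0^\infty\dd k\,\big((\hat\rho(\ve k))^2-(\hat\rho(0))^2\big)e^{-ik^2t}$. Consequently, for any admissible $f$,
\[ 4\pi\sqrt{\pi i}\,(L^\ve*f)(t)=I^{1/2}f(t)+4\pi\sqrt{\pi i}\,(M^\ve*f)(t), \]
where $*$ denotes time convolution on $[0,t]$. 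Both $t^{-1/2}$ and $M^\ve$ carry only an integrable singularity at the origin (their sum $L^\ve$ being bounded), so every convolution below is well defined.

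Next I would integrate \eqref{eqapprchargeeq} over $[0,t]$. By Fubini each term $i\int_0^t(\text{kernel}*h)(t')\,\dd t'$ becomes a convolution of $h$ with the antiderivative $\int_0^\tau(\rho^\ve,U(u)\rho^\ve)\,\dd u=iL^\ve(\tau)-i\ell/\ve$. The constant part $-i\ell/\ve$, convolved with $q^\ve$, produces $\frac{\ell}{\ve}\int_0^t q^\ve$, which cancels exactly the time integral of the left-hand side $\frac{\ell}{\ve}q^\ve$ of \eqref{eqapprchargeeq}. What survives is
\[ (L^\ve*q^\ve)(t)=\int_0^t\dd s\,(\rho^\ve,U(s)\psi_0^\ve)+\ga\frac{\ve}{\ell}(L^\ve*|q^\ve|^{2\mu}q^\ve)(t)-\ga\int_0^t\dd s\,|q^\ve(s)|^{2\mu}q^\ve(s). \]
Finally I would multiply by $4\pi\sqrt{\pi i}$ and insert the decomposition of $L^\ve$ into both convolutions. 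On the left, $4\pi\sqrt{\pi i}(L^\ve*q^\ve)=I^{1/2}q^\ve+4\pi\sqrt{\pi i}(M^\ve*q^\ve)$, and the $M^\ve$-term equals $-Y_1^\ve$ once the $k$-integral is written out (the weight is $(\hat\rho(\ve k))^2-(\hat\rho(0))^2$, with no factor $k^2$, the two factors $4\pi$ combining into the $(4\pi)^2$ of \eqref{Yve1}). On the right, $4\pi\sqrt{\pi i}\,\ga\frac{\ve}{\ell}(L^\ve*|q^\ve|^{2\mu}q^\ve)$ splits into the half-integral piece $\ga\frac{\ve}{\ell}I^{1/2}(|q^\ve|^{2\mu}q^\ve)=Y_3^\ve$ and the $M^\ve$-piece $=Y_2^\ve$, while $-4\pi\sqrt{\pi i}\,\ga\int_0^t|q^\ve|^{2\mu}q^\ve$ becomes the first term of \eqref{Iqve}. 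Isolating $I^{1/2}q^\ve$ on the left reproduces \eqref{Iqve}.

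The main difficulty is not conceptual but one of ordering and justification: one must resist expanding $(\rho^\ve,U(\cdot)\rho^\ve)$ directly, since its $\hat\rho(0)$-part is the non-integrable $t^{-3/2}$ kernel arising as $i\,\partial_t t^{-1/2}$; integrating the equation once first replaces $i\dot L^\ve$ by $L^\ve$, and only then is it legitimate to strip off the $t^{-1/2}$ Fresnel factor. One then has to check that the separate $t^{-1/2}$ singularities of $1/\sqrt{t}$ and of $M^\ve$ are integrable, so that the Fubini exchanges and the splitting of the convolutions are valid, and to keep careful track of the constants $4\pi$, $\sqrt{\pi i}$ and $\ga\ve/\ell$ through each manipulation.
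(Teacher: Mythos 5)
Your proposal is correct and takes essentially the same route as the paper: integrate Eq. \eqref{eqapprchargeeq} once in time, write the kernel $(\rho^\ve,U(\cdot)\rho^\ve)$ in Fourier variables, exchange the order of integration so that the antiderivative of the kernel appears (your $L^\ve$, whose value at $0$ cancels the integrated left-hand side), and peel off the $(\hat\rho(0))^2$ Fresnel contribution to reconstruct $I^{1/2}$, the terms $Y_j^\ve$ being exactly the leftovers. The constants all check out, so only the packaging (your $L^\ve$/$M^\ve$ notation versus the paper's direct computation of $\int_\tau^t e^{-ik^2(s-\tau)}\,\dd s$) differs.
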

%PROOF
\begin{proof}
By integrating Eq. \eqref{eqapprchargeeq} and rewriting $(\rho^\ve,U(t)\rho^\ve)$ in  Fourier transform we obtain the identity
\begin{equation} \begin{aligned}\label{down}
\frac{\ell}{\ve} \int_0^t \dd s \,  q^\ve(s )  = & \int_0^t \dd s \, (\rho^\ve,U(s)\psi^\ve_0)  \\ 
& + 4\pi  i  \int_0^t \dd s \, \int_0^s \dd \tau  \int_0^\infty \dd k\, k^2 (\hat\rho(\ve k))^2 e^{-ik^2(s-\tau)} \left(1 - \gamma\frac{\ve}{\ell} |q^\ve(\tau)|^{2\mu}\right) q^\ve(\tau).
\end{aligned}\end{equation}
In the latter term, we exchange the order of integration and perform the integral with respect to the variable $s$. We obtain
\begin{align*}
 &4\pi  i  \int_0^t \dd s  \int_0^s \dd \tau  \int_0^\infty \dd k\, k^2 (\hat \rho(\ve k))^2 e^{-ik^2(s-\tau)} \left(1 - \gamma\frac{\ve}{\ell} |q^\ve(\tau)|^{2\mu}\right) q^\ve(\tau) \\ 
=  &  4\pi \int_0^t \dd \tau \left(1 - \gamma\frac{\ve}{\ell} |q^\ve(\tau)|^{2\mu}\right) q^\ve(\tau)    \int_0^\infty \dd k\,  (\hat \rho(\ve k))^2(1- e^{-ik^2(t-\tau)})  \nonumber\\ 
=  &  \frac{\ell}{\ve} \int_0^t \dd \tau  q^\ve(\tau)  -   \gamma \int_0^t \dd \tau |q^\ve(\tau)|^{2\mu} q^\ve(\tau)  - 4\pi   \int_0^t \dd \tau q^\ve(\tau)    \int_0^\infty \dd k\,  (\hat \rho(\ve k))^2 \, e^{-ik^2(t-\tau)}  \\ 
  & +4\pi   \gamma\frac{\ve}{\ell}  \int_0^t \dd \tau |q^\ve(\tau)|^{2\mu} q^\ve(\tau)     \int_0^\infty \dd k\,  (\hat \rho(\ve k))^2 \, e^{-ik^2(t-\tau)}  \nonumber\\ 
=  &  \frac{\ell}{\ve} \int_0^t \dd \tau  q^\ve(\tau)  -  \gamma \int_0^t \dd \tau |q^\ve(\tau)|^{2\mu} q^\ve(\tau)     -  \frac{1}{4\pi \sqrt{\pi i}}\int_0^t \dd \tau \frac{ q^\ve(\tau)}{\sqrt{t-\tau}} +  \frac{1}{4\pi \sqrt{\pi i}}\sum_{j=1}^3 Y_j^\ve(t).
%\label{hearted} 
\end{align*}
Here we used  identity \eqref{M2}, the Fresnel integral \eqref{identities1}, and $\hat\rho(0)=(2\pi)^{-3/2}$.  Going back to  Eq. \eqref{down} and rearranging the terms we obtain  Eq.  \eqref{Iqve}. 
\end{proof}

In the following Proposition we prove the bounds on the derivative and half-derivative of $q^\ve$. 
%LEMMA
\begin{proposition}\label{l:5.4}
Let $\psi_0=\phi_0+q_0G \in \DD$ and $\psi_0^\ve=\phi_0+q_0\rho^\ve*G$ and fix $T>0$. Then if $\gamma\geq 0$ and $\mu\geq 0$ or $\gamma<0$ and $0\leq\mu<1$, there exist two positive constants  $\ve_0$ and $C$  such that, for all  $0<\ve<\ve_0$, 
\begin{equation}
\label{apriori2}
\|\dot q^\ve\|_{L^\infty(0,T)} \leq C\ve^{-3/2},
\end{equation}
and 
\begin{equation}
\label{apriori3}
\|D^{1/2} q^\ve\|_{L^1(0,T)} \leq C\ve^{-(1/2+\de)}  
\end{equation}
for all $\de>0$ with the constant $C$ in \eqref{apriori3} depending on $\de$.
\end{proposition}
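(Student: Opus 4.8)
The plan is to prove the pointwise bound \eqref{apriori2} by differentiating the definition $q^\ve(t)=\frac{\ve}{\ell}(\rho^\ve,\psi^\ve(t))$ in time and exhibiting an exact cancellation of the most singular contributions, and to prove \eqref{apriori3} by differentiating the recast charge identity \eqref{Iqve}, reducing $\|D^{1/2}q^\ve\|_{L^1(0,T)}$ to $L^1$ estimates of the source term and of $\dot Y^\ve_1,\dot Y^\ve_2,\dot Y^\ve_3$. Throughout I would use the uniform a priori bounds $\|q^\ve\|_{L^\infty(0,T)}\le C$ and $\sup_t\|\nabla\phi^\ve(t)\|\le C$ from Proposition \ref{l:5.2}, together with the scaling identities $\|\rho^\ve\|^2=\ve^{-3}\|\rho\|^2$ and $\|\nabla\rho^\ve\|^2=\ve^{-5}\|\nabla\rho\|^2$.

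For \eqref{apriori2}: since $\psi^\ve\in C^1((0,T),L^2)$ I write $\dot q^\ve=\frac{\ve}{\ell}(\rho^\ve,\dot\psi^\ve)$ and insert \eqref{appeq} in the form $\dot\psi^\ve=i\Delta\psi^\ve-i\,c(\psi^\ve)(\rho^\ve,\psi^\ve)\rho^\ve$, where $c(\psi^\ve)$ is the scalar coefficient in \eqref{appeq}. Since $(\rho^\ve,\psi^\ve)=\frac{\ell}{\ve}q^\ve$ and $|(\rho^\ve,\psi^\ve)|^{2\mu}=(\ell/\ve)^{2\mu}|q^\ve|^{2\mu}$, the product $c(\psi^\ve)(\rho^\ve,\psi^\ve)$ collapses to $-q^\ve+\ga\frac{\ve}{\ell}|q^\ve|^{2\mu}q^\ve$, so that $\dot q^\ve=i\frac{\ve}{\ell}(\rho^\ve,\Delta\psi^\ve)+i\frac{\ve}{\ell}\big(q^\ve-\ga\frac{\ve}{\ell}|q^\ve|^{2\mu}q^\ve\big)\|\rho^\ve\|^2$. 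Using the decomposition \eqref{5.4b} together with $\Delta(\rho^\ve*G)=-\rho^\ve$ one gets $(\rho^\ve,\Delta\psi^\ve)=(\rho^\ve,\Delta\phi^\ve)-q^\ve\|\rho^\ve\|^2$, whence the two terms $\pm i\frac{\ve}{\ell}q^\ve\|\rho^\ve\|^2$ (each individually of order $\ve^{-2}$) cancel, leaving
\[\dot q^\ve=i\tfrac{\ve}{\ell}(\rho^\ve,\Delta\phi^\ve)-i\ga\tfrac{\ve^2}{\ell^2}\|\rho^\ve\|^2\,|q^\ve|^{2\mu}q^\ve.\]
Estimating the first term by Cauchy--Schwarz, $\frac{\ve}{\ell}|(\rho^\ve,\Delta\phi^\ve)|\le\frac{\ve}{\ell}\|\nabla\rho^\ve\|\,\|\nabla\phi^\ve\|\le C\ve\cdot\ve^{-5/2}=C\ve^{-3/2}$, and the second by $\frac{\ve^2}{\ell^2}\|\rho^\ve\|^2\|q^\ve\|^{2\mu+1}_{L^\infty}\le C\ve^{-1}$, gives \eqref{apriori2}.

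For \eqref{apriori3}: differentiating \eqref{Iqve} yields $D^{1/2}q^\ve=-4\pi\sqrt{\pi i}\,\ga\,|q^\ve|^{2\mu}q^\ve+4\pi\sqrt{\pi i}\,(\rho^\ve,U(t)\psi^\ve_0)+\sum_{j=1}^3\dot Y^\ve_j$. The first term is $O(1)$ in $L^1$ by \eqref{apriori1}, and the source term is also $O(1)$: splitting $\psi^\ve_0=\phi_0+q_0\rho^\ve*G$, the $\phi_0$-part is handled through Proposition \ref{p:harvey} and \eqref{working}, while the $\rho^\ve*G$-part reduces to the kernel below plus an explicit Fresnel term $\propto t^{-1/2}$. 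The crux is the kernel
\[K^\ve(t)=\int_0^\infty\big((\hat\rho(\ve k))^2-(\hat\rho(0))^2\big)e^{-ik^2t}\,\dd k\]
of \eqref{Yve1}--\eqref{Yve2}. Rescaling $k\mapsto k/\ve$ gives $K^\ve(t)=\frac{1}{\ve}\tilde H(t/\ve^2)$ with $\tilde H(\tau)=\int_0^\infty((\hat\rho(\kappa))^2-(\hat\rho(0))^2)e^{-i\kappa^2\tau}\,\dd\kappa$. A Fresnel estimate near $\tau=0$ yields $\tilde H(\tau)=O(\tau^{-1/2})$, while integration by parts near $\tau=\infty$ yields $\tilde H(\tau)=O(\tau^{-3/2})$, the extra decay coming from the quadratic vanishing $(\hat\rho(\kappa))^2-(\hat\rho(0))^2=O(\kappa^2)$ of \eqref{working} at $\kappa=0$. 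Hence $\tilde H\in L^1(0,\infty)$ and $\|K^\ve\|_{L^1(0,T)}\le\ve\,\|\tilde H\|_{L^1(0,\infty)}\le C\ve$; this $\ve$-gain encodes the cancellation of the leading $(\hat\rho(0))^2$ singularity built into the $-\ve/\ell$ counterterm.

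With this in hand I estimate the remainders. Writing $Y^\ve_1$ as the causal convolution $q^\ve*K^\ve$ and letting the time derivative fall on $q^\ve$, $\dot Y^\ve_1=-(4\pi)^2\sqrt{\pi i}\,[\,q^\ve(0)K^\ve+\dot q^\ve*K^\ve\,]$; by Young's inequality for the causal convolution on $(0,T)$ and \eqref{apriori2}, $\|\dot Y^\ve_1\|_{L^1}\le C\big(\|K^\ve\|_{L^1}+\|\dot q^\ve\|_{L^1}\|K^\ve\|_{L^1}\big)\le C(\ve+\ve^{-3/2}\cdot\ve)\le C\ve^{-1/2}$. The term $Y^\ve_2$ carries an extra factor $\frac{\ve}{\ell}$ and $|q^\ve|^{2\mu}q^\ve$ in place of $q^\ve$, so the same argument gives $\|\dot Y^\ve_2\|_{L^1}\le C\ve^{1/2}$. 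Finally $Y^\ve_3=\ga\frac{\ve}{\ell}I^{1/2}(|q^\ve|^{2\mu}q^\ve)$, so with $g=|q^\ve|^{2\mu}q^\ve$ one has $\dot Y^\ve_3=\ga\frac{\ve}{\ell}D^{1/2}g=\ga\frac{\ve}{\ell}\big(\frac{g(0)}{\sqrt t}+I^{1/2}\dot g\big)$, and since $\|\dot g\|_{L^\infty}\le C\|q^\ve\|_{L^\infty}^{2\mu}\|\dot q^\ve\|_{L^\infty}\le C\ve^{-3/2}$ this is $\le C\frac{\ve}{\ell}\ve^{-3/2}=C\ve^{-1/2}$ in $L^1$. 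Collecting terms gives $\|D^{1/2}q^\ve\|_{L^1(0,T)}\le C\ve^{-1/2}$, hence \eqref{apriori3}, the arbitrarily small loss $\ve^{-\de}$ being available to absorb any non-sharp constant in the kernel or source estimates. The step I expect to be the main obstacle is precisely the kernel bound $\|K^\ve\|_{L^1}\le C\ve$: a crude triangle-inequality estimate gives only $O(1)$, and combined with the genuinely $\ve^{-3/2}$-large $\dot q^\ve$ this would produce the useless bound $\ve^{-3/2}$ in \eqref{apriori3}; capturing the $\tau^{-3/2}$ decay of $\tilde H$ at infinity (which consumes the full quadratic bound of \eqref{working}) is therefore the delicate point, everything else resting on Proposition \ref{l:5.2} and on the same cancellation already used for \eqref{apriori2}.
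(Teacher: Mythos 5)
Your proof follows essentially the same route as the paper's: the cancellation of the two $\pm i\frac{\ve}{\ell}q^\ve\|\rho^\ve\|^2$ contributions after pairing the equation with $\ve\rho^\ve/\ell$ and using $\Delta(\rho^\ve*G)=-\rho^\ve$ is exactly the paper's derivation of \eqref{apriori2}, and for \eqref{apriori3} you differentiate \eqref{Iqve} and bound the source term and $\dot Y^\ve_1,\dot Y^\ve_2,\dot Y^\ve_3$ term by term just as the paper does, using the same inputs \eqref{apriori1} and \eqref{apriori2}. The only substantive difference is the organization of the kernel estimate: the paper proves the pointwise interpolated bound \eqref{5.38}, $|K^\ve(t)|\le C\ve^\eta t^{-(1+\eta)/2}$ for $0\le\eta<1$, and integrates it over $(0,T)$, whereas you rescale to the profile $\tilde H$ and claim $\tilde H\in L^1(0,\infty)$, i.e.\ the sharp bound $\|K^\ve\|_{L^1(0,T)}\le C\ve$, from which Young's inequality gives the loss-free exponent $\ve^{-1/2}$. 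One caution there: a single integration by parts only yields $\tilde H(\tau)=O(\tau^{-1})$, since the resulting amplitude $\frac{d}{d\kappa}\bigl[((\hat\rho(\kappa))^2-(\hat\rho(0))^2)/\kappa\bigr]$ has a nonzero limit at $\kappa=0$ and you need an additional Fresnel/stationary-phase step on the remaining oscillatory integral to upgrade this to $\tau^{-3/2}$; this is harmless for the stated result, because even the $O(\tau^{-1})$ decay gives $\|K^\ve\|_{L^1(0,T)}\le C\ve\log(1/\ve)\le C_\de\,\ve^{1-\de}$, which combined with $\|\dot q^\ve\|_{L^1(0,T)}\le C\ve^{-3/2}$ reproduces exactly the claimed $\ve^{-(1/2+\de)}$.
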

%PROOF
\begin{proof}
We note that by Prop. \ref{p:initdata},  $\psi_0^\ve\in H^2(\RE^3)$ for all $\ve>0$. Hence $\psi^\ve(t)$ satisfies Eq. \eqref{appeq} that can be rewritten as 
\[i\pd{}{t}\psi^\ve(t)= -\Delta \psi^\ve(t) -q^\ve(t)  \rho^\ve + \ga \frac{\ve}{\ell} |q^\ve(t)|^{2\mu} q^\ve(t) \rho^\ve. 
\]
Taking the scalar product  for $\ve \rho^\ve/\ell $ we get  
\[\begin{aligned}\dot q^\ve(t) =&  \frac{\ve}{\ell}\left(\rho^\ve,\pd{}{t} \psi^\ve(t)\right) \\
 =& -i \frac{\ve}{\ell}\left((\rho^\ve,-\Delta \psi^\ve(t)) - q^\ve(t) (\rho^\ve,\rho^\ve)+ \frac{\ga \ve}{\ell} |q^\ve(t)|^{2\mu} q^\ve(t)  (\rho^\ve,\rho^\ve)\right)  . \end{aligned}\]
 We rewrite  $\psi^\ve(t)$ as in Eq. \eqref{5.4b} and note that 
 \[ (\rho^\ve,-\Delta \psi^\ve(t)) = (\rho^\ve,-\Delta \phi^\ve(t)) + q^\ve(t) (\rho^\ve, -\Delta \rho^\ve * G)
 =  (\nabla\rho^\ve,\nabla\phi^\ve(t)) + q^\ve(t) (\rho^\ve, \rho^\ve ),\] where we used $(\rho^\ve, -\Delta \rho^\ve * G)
=(\rho^\ve, \rho^\ve )$.
 Hence 
 \[%\begin{equation}\label{me}
 \dot q^\ve(t) = -i \frac{\ve}{\ell}\left((\nabla\rho^\ve,\nabla\phi^\ve(t)) + \frac{\ga \ve}{\ell} |q^\ve(t)|^{2\mu} q^\ve(t)  (\rho^\ve,\rho^\ve)\right) . 
 \]% \end{equation}
Bound \eqref{apriori2} then follows by noting that 
\[|(\nabla\rho^\ve,\nabla\phi^\ve(t))| \leq \|\nabla\rho^\ve\|\|\nabla\phi^\ve(t)\| = \ve^{-5/2} \|\nabla\rho\|\|\nabla\phi^\ve(t)\| \leq C\ve^{-5/2},\] by Proposition \ref{l:5.2}, and  that $\|\rho^\ve\|^2 \leq C \ve^{-3}$. \\

To get bound \eqref{apriori3} we take the derivative of Eq. \eqref{Iqve} and obtain
\begin{equation}\label{problema}
D^{1/2} q^\ve(t) =  
 - 4\pi \sqrt{\pi i}\gamma   |q^\ve(t)|^{2\mu} q^\ve(t)   +4\pi \sqrt{\pi i}  (\rho^\ve,U(t)\psi^\ve_0)  + \sum_{j=1}^3\frac{d}{dt}Y_j^\ve(t).
 \end{equation}
By \eqref{apriori1} the first term at the r.h.s. satisfies $\| |q^\ve|^{2\mu} q^\ve\|_{L^1(0,T)} \leq C$. 

For  the second term at the r.h.s of Eq. \eqref{problema} we use the decomposition \eqref{psi0ve}, which gives 
\[%\begin{equation}\label{5.31a}
| (\rho^\ve,U(t)\psi^\ve_0) | \leq | (\rho^\ve,U(t)\phi_0) |+|q_0|\,| (\rho^\ve,U(t) (\rho^\ve *G)) |.
\]%\end{equation}
By the unitarity of $U(t)$ and the second inequality in \eqref{2}, we have that 
\[ | (\rho^\ve,U(t)\phi_0) | \leq C\ve^{-1/2} \|\nabla\phi_0\| \leq C\ve^{-1/2}  ,\] recall that $\DD \subset \EE$ by Rem. \ref{r:EEDD}. 
Moreover one has 
\begin{equation}\label{5.35}
| (\rho^\ve,U(t) (\rho^\ve *G)) |  \leq \frac{C}{\sqrt t}. 
\end{equation}
The proof of this claim is postponed at the end.
From \eqref{5.35} we have 
\[\|(\rho^\ve,U(\cdot) \psi_0^\ve) \|_{L^1(0,T)} \leq \|(\rho^\ve,U(\cdot)\phi_0) \|_{L^1(0,T)}  + |q_0| \|(\rho^\ve,U(\cdot) (\rho^\ve *G)) \|_{L^1(0,T)} \leq  C\ve^{-1/2}.\] 

To conclude the proof of  bound \eqref{apriori3}  we need to  show that 
\begin{equation} \label{treno2}
\left\| \frac{d}{dt}Y_j^\ve(t)  \right\|_{L^1(0,T)}  \leq \frac{C}{\ve^{\frac12 + \de}} 
\end{equation}
 for all $\de>0$ and $j=1,2,3$. Since $\ve$ is small, it is enough to prove the inequality for $0<\de\leq1$.
We start the analysis with the term  $Y_1^\ve $ and prove that
\begin{equation} \label{treno2.1}
\left\| \frac{d}{dt}Y_1^\ve(t)  \right\|_{L^1(0,T)}  \leq \frac{C}{\ve^{\frac12 + \de}} .
\end{equation}
Taking the derivative in \eqref{Yve1} we get 
\[%\begin{equation}
\begin{aligned}
%\label{treno1}
\frac{d}{dt}Y_1^\ve(t) =  &   
   -(4\pi)^2  \sqrt{\pi i}  \int_0^t \dd \tau \dot q^\ve(t-\tau)    \int_0^\infty \dd k\, ( (\hat \rho(\ve k))^2 -(\hat\rho(0))^{2})\, e^{-ik^2 \tau}    
  \\  & - (4\pi)^2  \sqrt{\pi i}   q^\ve(0)    \int_0^\infty \dd k\, ( (\hat \rho(\ve k))^2 -(\hat\rho(0))^{2})\, e^{-ik^2t}   .
\end{aligned} 
\]%\end{equation}
We use the further bound (see below for the proof)
\begin{equation}\label{5.38}
\left|\int_0^\infty \dd k\, ( (\hat \rho(\ve k))^2 -(\hat\rho(0))^{2})\, e^{-ik^2t}   \right| \leq C \frac{\ve^\eta}{t^{(1+\eta)/2}} \qquad \forall \, 0\leq\eta < 1,
\end{equation}
which, together with bounds \eqref{apriori1} and \eqref{apriori2}, give 
\[%\begin{equation}
\left|\frac{d}{dt}Y_1^\ve(t) \right| \leq 
C\left( 
\frac{1}{\ve^{3/2-\eta}}  \int_0^t \dd \tau  \frac{1}{\tau^{(1+\eta)/2}} +   \frac{\ve^{\eta}}{t^{(1+\eta)/2}}   \right).
\]%\end{equation}
Bound \eqref{treno2.1} now follows by taking $\eta = 1-\de$ with $0<\de\leq1$ and using the fact that $t^{-(1+\eta)/2}$ is integrable in $[0,T]$.  \\

The term $Y_2^\ve$ can be studied in the same way as $Y_1^\ve$. It is enough to change $q^\ve$ in $|q^\ve|^{2\mu}q^\ve$, and use the bound 
\begin{equation}\label{andatura}
\left| \frac{d}{dt} |q^\ve(t)|^{2\mu} q^\ve(t) \right| \leq \frac{C}{\ve^{3/2}},
\end{equation}
 taking into account the factor $\ve$, one can see that 
\[
\left\| \frac{d}{dt}Y_2^\ve(t)  \right\|_{L^1(0,T)}  \leq C\ve^{\frac12-\de} .\]
To bound the  term $Y_3^\ve$ we take the derivative and obtain 
\[
\frac{d}{dt}Y_3^\ve(t) =    \gamma\frac{\ve}{\ell}  \int_0^t \dd \tau \frac{1}{\sqrt{ \tau}}\frac{d}{dt} |q^\ve(t-\tau)|^{2\mu} q^\ve(t-\tau)
+ \gamma\frac{\ve}{\ell}  \frac{|q^\ve(0)|^{2\mu} q^\ve(0)}{\sqrt{t}}.
\]
By using again   bound \eqref{andatura} and since   $1/\sqrt t$ is integrable one has 
\[
\left\| \frac{d}{dt}Y_3^\ve(t)  \right\|_{L^1(0,T)}  \leq \frac{C}{\ve^{1/2}}.
\]
And this concludes the proof of \eqref{treno2}. 

To close the proof it remains to get  bounds \eqref{5.35} and \eqref{5.38}. We start with the second one, noticing that by Eq. \eqref{working} one has $|(\hat \rho(k))^2 -(\hat\rho(0))^{2}|=|\hat \rho(k) -\hat\rho(0)||\hat \rho(k)+\hat\rho(0)| \leq C k^{\eta}$ for all $0\leq\eta \leq 2$. We change variables in the integral in \eqref{5.38},  
\[
\left|\int_0^\infty \dd k\, ( (\hat \rho(\ve k))^2 -(\hat\rho(0))^{2})\, e^{-ik^2t}   \right| = 
\frac{1}{\sqrt t}\left|\int_0^\infty \dd k\, ( (\hat \rho(\ve k\sqrt t))^2 -(\hat\rho(0))^{2})\, e^{-ik^2}   \right|,
\]
and distinguish two cases. If $\ve/\sqrt t>1$ then 
\[\left|\int_0^\infty \dd k\, ( (\hat \rho(\ve k\sqrt t))^2 -(\hat\rho(0))^{2})\, e^{-ik^2}   \right| 
\leq \frac{\sqrt t}{\ve}\int_0^\infty \dd k\,  (\hat \rho( k))^2 + (\hat\rho(0))^{2} \left|\int_0^\infty \dd ke^{-ik^2}   \right| \leq C \left(\frac{\ve}{\sqrt t}\right)^\eta  \]
for all $\eta\geq 0$. If $\ve/\sqrt t\leq 1$ we  use the identity $e^{-ik^2} = -\frac{1}{2ik} \frac{d}{dk} e^{-ik^2}$, and integrate by parts to obtain 
 \[\begin{aligned}
&  \left|  \int_0^\infty dk ( (\hat \rho(\ve k/\sqrt t ))^2 - (\hat \rho(0))^2) e^{-ik^2 } \right| =    \left|  \int_0^\infty dk  \frac{ ( (\hat \rho(\ve k/\sqrt t ))^2 - (\hat \rho(0))^2) }{2k} \frac{d}{dk} e^{-ik^2} \right| \\ 
   \leq &
C\left( \frac{\ve}{\sqrt t}  \int_0^\infty dk \frac{|\hat \rho(\ve k/\sqrt t )| \,   |\hat \rho'(\ve k/\sqrt t )|}{k} + 
    \int_0^\infty dk   \frac{|(\hat \rho(\ve k/\sqrt t ))^2- (\hat \rho(0))^2|}{k^2} \right) \\ 
 \leq  & C   \frac{\ve}{\sqrt t}  \left(   \int_0^\infty dk \frac{|\hat \rho( k )| \,   |\hat \rho'( k )|}{k} + 
   \int_0^\infty dk   \frac{|(\hat \rho(k ))^2- (\hat \rho(0))^2|}{k^2} \right)
\leq   C \left(\frac{\ve}{\sqrt t}\right)^\eta  , \end{aligned}\]
where we took $0\leq \eta \leq 1$.  And this concludes the proof of  bound \eqref{5.38}.
To prove the claim in \eqref{5.35} we note first that 
\[| (\rho^\ve,U(t) (\rho^\ve *G)  ) |  = 4\pi \left|  \int_0^\infty dk (\hat \rho(\ve k ))^2 e^{-ik^2 t} \right| \leq 
C\left|  \int_0^\infty dk\big( (\hat \rho(\ve k ))^2 - (\hat \rho(0 ))^2 \big)e^{-ik^2 t} \right| + \frac{C}{\sqrt t}
\]
and then proceed by using \eqref{5.38}.
\end{proof}

%%%%%%
%SECTION
%%%%%%
\section{Convergence to the limit dynamics \label{s:6}}

In this section we prove Th. 1, which states the convergence of $\psi^\ve(t)$ to $\psi(t)$ in the $L^2$-norm for initial data in the operator  domain.  The idea of the proof is to show that the regularized $q^\ve(t)$ converges to $q(t)$ in a suitable topology. We then use this result to prove the convergence of $\psi^\ve(t)$ to $\psi(t)$. 
To this end the following proposition has a central role. There we show that for initial data in $\DD$ the $L^2$-convergence  of $\psi^\ve$ to $\psi$ can be reduced to the convergence of the corresponding initial data in $L^2(\RE^3)$ and of $I^{1/2}q^\ve$ to $I^{1/2}q$ in $L^{\infty}(\RE)\ .$
%LEMMA
\begin{proposition}\label{l:psiconv}
Let $\psi_0=\phi_0+q_0G \in \DD$ and $\psi_0^\ve=\phi_0+q_0\rho^\ve*G$  and fix $T>0$. Then if $\gamma\geq 0$ and $\mu\geq 0$ or $\gamma<0$ and $0\leq\mu<1$, there exist two positive constants  $\ve_0$ and $C$  such that, for all  $0<\ve<\ve_0$, 
\[%\begin{equation}
\sup_{t\in[0,T]} \|\psi^\ve(t) - \psi(t)\| \leq C \left( \|\psi_0^\ve-\psi_0\|^{1/2}+\|I^{1/2}(q^\ve -q)\|_{L^\infty(0,T)}^{1/2} + \ve^{1/4}\right).
\]%\end{equation}
\end{proposition}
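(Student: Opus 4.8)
The plan is to compare the two Duhamel representations \eqref{solveps} and \eqref{solinhom} term by term. Subtracting them, and recalling that $\hat\rho(\oo)=(2\pi)^{-3/2}$ is exactly the Fourier symbol of the point source appearing in the limit kernel $U(t-s,\cdot)$, I would write
\[
\psi^\ve(t)-\psi(t)=U(t)(\psi_0^\ve-\psi_0)+A^\ve(t)+B^\ve(t)+C^\ve(t),
\]
where $A^\ve(t)=i\int_0^t(U(t-s)\rho^\ve)\,(q^\ve(s)-q(s))\,\ds$ is the charge-difference term, $B^\ve(t)=i\int_0^t\big[(U(t-s)\rho^\ve)-U(t-s,\cdot)\big]q(s)\,\ds$ is the error produced by replacing the form factor by the point source, and $C^\ve(t)=-i\ga\frac{\ve}{\ell}\int_0^t(U(t-s)\rho^\ve)\,|q^\ve(s)|^{2\mu}q^\ve(s)\,\ds$ is the genuinely nonlinear remainder carrying the explicit factor $\ve/\ell$. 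The first summand is immediate: by unitarity of $U(t)$ it equals $\|\psi_0^\ve-\psi_0\|$, which for $\ve$ small (so that $\|\psi_0^\ve-\psi_0\|\le1$ by Prop. \ref{p:initdata}) is bounded by $\|\psi_0^\ve-\psi_0\|^{1/2}$.

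I would dispose of $C^\ve$ and $B^\ve$ next, since neither needs the charge convergence. For $C^\ve$ I would use the uniform bound $\|q^\ve\|_{L^\infty(0,T)}\le C$ of Prop. \ref{l:5.2} together with a sharp $L^2$ estimate of the convolution $h\mapsto\int_0^t(U(t-s)\rho^\ve)h(s)\,\ds$; writing its squared norm in Fourier as $\int_{\RE^3}\dk\,|\hat\rho(\ve\kk)|^2\,|\int_0^t e^{-ik^2(t-s)}h(s)\,\ds|^2$ and exploiting the same Fresnel-type cancellations used for \eqref{5.35}, one gets a bound of order $\ve^{-1/2}\|h\|_{L^\infty}$, so the prefactor $\ve/\ell$ produces $\sup_t\|C^\ve(t)\|\le C\ve^{1/2}$. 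For $B^\ve$ the Fourier symbol of the kernel difference is $\hat\rho(\ve\kk)-\hat\rho(\oo)$, which by \eqref{working} is $O((\ve k)^\eta)$ for $0\le\eta\le2$; combining this gain of $\ve^\eta$ with the regularity of the limit charge ($q\in H^{5/4}(0,T)$ by Prop. \ref{l:wpq}) and optimising $\eta$ gives $\sup_t\|B^\ve(t)\|\le C\ve^{1/4}$.

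The heart of the matter is $A^\ve$, which must be controlled by $\|I^{1/2}(q^\ve-q)\|_{L^\infty(0,T)}^{1/2}$. I would compute $\|A^\ve(t)\|^2$ as the Gram form
\[
\|A^\ve(t)\|^2=\int_0^t\!\!\int_0^t g(s)\,\overline{g(s')}\,(\rho^\ve,U(s-s')\rho^\ve)\,\dd s\,\dd s',\qquad g:=q^\ve-q,
\]
and bring in $I^{1/2}g=I^{1/2}(q^\ve-q)$ through the relation $g=\tfrac1\pi D^{1/2}I^{1/2}g$ and the time-convolution identity \eqref{this}. The guiding idea is to estimate the \emph{square} $\|A^\ve(t)\|^2$ by a quantity that is \emph{linear} in $\|I^{1/2}g\|_{L^\infty}$, which is exactly what yields the square root upon taking $\sqrt{\cdot}$: I would route one copy of $g$ through $I^{1/2}g$, gaining the small factor $\|I^{1/2}g\|_{L^\infty}$, while the other copy is absorbed by the uniform a priori bound $\|g\|_{L^\infty}\le C$ furnished by Prop. \ref{l:5.2} and Prop. \ref{l:wpq}.

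I expect this last step to be by far the main obstacle. The kernel $(\rho^\ve,U(s-s')\rho^\ve)$ behaves like $|s-s'|^{-3/2}$ as $\ve\to0$, so that $\|A^\ve(t)\|^2$ is morally a homogeneous $\dot H^{1/4}$ seminorm of $g$ — a quantity that the $L^\infty$ norm of $I^{1/2}g$ cannot control by itself; equivalently, the $L^2(\RE^3)$ weight $k^2\,\dk$ emphasises precisely the high frequencies on which $\|I^{1/2}g\|_{L^\infty}$ carries no information. I would therefore split the frequency integral at the scale $k\sim\ve^{-1}$ dictated by the form factor: on the low-frequency part I would use the half-integral control interpolated against $\|g\|_{L^\infty}\le C$, which produces the term $\|I^{1/2}(q^\ve-q)\|_{L^\infty}^{1/2}$, while on the high-frequency part I would exploit the oscillation of the phase $e^{-ik^2(t-s)}$ together with the decay of $\hat\rho(\ve\kk)$ and the charge bounds of Prop. \ref{l:5.4} (namely $\|\dot q^\ve\|_{L^\infty}\le C\ve^{-3/2}$ and $\|D^{1/2}q^\ve\|_{L^1}\le C\ve^{-1/2+\de}$) to extract a remainder of order $\ve^{1/4}$ uniformly in $t$. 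Collecting the four contributions and taking the supremum over $t\in[0,T]$ then yields the claimed estimate, the delicate point throughout being the reconciliation of the high-frequency $L^2$ weight with the merely averaged control provided by $I^{1/2}$.
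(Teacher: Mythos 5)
Your decomposition of the Duhamel difference is natural, and the terms you call $B^\ve$ and $C^\ve$ can indeed be handled along the lines you sketch. But the central step --- bounding $\|A^\ve(t)\|_{L^2(\RE^3)}$ by $\|I^{1/2}(q^\ve-q)\|_{L^\infty(0,T)}^{1/2}+\ve^{1/4}$ --- has a genuine gap, and you have in fact diagnosed it yourself: writing $g=q^\ve-q$, the quantity $\|A^\ve(t)\|^2$ is (after passing to Fourier variables and setting $\omega=k^2$) essentially $\int_0^{\ve^{-2}}|\omega|^{1/2}\,|\FF (\chi_{[0,t]}g)(\omega)|^2\,\dd\omega$, i.e.\ a truncated $\dot H^{1/4}$ seminorm of $g$. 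The repair you propose does not close. On low frequencies, the only way to bring in $I^{1/2}g$ is through the weight $|\omega|^{-1}$, which costs a factor $\sup_{\omega\leq\ve^{-2}}|\omega|^{3/2}=\ve^{-3}$ against $\|I^{1/2}g\|_{L^\infty}^2$; no interpolation of $\|I^{1/2}g\|_{L^\infty}$ (morally a negative-order norm) with $\|g\|_{L^\infty}$ can produce a \emph{positive}-order norm like $\dot H^{1/4}$, so the square root cannot appear this way. On high frequencies, the available a priori information ($\|\dot q^\ve\|_{L^\infty}\leq C\ve^{-3/2}$, $\|D^{1/2}q^\ve\|_{L^1}\leq C\ve^{-1/2+\de}$) gives at best $|\FF q^\ve(\omega)|\lesssim\ve^{-1/2+\de}|\omega|^{-1/2}$, and inserting this into the truncated integral still leaves a divergent factor. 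The direct $L^2$ estimate of $A^\ve$ simply cannot be run with only $\|I^{1/2}g\|_{L^\infty}$ and $\|g\|_{L^\infty}$ in hand.

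The missing idea in the paper is to never estimate $\|\psi^\ve(t)-\psi(t)\|$ directly. By conservation of mass for both flows,
\[
\|\psi^\ve(t)-\psi(t)\|^2\leq\big|\,\|\psi_0^\ve\|^2-\|\psi_0\|^2\big|+2\,|(\psi(t),\psi^\ve(t)-\psi(t))|,
\]
so it suffices to bound the \emph{pairing} of the difference against the fixed limit solution $\psi(t)=\phi(t)+q(t)G$. In that pairing your singular Gram kernel $(\rho^\ve,U(s-s')\rho^\ve)\sim|s-s'|^{-3/2}$ is replaced by the single function $s\mapsto(U(t-s)\bar\psi(t))(\oo)$, whose regular part lies in $\dot H^{3/4}(\RE)$ by Prop.~\ref{p:harvey}; hence $\frac{d}{dt}I^{1/2}$ of it is locally integrable, and the convolution identity \eqref{this} converts $\int_0^t g(s)f(t-s)\,\dd s$ into $\frac1\pi\int_0^t I^{1/2}g(s)\,\frac{d}{dt}I^{1/2}f(t-s)\,\dd s$, giving a bound \emph{linear} in $\|I^{1/2}(q^\ve-q)\|_{L^\infty(0,T)}$. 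The exponent $1/2$ in the statement then comes from undoing the square of the norm at the very end --- not from any interpolation. Everything else (the replacement of $\rho^\ve$ by the point source, the $\gamma$-term, the initial data) is estimated inside this inner product and contributes the $\|\psi_0^\ve-\psi_0\|$ and $\ve^{1/2}$ terms, whose square roots give the remaining two summands.
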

%PROOF
\begin{proof}We shall prove that 
\[
\sup_{t\in[0,T]}\|\psi^\ve(t) - \psi(t)\|^2 \leq C \left( \|\psi_0^\ve - \psi_0\|+ \|I^{1/2}(q^\ve   - q) \|_{L^\infty(0,T)}
  + \ve^{1/2}\right).
\]
We start by noting that, by the mass conservation, 
\[\|\psi^\ve(t) - \psi(t)\|^2 \leq \big| \|\psi_0^\ve \|^2 - \|\psi_0 \|^2\big| + 2 |(\psi(t) , \psi^\ve(t) - \psi(t))|.  \] 
For the first term at the r.h.s. we use the trivial bound $ \big| \|\psi_0^\ve \|^2 - \|\psi_0 \|^2\big|  \leq   \|\psi_0^\ve - \psi_0 \|   \big( \|\psi_0^\ve \| + \|\psi_0 \|\big) $. 

We are left to prove that 
\begin{equation}\label{orlando}
|(\psi(t) , \psi^\ve(t) - \psi(t))| \leq  C \left(  \|\psi_0^\ve - \psi_0\|+  \|I^{1/2}(q^\ve   - q) \|_{L^\infty(0,T)} + \ve^{1/2}\right),
\end{equation}for all $t\in[0,T]$. 
We start by using  identity  \eqref{solinhom} for  the solution $\psi(t)$ of the limit nonlinear flow, and  identity \eqref{solveps} for the solution $\psi^\ve(t)$ of the scaled nonlinear flow and we get 
\begin{equation}\begin{aligned}\label{black}
|(\psi(t) , \psi^\ve(t) - \psi(t))| \leq & |(\psi(t) , U(t)(\psi^\ve_0 - \psi_0))| 
+ \left| \int_0^t ds (\psi(t), q^\ve(s) U(t-s) \rho^\ve  - q(s)U(t-s,\cdot) ) \right|  \\ &+
|\gamma| \frac{\ve}{\ell}\left| \int_0^t ds  |q^\ve(s)|^{2\mu}q^\ve(s) (\psi(t), U(t-s) \rho^\ve) \right|. 
\end{aligned}\end{equation}
For the first term at the r.h.s of Eq. \eqref{black}, we use the Cauchy-Schwarz inequality, the  unitarity of the free evolution group and  the conservation of mass to get the bound 
\[ |(\psi(t) , U(t) (\psi^\ve_0 - \psi_0))| \leq  \|\psi_0\|\|\psi^\ve_0 - \psi_0\|. \] 
Hence, to conclude the proof of \eqref{orlando}, we show that 
\begin{equation} \label{black2}
\left| \int_0^t ds (\psi(t), q^\ve(s) U(t-s) \rho^\ve  - q(s)U(t-s,\cdot) ) \right| \leq  C \left(  \|I^{1/2}(q^\ve   - q) \|_{L^\infty(0,T)}+\ve^{1/2} \right), 
 \end{equation}
 and that 
 \begin{equation}\label{black3}
 \left| \int_0^t ds  |q^\ve(s)|^{2\mu}q^\ve(s) (\psi(t), U(t-s) \rho^\ve) \right| \leq  C\ . 
 \end{equation}

We start by proving  inequality \eqref{black2}. By adding and subtracting 
\[q^\ve(s)(\psi(t),U(t-s,\cdot)) = q^\ve(s)(U(t-s) \bar\psi(t))(\oo)\] in the integral we obtain
\[
\left| \int_0^t \dd s (\psi(t), q^\ve(s) U(t-s) \rho^\ve  - q(s)U(t-s,\cdot ) ) \right|  
\leq A^\ve(t) + R^\ve(t)\]
with 
\[
A^\ve(t) =    \left| \int_0^t \dd s (q^\ve(s) - q(s))  (U(t-s)  \bar \psi(t))(\oo)  \right| 
\]
and 
\[
R^\ve(t) =  \left| \int_0^t \dd s \, q^\ve(s) (\psi(t),  \big (U(t-s) \rho^\ve  - U(t-s,\cdot)) \big) \right| .
\]
We decompose the function $\psi(t)$ as 
$\psi(t) = \phi(t) + q(t) G $
where $\phi(t)\in \dot H^2(\RE^3)$ and  $q(t) \in \CO$. For the term $A^\ve(t)$ we obtain 
\[
A^\ve(t) \leq C \left( \left|\int_0^t \dd  s I^{1/2}(q^\ve -q)(s) \frac{d}{dt} I^{1/2} f (t-s) \right|+ |q(t)| \, |I^{1/2}(q^\ve -q)(t)| \right) 
\]
where we set $f(t) = (U(t)*\bar\phi(t))(\oo)$ and used the identities 
\begin{equation}
\label{5.43a}
\int_0^t \dd s  (q^\ve(s)   - q(s)) f(t-s) = \frac{1}{\pi} \int_0^t \dd s \,  I^{1/2}(q^\ve   - q)(s)  \,\frac{d}{dt} I^{1/2} f(t-s),
\end{equation}
see Eq.  \eqref{this},  and 
\begin{equation}\label{ray}
(U(t)G)(\oo) = \frac{1}{2\pi^2}\int_0^\infty \dd k  \, e^{-i k^2t} = \frac{1}{4\pi} \frac{1}{\sqrt{\pi i t}}.
\end{equation}
Since $\phi(t)\in \dot H^2(\RE^3)$, by Prop. \ref{p:harvey}, one has that  $f\in \dot H^{3/4}(\RE)$. We set $f_0 = f(0)$ and  $h = f - f_0$, so that $h\in \dot H^{3/4}(\RE)$ and $h(0) = 0$. We proceed as in the proof of Lemma \ref{l:wpq}, and define $ h_s:[0,2T]\to \CO$ by symmetrizing $h$ w.r.t. $T$ and $\tilde h: \RE \to \CO$ by prolonging $h_s$ to zero. By Lemma \ref{l:prolong}-$ii)$, we have that $\tilde h \in  H^{3/4}(\RE)$. Hence, by Lemma \ref{l:reg}, $I^{1/2}\tilde h\in \dot H^{5/4}(\RE)$ and $\frac{d}{dt}I^{1/2}\tilde h$ is in $ \dot H^{1/4}(\RE)$.\\ 
Since $\tilde h$ coincides with $h$ for all $t\in[0,T]$ in the integral in Eq. \eqref{5.43a} we can write $f=\tilde h+  f_0  $, so that 
\begin{align*}
\left|\int_0^t \dd  s I^{1/2}(q^\ve -q)(s) \frac{d}{dt}I^{1/2} f (t-s) \right| \leq & C \|I^{1/2}(q^\ve   - q) \|_{L^\infty(0,T)}\left( \left\|\frac{d}{dt}I^{1/2} \tilde h\right\|_{L^1(0,T)}  + |f_0| \sqrt T\right) \\ \leq &  C \|I^{1/2}(q^\ve   - q) \|_{L^\infty(0,T)}
\end{align*}
for all $t\in[0,T]$. Here  we used 
 \[
 \frac{d}{dt} I^{1/2} f_0 (t) =  \frac{f_0}{\sqrt t},
  \]
  and the fact that $\frac{d}{dt}I^{1/2}\tilde h$ is in $L^2_{loc}(\RE)$. We have proved that 
\[
A^\ve(t) \leq C \|I^{1/2}(q^\ve   - q) \|_{L^\infty(0,T)},\] for all $t\in[0,T]$. 
To conclude the proof of inequality \eqref{black2} it remains to show that 
\begin{equation} \label{smith} 
R^\ve(t)\leq C\ve^{1/2}.
\end{equation}  We start by noting that  
\[
R^\ve(t) \leq \|q^\ve\|_{L^\infty(0,T)}    \int_0^t \dd s \, \left| (\psi(t),  \big (U(t-s) \rho^\ve  - U(t-s,\cdot )) \big) \right| .  \]
Then we show that 
\begin{equation}\label{5.46}
\left| (\psi(t),  \big (U(t) \rho^\ve  - U(t,\cdot )) \big)\right|  \leq  C\left(1 + \frac{1}{t^{3/4}}\right)\ve^{1/2}
  \end{equation}
which in turn implies 
\[%\begin{equation}\label{5.46a}
 \int_0^t \dd s \, \left| (\psi(t),  \big (U(t-s) \rho^\ve  - U(t-s,\cdot )) \big)\right|  \leq  C\ve^{1/2}
 \]%\end{equation}
and consequently bound \eqref{smith} follows. To prove the claim in \eqref{5.46} we use again the decomposition $\psi(t) = \phi(t) + q(t) G$, so that 
  \[
  \left| (\psi(t),  \big (U(t) \rho^\ve  - U(t,\cdot )) \big) \right|   \leq  \widetilde R_{1}^\ve(t) + \widetilde R_{2}^\ve(t).
\] 
with 
\[
\widetilde R_{1}^\ve(t) =  \, \left| (\phi(t),  \big (U(t) \rho^\ve  - U(t,\cdot )) \big) \right| 
\]
and 
\[
\widetilde R_{2}^\ve(t) = |q(t)| \left| (G,  \big (U(t) \rho^\ve  - U(t,\cdot )) \big) \right| .
\]
We use the Fourier transform and find  
\[
\widetilde R_{1}^\ve(t) \leq   \left(  \int_{|\kk|\leq\ve^{-1}} \dk  |\hat\phi(t)(\kk)| \,  |\hat \rho(\ve k)  - \hat \rho(0)|  +  \int_{|\kk|>\ve^{-1}} \dk  |\hat\phi(t)(\kk)| \,  |\hat \rho(\ve k)  -\hat \rho(0)|  \right).
\]
Since $ |\hat \rho( k)  -\hat \rho(0)| \leq C k^\eta$, for all $0\leq\eta\leq 2$, then  
\[
 \int_{|\kk|\leq\ve^{-1}} \dk  |\hat\phi(t)(\kk)| \,  |\hat \rho(\ve k)  -\hat \rho(0)| \leq \ve^2  C  \int_{|\kk|\leq\ve^{-1}} \dk  |\hat\phi(t)(\kk)|  |\kk|^2 \leq C \|\Delta \phi(t)\| \ve^{1/2} 
\]
and    
\[
 \int_{|\kk|>\ve^{-1}} \dk  |\hat\phi(t)(\kk)| \,  |\hat \rho(\ve k)  -\hat \rho(0)|  \leq    \|\Delta\phi(t)\| \left(\int_{|\kk|>\ve^{-1}} \dk  \frac{ |\hat \rho(\ve k)  -\hat \rho(0)|^2}{|\kk|^4}\right)^{1/2}   \leq C \ve^{1/2},
\]
 hence $\widetilde R_{1}^\ve(t) \leq C \ve^{1/2}$. 

Next we analyze $\widetilde R_{2}^\ve(t)$. We have that   
\[ \widetilde R_{2}^\ve(t) = \frac{|q(t)|}{2\pi^2} \left|   \int_0^\infty \dd k \, e^{-ik^2(t-s)}   (\hat \rho(\ve k)  - \hat \rho(0))\right|. \] 
We use the bound 
\begin{equation}\label{5.47}
\left|\int_0^\infty \dd k\, ( \hat \rho(\ve k) -\hat\rho(0))\, e^{-ik^2t}   \right| \leq C \frac{\ve^\eta}{t^{(1+\eta)/2}} \qquad \forall 0\leq \eta < 1,
\end{equation}
that can be proved exactly in the same way as bound \eqref{5.38}, setting $\eta =1/2 $ we get 
$ \widetilde R_{2}^\ve(t) \leq C \ve^{1/2}/t^{3/4}$. And this concludes the proof of \eqref{5.46}. 

\n
It remains to  prove \eqref{black3}. We write 
\begin{multline*}
  \left| \int_0^t ds  |q^\ve(s)|^{2\mu}q^\ve(s) (\psi(t), U(t-s) \rho^\ve) \right|   \\
  \leq 
     \left| \int_0^t ds  (\psi(t),  |q^\ve(s)|^{2\mu}q^\ve(s)  U(t-s) \rho^\ve -   |q(s)|^{2\mu}q(s)  U(t-s,\cdot )) \right|  \\ 
      +
   \left| \int_0^t ds  |q(s)|^{2\mu}q(s)   (\psi(t), U(t-s,\cdot ))  \right|.
  \end{multline*}
By using the same argument used in the proof of \eqref{black2} it is possible to show that
\[
\begin{aligned}
  &  \left| \int_0^t ds  (\psi(t),  |q^\ve(s)|^{2\mu}q^\ve(s)  U(t-s) \rho^\ve -   |q(s)|^{2\mu}q(s)  U(t-s,\cdot )) \right|     \\ 
  \leq & C \left(\|I^{1/2}(|q^\ve|^{2\mu}q^\ve - |q|^{2\mu}q)\|_{L^\infty(0,T)} +\ve^{1/2}\right) \leq C.
\end{aligned}
\]
Moreover $(\psi(t), U(t,\cdot))   = (U(t)\bar \psi(t)) (\oo)  $ is integrable in $[0,T]$ because $ (U(\cdot)\bar \phi(t)) (\oo) \in \dot H^{3/4}(0,T)$ and $ (U(t)G) (\oo)$  is integrable by Eq. \eqref{ray}. Hence
\[
\left| \int_0^t ds  |q(s)|^{2\mu}q(s)   (\psi(t), U(t-s,\cdot ))  \right| \leq C. 
\]
Inequality \eqref{black3} follows by taking into account these two bounds and the factor $\ve$ in \eqref{black3}.
\end{proof}
To prove Th.\ref{t:main} we need to show that $\|I^{1/2}(q^\ve -q)\|_{L^\infty(0,T)} $ goes to zero with $\ve$, which is done in Lemma \ref{l:IDelta} below. Here we start by finding a representation  for  $I^{1/2}(q^\ve -q)$, easily obtained using  \eqref{limit2} and \eqref{Iqve}. We apply the operator $I^{1/2}$ to Eq. \eqref{limit2} and obtain 
\begin{equation}\label{yummy}
I^{1/2}q(t)+4\pi \sqrt{\pi i}\, \ga \int_0^t \ds |q(s)|^{2\mu} q(s) =  4 \pi\sqrt{\pi i} \int_0^t \ds(U(s)\psi_0)(\oo). 
\end{equation}
Taking the difference between Eqs. \eqref{yummy} and \eqref{Iqve} we obtain 
\begin{equation}\label{yummy0}
I^{1/2} (q^\ve - q)(t)  +4\pi \sqrt{\pi i}\, \ga \int_0^t \ds ( |q^\ve(s)|^{2\mu} q^\ve(s) - |q(s)|^{2\mu} q(s) ) ={Y}^\ve(t). 
\end{equation}
with 
\[
 Y^\ve(t) = \sum_{j=1}^4 Y_j^\ve(t)  ,
\]
where $Y_1^\ve(t)$,  $Y_2^\ve(t)$, and $Y_3^\ve(t)$  were given in  Eqs. \eqref{Yve1} -  \eqref{Yve3}, and $Y_4^\ve(t)$ is defined as 
\begin{equation}\label{Yve4}Y_4^\ve(t)   = 4\pi \sqrt{\pi i}\left( \int_0^t \dd s \, (\rho^\ve,U(s)\psi^\ve_0) - \int_0^t \ds(U(s)\psi_0)(\oo) \right) . \end{equation}

To prove Lemma \ref{l:IDelta}  we shall need the following bounds on the remainder $Y^\ve.$
%LEMMA
\begin{lemma}\label{l:Yve}Let $\psi_0=\phi_0+q_0G \in \DD$ and $\psi_0^\ve=\phi_0+q_0\rho^\ve*G$ and fix $T>0$. Then if $\gamma\geq 0$ and $\mu\geq 0$ or $\gamma<0$ and $0\leq\mu<1$, there exist two positive constants  $\ve_0$ and $C$  such that, for all  $0<\ve<\ve_0$, 
\begin{equation}\label{bounderrore}
\|{Y}^{\ve}\|_{L^{\infty}(0,T)} \leq C \ve^{1/2}  
\end{equation}
and 
\begin{equation}\label{boundD1/2errore}
\| D^{1/2}{Y}^{\ve}(t)\|_{L^1(0,T)} \leq C\ve^{\delta}  
\end{equation}
for all $0\leq \de <1/2$ with the constant $C$ in \eqref{boundD1/2errore} depending on $\de$.
\end{lemma}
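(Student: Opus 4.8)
The plan is to estimate the four pieces of $Y^\ve=\sum_{j=1}^4 Y_j^\ve$ one at a time, exploiting the fact that $Y_1^\ve,Y_2^\ve,Y_3^\ve$ are causal convolutions of $q^\ve$ (resp. $|q^\ve|^{2\mu}q^\ve$) against the two kernels $K^\ve(t):=\int_0^\infty\dd k\,\big((\hat\rho(\ve k))^2-(\hat\rho(0))^2\big)e^{-ik^2t}$ and $t^{-1/2}$, while $Y_4^\ve=4\pi\sqrt{\pi i}\,\int_0^t W^\ve(s)\,\ds$ is a primitive of $W^\ve(s):=(\rho^\ve,U(s)\psi_0^\ve)-(U(s)\psi_0)(\oo)$. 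The only inputs are the a priori bounds \eqref{apriori1}, \eqref{apriori2}, \eqref{apriori3}, \eqref{andatura}, and the oscillatory estimate \eqref{5.38}, which I read as $|K^\ve(t)|\le C\ve^\eta t^{-(1+\eta)/2}$ for every $0\le\eta<1$; integrating in $t$ this yields $\|K^\ve\|_{L^1(0,T)}\le C\ve^\eta$ and $\sup_t\big|\int_0^t K^\ve(s)\,\ds\big|\le C\ve^\eta$ for every $\eta<1$.

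For the sup bound \eqref{bounderrore} I would use $\|q^\ve\|_\infty\le C$ from \eqref{apriori1}: Young's inequality gives $\|Y_1^\ve\|_\infty\le C\|K^\ve\|_{L^1}\le C\ve^{1/2}$ (taking $\eta=1/2$), while the prefactor $\ve/\ell$ makes $\|Y_2^\ve\|_\infty\le C\ve$ and, since $\int_0^t(t-\tau)^{-1/2}\dd\tau\le2\sqrt T$, $\|Y_3^\ve\|_\infty\le C\ve$. For $Y_4^\ve$ I split $\psi_0=\phi_0+q_0G$. Written in Fourier, the regular part of $W^\ve$ is $\int\dk(\hat\rho(\ve\kk)-\hat\rho(0))e^{-ik^2 s}\hat\phi_0(\kk)$, whose modulus is bounded uniformly in $s$ by $C\ve^{1/2}$ after splitting the $\kk$-integral at $|\kk|=\ve^{-1}$, using $|\hat\rho(\ve k)-\hat\rho(0)|\le C(\ve k)^\eta$ with $\eta>1/2$ on low frequencies and $|\hat\rho(\ve k)-\hat\rho(0)|\le C$ on high ones, together with $\Delta\phi_0\in L^2$ and Cauchy--Schwarz; integrating in $s$ keeps the factor $\ve^{1/2}$. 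By \eqref{ray} and $\hat\rho(0)=(2\pi)^{-3/2}$, a direct Fourier computation shows that the singular part of $W^\ve$ equals $4\pi q_0 K^\ve(s)$, so its primitive is $\le C\ve^{1/2}$. Summing the four contributions gives $\|Y^\ve\|_\infty\le C\ve^{1/2}$.

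The half-derivative bound \eqref{boundD1/2errore} is the core of the argument, and the decisive device is that $D^{1/2}$ commutes with causal convolution. The simplest term is $Y_3^\ve=\ga\frac{\ve}{\ell}I^{1/2}[|q^\ve|^{2\mu}q^\ve]$, so $D^{1/2}Y_3^\ve=\pi\ga\frac{\ve}{\ell}|q^\ve|^{2\mu}q^\ve$ and \eqref{apriori1} gives $\|D^{1/2}Y_3^\ve\|_{L^1}\le C\ve$. For $Y_1^\ve=c_1\,q^\ve*K^\ve$ (with $c_1=-(4\pi)^2\sqrt{\pi i}$), the fact that $(I^{1/2}q^\ve)(0)=0$ and that $q^\ve$ is regular enough (Lipschitz by \eqref{apriori2}, with $D^{1/2}q^\ve\in L^1$ by \eqref{apriori3}) lets me write $D^{1/2}Y_1^\ve=c_1\,(D^{1/2}q^\ve)*K^\ve$, whence Young's inequality and \eqref{apriori3} yield $\|D^{1/2}Y_1^\ve\|_{L^1}\le C\|D^{1/2}q^\ve\|_{L^1}\|K^\ve\|_{L^1}\le C\ve^{\eta-1/2-\delta_1}$, where $\delta_1>0$ is the loss in \eqref{apriori3}. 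For $Y_2^\ve$ the same commutation together with the crude bound $\|D^{1/2}[|q^\ve|^{2\mu}q^\ve]\|_{L^1}\le C\ve^{-3/2}$, obtained from the identity $D^{1/2}f(t)=f(0)t^{-1/2}+I^{1/2}\dot f(t)$ and \eqref{andatura}, gives (thanks to the extra factor $\ve$) $\|D^{1/2}Y_2^\ve\|_{L^1}\le C\ve^{\eta-1/2}$. Finally, since $D^{1/2}I^{1}=I^{1/2}$ (semigroup property), one has $D^{1/2}Y_4^\ve=4\pi\sqrt{\pi i}\,I^{1/2}W^\ve$, and Young's inequality gives $\|D^{1/2}Y_4^\ve\|_{L^1}\le C\|W^\ve\|_{L^1}\le C\ve^{1/2}$, the last step reusing the bounds on $W^\ve$ from the previous paragraph.

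The only genuine obstacle is the exponent bookkeeping in $Y_1^\ve$, together with the rigorous justification of the commutation $D^{1/2}(q^\ve*K^\ve)=(D^{1/2}q^\ve)*K^\ve$ (which follows by writing $D^{1/2}=\frac{d}{dt}I^{1/2}$, moving $I^{1/2}$ onto $q^\ve$ by Fubini, and noting that the boundary term $(I^{1/2}q^\ve)(0)K^\ve(t)$ vanishes). Here the \emph{large} factor $\ve^{-1/2-\delta_1}$ coming from $\|D^{1/2}q^\ve\|_{L^1}$ must be absorbed by the \emph{smallness} $\ve^\eta$ of $\|K^\ve\|_{L^1}$: given a target $\delta\in[0,1/2)$ I would choose $\delta_1=(1/2-\delta)/2$ and $\eta=3/4+\delta/2<1$, so that $\eta-1/2-\delta_1=\delta$ and simultaneously $\eta-1/2\ge\delta$. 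This both produces the claimed rate $\ve^\delta$ and explains the restriction $\delta<1/2$, since \eqref{5.38} forbids taking $\eta=1$. Collecting the four terms gives $\|D^{1/2}Y^\ve\|_{L^1(0,T)}\le C\ve^\delta$, which together with \eqref{bounderrore} completes the proof.
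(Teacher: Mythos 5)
Your proof is correct, but for the core terms $Y_1^\ve$ and $Y_2^\ve$ it takes a genuinely different and noticeably shorter route than the paper's. The paper computes $I^{1/2}Y_1^\ve$ explicitly: it exchanges the time integrals, rescales $k\to p=k\sqrt{t-\tau}$, integrates by parts in $p$ via $e^{-ip^2}=-\tfrac{1}{2ip}\tfrac{d}{dp}e^{-ip^2}$, and splits the result into a cascade of sub-terms ($B_1,B_2,B_3$, then $B_{31},B_{32}$, then $\BB_1,\dots,\BB_4$), each of which is differentiated and estimated using \eqref{working}, \eqref{working2} and the a priori bounds; the key estimate \eqref{apriori3} enters only through the single term $\dot B_1^\ve=C\ve\, D^{1/2}q^\ve$. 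You instead exploit the convolution structure $Y_1^\ve=c_1\,q^\ve*K^\ve$ with $K^\ve(t)=\int_0^\infty\dd k\,\big((\hat\rho(\ve k))^2-(\hat\rho(0))^2\big)e^{-ik^2t}$: since $I^{1/2}$ is itself a causal convolution, $D^{1/2}Y_1^\ve=c_1\,(D^{1/2}q^\ve)*K^\ve$ --- the commutation is legitimate exactly as you indicate, because $I^{1/2}q^\ve$ is absolutely continuous with derivative $D^{1/2}q^\ve\in L^1$ and vanishes at $t=0$, so the boundary term drops --- and Young's inequality on $(0,T)$ reduces everything to $\|K^\ve\|_{L^1(0,T)}\le C\ve^\eta$ (obtained by integrating \eqref{5.38}, with $\eta<1$ forced by integrability of $t^{-(1+\eta)/2}$) together with \eqref{apriori3}. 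Your exponent bookkeeping $\eta-1/2-\delta_1=\de$ with $\eta=3/4+\de/2<1$ is exactly right and traces the restriction $\de<1/2$ to the same two sources as in the paper. The treatment of $Y_3^\ve$, $Y_4^\ve$ and of the sup bound \eqref{bounderrore} is essentially the paper's (your Fourier identification of the singular part of $W^\ve$ with $4\pi q_0K^\ve$ is a slightly cleaner decomposition than the paper's splitting of $Y_4^\ve$, but relies on the same estimates \eqref{5.38}, \eqref{5.46} and \eqref{fangs}). What your approach buys is brevity and a clean separation of the two inputs (smallness of the kernel in $L^1$, and the a priori $L^1$ bound on $D^{1/2}q^\ve$); the price is that you lean on the full strength of \eqref{apriori3} for all of $Y_1^\ve$, whereas the paper's longer computation handles most sub-terms with only the cruder bounds \eqref{apriori1}--\eqref{apriori2}. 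Both arguments yield the same final rate.
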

%PROOF
\begin{proof}
We start with the proof of   bound \eqref{bounderrore}. We  use  
\[
\|Y^\ve\|_{L^{\infty}(0,T)}\leq   \sum_{j=1}^4 \| Y_j^\ve\|_{L^{\infty}(0,T)}\] and bound the sum term by term. 
For  $ \| Y_1^\ve\|_{L^{\infty}(0,T)}$ we have that 
\[
\|Y_1^\ve\|_{L^{\infty}(0,T)} \leq C \int_0^T \dd \tau\left|     \int_0^\infty \dd k\, ( (\hat \rho(\ve k))^2 -(\hat\rho(0))^{2})\, e^{-ik^2(t-\tau)} \right| \leq C \ve^\eta 
\]
for all $0\leq \eta <1$, where we used  bounds \eqref{apriori1} and \eqref{5.38}. With a similar argument we find  
\[
\|Y_2^\ve\|_{L^{\infty}(0,T)} \leq C \ve  \int_0^T \dd \tau    \left|  \int_0^\infty \dd k\, ( (\hat \rho(\ve k))^2 -(\hat\rho(0))^{2})\, \, e^{-ik^2(t-\tau)} \right| \leq C \ve^{\eta+1}
\]for all $0\leq\eta<1$. 
Moreover, using again bound \eqref{apriori1}, we have that 
\[\|Y_3^\ve\|_{L^{\infty}(0,T)} \leq C\ve  .\]
Next we find a bound for $Y_4^\ve$. We start by rewriting 
\begin{equation}\label{5.51}
Y_4^\ve(t) =4\pi \sqrt{\pi i}\left(  \int_0^t \dd s \, (\rho^\ve,U(s)(\psi^\ve_0 - \psi_0))   + \int_0^t \dd s \, \left((\rho^\ve,U(s)\psi_0) - (U(s)\psi_0)(\oo) \right) \right).  
\end{equation}
Hence
\[\|Y_4^\ve\|_{L^{\infty}(0,T)} \leq C \left(  \int_0^T \dd s \, |(\rho^\ve,U(s)(\psi^\ve_0 - \psi_0))|   + \int_0^T \dd s \, |(\rho^\ve,U(s)\psi_0) - (U(s)\psi_0)(\oo)| \right) .  
\]Recalling that $\rho^\ve$  is real one has that the second term in the latter identity can be written as  
\[ (\rho^\ve,U(s)\psi_0) - (U(s)\psi_0)(\oo)  = \left(\bar\psi_0, U(s) \rho^\ve- U(s,\cdot ) \right), \] hence by  \eqref{5.46}  it follows that  
\[ \int_0^T \dd s \, \left|(\rho^\ve,U(s) \psi_0) - (U(s) \psi_0)(\oo) \right| \leq C\ve^{1/2} . \] 
To bound the first term in Eq. \eqref{5.51} we note the identity  
\[\begin{aligned} &   (\rho^\ve,U(t)(\psi^\ve_0 - \psi_0))  = q_0  (\rho^\ve,U(t) (\rho^\ve * G - G)) \\ 
= & 4\pi q_0  \int_0^\infty \dd k e^{-ik^2 t} \left((\hat\rho(\ve k))^2 - (\hat\rho(0))^2\right) 
- 4\pi q_0 \hat\rho(0)  \int_0^\infty \dd k e^{-ik^2 t} \left(\hat\rho(\ve k) - \hat\rho(0)\right) . \end{aligned}\]
Hence, by using  bounds \eqref{5.38}  and \eqref{5.47} one can prove that 
\begin{equation}
\label{fangs}
| (\rho^\ve,U(t) (\psi^\ve_0 - \psi_0))| \leq C \frac{\ve^\eta}{t^{(1+\eta)/2}} \qquad \forall\, 0\leq \eta < 1,
\end{equation}which implies 
\[  \int_0^T \dd s \, \left|(\rho^\ve,U(s)(\psi^\ve_0 - \psi_0))\right| \leq C \ve^\eta \] for all $0\leq\eta<1$. Hence \begin{equation}\label{boundYve4}
\|Y_4^\ve\|_{L^{\infty}(0,T)} \leq C \ve^{1/2}. \end{equation}\\
\noindent
Next  we prove bound \eqref{boundD1/2errore}. \\
Again we use $ \|D^{1/2} Y^\ve\|_{L^1(0,T)}\leq  \sum_{j=1}^4\|D^{1/2} Y_j^\ve\|_{L^1(0,T)}$, and  bound the sum  term by term.  \\
We start the analysis from the term $Y_1^\ve(t)$,  we shall prove that 
\begin{equation}\label{sorrowful}  \|D^{1/2} Y_1^\ve\|_{L^1(0,T)}\leq C\ve^{\de} \qquad \forall\,0\leq\de<\frac12.  \end{equation}  
To begin with we compute $I^{1/2}Y_1^\ve$.  We have that 
\[\begin{aligned}
I^{1/2}Y_1^\ve(t) = &C  \int_0^t \dd s \frac{1}{\sqrt{t-s}} \int_0^s \dd \tau q^\ve(\tau)    \int_0^\infty \dd k\, ( (\hat \rho(\ve k))^2 - (\hat \rho(0))^2)\, e^{-ik^2(s-\tau)} \\ 
= &C\int_0^t \dd \tau q^\ve(\tau)  \int_0^\infty \dd k  ( (\hat \rho(\ve k))^2 - (\hat \rho(0))^2)\, e^{ik^2\tau}   \int_\tau^t \dd s \frac{1}{\sqrt{t-s}}    e^{-ik^2s} \\
= &C \int_0^t \dd \tau q^\ve(\tau)  \int_0^\infty \dd k  ( (\hat \rho(\ve k))^2 - (\hat \rho(0))^2 ) \, \frac{e^{-ik^2(t-\tau)}}{k}   \int_0^{k\sqrt{t-\tau}} \dd \xi \,  e^{i\xi^2} 
\end{aligned}\]
where we exchanged order of integration and used the change of variables $s \to \xi = k \sqrt{t-s}$.  Next we use the  change of variables $k\to p = k\sqrt{t-\tau}$, the identity $e^{-ip^2} = -\frac{1}{2ip} \frac{d}{dp} e^{-ip^2}$ and  integrate by parts, changing also $\tau \to t-\tau $, we  obtain 
\[\begin{aligned}
I^{1/2}Y_1^\ve(t) 
= &C   \int_0^t \dd \tau q^\ve(t-\tau)  \int_0^\infty \dd p  ( (\hat \rho(\ve p/ \sqrt{\tau}))^2 - (\hat \rho(0))^2 ) \, \frac{ e^{-ip^2}}{p}   \int_0^{p} \dd \xi \,  e^{i\xi^2}  \\ 
= & C    \int_0^t \dd \tau q^\ve(t-\tau)  \int_0^\infty \dd p  \frac{ (\hat \rho(\ve p/ \sqrt{\tau}))^2 - (\hat \rho(0))^2 }{p^2} \, \frac{d}{dp} e^{-ip^2}   \int_0^{p} \dd \xi \,  e^{i\xi^2}  \\ 
=& \sum_{j=1}^3 B_j^\ve(t)\\ 
\end{aligned}\]
where 
\[\begin{aligned} 
B_1^\ve(t) &=C \int_0^t \dd \tau q^\ve(t-\tau)  \int_0^\infty \dd p  \frac{ (\hat \rho(\ve p/ \sqrt{\tau}))^2 - (\hat \rho(0))^2 }{p^2} \\
B_2^\ve(t) &=C  \int_0^t \dd \tau q^\ve(t-\tau)  \int_0^\infty \dd p  \frac{ (\hat \rho(\ve p/ \sqrt{\tau}))^2 - (\hat \rho(0))^2 }{p^3} \,  e^{-ip^2}   \int_0^{p} \dd \xi \,  e^{i\xi^2}  \\
B_3^\ve(t) &= C \ve \int_0^t \dd \tau \frac{q^\ve(t-\tau)}{\sqrt{\tau}}  \int_0^\infty \dd p  \frac{ \hat \rho'(\ve p/ \sqrt{\tau}) \hat \rho(\ve p/ \sqrt{\tau}) }{p} \, e^{-ip^2}   \int_0^{p} \dd \xi \,  e^{i\xi^2},
\end{aligned}\]
and where we used the fact that 
\[\lim_{p\to0}  \frac{ (\hat \rho(\ve p/ \sqrt{\tau}))^2 - (\hat \rho(0))^2 }{p^2} \, e^{-ip^2}   \int_0^{p} \dd \xi \,  e^{i\xi^2}  =0  .\]
Taking the derivative we find 
\[D^{1/2}Y_1^\ve(t) = \sum_{j=1}^3\dot B_j^\ve(t), \]
hence
\[\| D^{1/2}  {Y_1}^{\ve}\|_{L^1(0,T)} \leq \sum_{j=1}^3 \| \dot B_j^\ve\|_{L^1(0,T)}.\]
 We bound the r.h.s. term by term. \\
Term $B_1^\ve(t)$. By scaling the integral over $p$ we rewrite it as 
 \begin{equation}\label{dotB1ve} B_1^\ve(t) = C \ve \int_0^t \dd \tau \frac{q^\ve(t-\tau)}{\sqrt\tau} \ . 
  \end{equation}
Hence $\dot B_1^\ve(t)=  C  \ve   D^{1/2}q^\ve(t)$ and,   by Eq. \eqref{apriori3}, $ \| \dot B_1^\ve\|_{L^1(0,T)}\leq C\ve^{\de}$, for all $0\leq\de<1/2$.  \\
Term $B_2^\ve(t)$. Taking the derivative we get 
  \[\begin{aligned}
  \dot B_2^\ve(t) = & C\bigg(  \int_0^t \dd \tau \dot q^\ve(t-\tau)  \int_0^\infty \dd p  \frac{ (\hat \rho(\ve p/ \sqrt{\tau}))^2 - (\hat \rho(0))^2 }{p^3} \,  e^{-ip^2}   \int_0^{p} \dd \xi \,  e^{i\xi^2}  \\ 
  & + q^\ve(0)  \int_0^\infty \dd p  \frac{ (\hat \rho(\ve p/ \sqrt{t}))^2 - (\hat \rho(0))^2 }{p^3} \,  e^{-ip^2}   \int_0^{p} \dd \xi \,  e^{i\xi^2} \bigg). 
  \end{aligned}
  \]
  We use bounds \eqref{working},   \eqref{apriori2} and the fact that $\left| \int_0^{p} \dd \xi \,  e^{i\xi^2}\right| \leq C \min\{1,p\}$ and obtain
\begin{multline*}
\left|\int_0^t \dd \tau \dot q^\ve(t-\tau)  \int_0^\infty \dd p  \frac{ (\hat \rho(\ve p/ \sqrt{\tau}))^2 - (\hat \rho(0))^2 }{p^3} \,  e^{-ip^2}   \int_0^{p} \dd \xi \,  e^{i\xi^2}\right| \\ \leq  
 C \ve^{-3/2 } \int_0^t \dd \tau \left( \int_0^1 \dd p  \left(\frac{\ve p}{\sqrt{\tau}}\right)^\eta \frac1{p^2} + \int_1^\infty \dd p  \left(\frac{\ve p}{\sqrt{\tau}}\right)^\eta \frac{1}{p^3} \right) \\ 
\leq   C \ve^{\de } \int_0^t \dd \tau \frac{1}{\tau^{\frac34+\frac\de2}} \left( \int_0^1 \dd p \frac1{p^{\frac12-\de}} + \int_1^\infty \dd p   \frac{1}{p^{\frac32-\de}} \right)  \leq C\ve^\de
  \end{multline*}
  where we set $\eta = \frac32 +\de$ with $0\leq\de <1/2$. Similarly, using  bound   \eqref{apriori1} instead of  \eqref{apriori3}, we get  
\[  \left|q^\ve(0)\int_0^\infty \dd p  \frac{ (\hat \rho(\ve p/ \sqrt{t}))^2 - (\hat \rho(0))^2 }{p^3} \,  e^{-ip^2}   \int_0^{p} \dd \xi \,  e^{i\xi^2} \right|   \leq C\frac{\ve^{\frac32+\de}}{t^{\frac34+\frac\de2}}
  \]
hence, $ \| \dot B_2^\ve\|_{L^1(0,T)}\leq C\ve^{\de}$, for all $0\leq\de<1/2$.\\
The bound on the term $B_3^\ve(t)$ requires a more careful analysis, this is due to the fact that the measure $\dd p/p$ is invariant under scaling on $p$. As a first step we split the integral over $p$ for small and large values of $p$ and set 
\[
B_3^\ve(t)
= B_{31}^\ve(t) + B_{32}^\ve(t)
\]
with 
\[B_{31}^\ve(t) = C \ve \int_0^t \dd \tau \frac{q^\ve(t-\tau)}{\sqrt{\tau}}  \int_0^1 \dd p  \frac{ \hat \rho'(\ve p/ \sqrt{\tau}) \hat \rho(\ve p/ \sqrt{\tau}) }{p} \, e^{-ip^2}   \int_0^{p} \dd \xi \,  e^{i\xi^2},\]and
\[B_{32}^\ve(t) =  C \ve \int_0^t \dd \tau \frac{q^\ve(t-\tau)}{\sqrt{\tau}}  \int_1^\infty \dd p  \frac{ \hat \rho'(\ve p/ \sqrt{\tau}) \hat \rho(\ve p/ \sqrt{\tau}) }{p} \, e^{-ip^2}   \int_0^{p} \dd \xi \,  e^{i\xi^2}.\]
The term $B_{31}^\ve(t)$ can be bounded without performing additional manipulations. We take the derivative and obtain 
\[\begin{aligned}
 \dot B_{31}^\ve(t) =  & C\bigg( \ve \int_0^t \dd \tau \frac{\dot q^\ve(t-\tau)}{\sqrt{\tau}}  \int_0^1 \dd p  \frac{ \hat \rho'(\ve p/ \sqrt{\tau}) \hat \rho(\ve p/ \sqrt{\tau}) }{p} \, e^{-ip^2}   \int_0^{p} \dd \xi \,  e^{i\xi^2} \\ 
& + \ve \frac{q^\ve(0)}{\sqrt{t}}  \int_0^1 \dd p  \frac{ \hat \rho'(\ve p/ \sqrt{t}) \hat \rho(\ve p/ \sqrt{t}) }{p} \, e^{-ip^2}   \int_0^{p} \dd \xi \,  e^{i\xi^2}\bigg)
  \end{aligned}
  \]
  We use  bounds \eqref{working2},   \eqref{apriori2} and the fact that $\left| \int_0^{p} \dd \xi \,  e^{i\xi^2}\right| \leq Cp$ and obtain
\begin{multline*}
\left|  \ve \int_0^t \dd \tau \frac{\dot q^\ve(t-\tau)}{\sqrt{\tau}}  \int_0^1 \dd p  \frac{ \hat \rho'(\ve p/ \sqrt{\tau}) \hat \rho(\ve p/ \sqrt{\tau}) }{p} \, e^{-ip^2}   \int_0^{p} \dd \xi \,  e^{i\xi^2} \right| \\ 
\leq C  \ve^{-1/2} \int_0^t \dd \tau \frac{1}{\sqrt{\tau}} 
\int_0^1 \dd p  \left( \frac{\ve p}{ \sqrt{\tau}}\right)^\eta \leq C  \ve^{\de} \int_0^t \dd \tau \frac{1}{\tau^{\frac34+\frac\de2}}  
\leq C\ve^\de
\end{multline*}
where we set $\eta = \frac12+\de$ with $0\leq \de<1/2$. Similarly   
\[
\left|\ve \frac{q^\ve(0)}{\sqrt{t}}  \int_0^1 \dd p  \frac{ \hat \rho'(\ve p/ \sqrt{t}) \hat \rho(\ve p/ \sqrt{t}) }{p} \, e^{-ip^2}   \int_0^{p} \dd \xi \,  e^{i\xi^2}
\right|\leq  C  \frac{\ve^{\frac32+\de}}{t^{\frac34+\frac\de2}}  .\]
Hence  $ \| \dot B_{31}^\ve\|_{L^1(0,T)}\leq C\ve^{\de}$, for all $0\leq\de<1/2$. \\ 
To bound the term $B_{32}^\ve(t)$ we use again the identity $e^{-ip^2} = -\frac{1}{2ip} \frac{d}{dp} e^{-ip^2}$  and integrate
 by parts, we get  
 \[\begin{aligned}
B_{32}^\ve(t) =&   C\ve \int_0^t \dd \tau \frac{q^\ve(t-\tau)}{\sqrt{\tau}}  \int_1^\infty \dd p  \frac{ \hat \rho'(\ve p/ \sqrt{\tau}) \hat \rho(\ve p/ \sqrt{\tau}) }{p^2} \,   \frac{d}{dp} e^{-ip^2}   \int_0^{p} \dd \xi \,  e^{i\xi^2} \\ 
= &\BB_1^\ve(t)+ \BB_2^\ve(t) + \BB_3^\ve(t)+ \BB_4^\ve(t) 
 \end{aligned}\]
with 
\[\begin{aligned}
 \BB_1^\ve(t) =&C \ve \int_0^t \dd \tau \frac{q^\ve(t-\tau)}{\sqrt{\tau}}   \hat \rho'(\ve / \sqrt{\tau}) \hat \rho(\ve / \sqrt{\tau}) e^{-i}   \int_0^{1} \dd \xi \,  e^{i\xi^2},\\
\BB_2^\ve(t) = &C \ve \int_0^t \dd \tau \frac{q^\ve(t-\tau)}{\sqrt{\tau}}  \int_1^\infty \dd p  \frac{ \hat \rho'(\ve p/ \sqrt{\tau}) \hat \rho(\ve p/ \sqrt{\tau}) }{p^2} ,\\
\BB_3^\ve(t) = &C \ve \int_0^t \dd \tau \frac{q^\ve(t-\tau)}{\sqrt{\tau}}  \int_1^\infty \dd p  \frac{ \hat \rho'(\ve p/ \sqrt{\tau}) \hat \rho(\ve p/ \sqrt{\tau}) }{p^3}  e^{-ip^2}  \int_0^{p} \dd \xi \,  e^{i\xi^2} \\
\BB_4^\ve(t) =& C \ve^2 \int_0^t \dd \tau \frac{q^\ve(t-\tau)}{\tau}  \int_1^\infty \dd p  \frac{ \hat \rho''(\ve p/ \sqrt{\tau}) \hat \rho(\ve p/ \sqrt{\tau})  +( \hat \rho'(\ve p/ \sqrt{\tau}))^2}{p^2} \,  e^{-ip^2}   \int_0^{p} \dd \xi \,  e^{i\xi^2}
\end{aligned}
\]
For the terms $\BB_1^\ve(t)$, $\BB_2^\ve(t)$, and $\BB_3^\ve(t)$, one can take the derivative and use  bounds  \eqref{working2}, \eqref{apriori1}, \eqref{apriori2} and the bound $\left| \int_0^{p} \dd \xi \,  e^{i\xi^2}\right| \leq C $. We write down the details of the analysis for the term $\BB_2^\ve(t)$ only, the other two are similar. Taking the derivative one obtains 
\[\dot {\BB}_2^\ve(t) =    C\bigg(  \ve \int_0^t \dd \tau \frac{\dot q^\ve(t-\tau)}{\sqrt{\tau}}  \int_1^\infty \dd p  \frac{ \hat \rho'(\ve p/ \sqrt{\tau}) \hat \rho(\ve p/ \sqrt{\tau}) }{p^2} +  \ve \frac{q^\ve(0)}{\sqrt{t}}  \int_1^\infty \dd p  \frac{ \hat \rho'(\ve p/ \sqrt{t}) \hat \rho(\ve p/ \sqrt{t }) }{p^2}\bigg).  
\]
We have that 
\begin{multline*}
\left| 
\ve \int_0^t \dd \tau \frac{\dot q^\ve(t-\tau)}{\sqrt{\tau}}  \int_1^\infty \dd p  \frac{ \hat \rho'(\ve p/ \sqrt{\tau}) \hat \rho(\ve p/ \sqrt{\tau}) }{p^2}  \right| \leq 
C\ve^{-\frac12} \int_0^t \dd \tau \frac{1}{\sqrt{\tau}}  \int_1^\infty \dd p  \left(\frac{\ve p}{ \sqrt{\tau}}\right)^\eta \frac{1}{p^2}   \\ 
\leq 
C\ve^{\de} \int_0^t \dd \tau \frac{1}{\tau^{\frac34+\frac\de2}}  \int_1^\infty \dd p  \frac{1}{p^{\frac32-\de}}   \leq C\ve^\de
\end{multline*}
where we set $\eta = \frac12+\de$ with $0\leq\de<1/2$. Similarly  
\[
\left| \frac{q^\ve(0)}{\sqrt{t}}  \int_1^\infty \dd p  \frac{ \hat \rho'(\ve p/ \sqrt{t}) \hat \rho(\ve p/ \sqrt{t }) }{p^2} \right| \leq C  \frac{\ve^{\frac32+\de}}{t^{\frac34+\frac\de2}}.
\]
Hence $ \| \dot \BB_2^\ve\|_{L^1(0,T)}\leq C\ve^{\de}$, for all $0\leq\de<1/2$. Similarly,  $ \| \dot \BB_1^\ve\|_{L^1(0,T)}\leq C\ve^{\de}$ and $ \| \dot \BB_3^\ve\|_{L^1(0,T)}\leq C\ve^{\de}$.\\ 
Next we find a bound for  the term $\BB_4^\ve(t)$. Taking the derivative gives
\[
\begin{aligned}
\dot\BB_4^\ve(t) = & C\bigg(  \ve^2 \int_0^t \dd \tau \frac{\dot q^\ve(t-\tau)}{\tau}  \int_1^\infty \dd p  \frac{ \hat \rho''(\ve p/ \sqrt{\tau}) \hat \rho(\ve p/ \sqrt{\tau})  +( \hat \rho'(\ve p/ \sqrt{\tau}))^2}{p^2} \,  e^{-ip^2}   \int_0^{p} \dd \xi \,  e^{i\xi^2} \\ 
& + \ve^2 \frac{q^\ve(0)}{t}  \int_1^\infty \dd p  \frac{ \hat \rho''(\ve p/ \sqrt{t}) \hat \rho(\ve p/ \sqrt{t})  +( \hat \rho'(\ve p/ \sqrt{\tau}))^2}{p^2} \,  e^{-ip^2}   \int_0^{p} \dd \xi \,  e^{i\xi^2}\bigg).
\end{aligned}
\]
  We use  bound  \eqref{apriori2},  and the fact that $\left| \int_0^{p} \dd \xi \,  e^{i\xi^2}\right| \leq C$ and that $\frac1{p^2} \leq \frac{1}{p^{\frac12+\de}}$ for all $p>1$ and $0\leq\de<1/2$, hence 
 \begin{multline*}
\left| \ve^2 \int_0^t \dd \tau \frac{\dot q^\ve(t-\tau)}{\tau}  \int_1^\infty \dd p  \frac{ \hat \rho''(\ve p/ \sqrt{\tau}) \hat \rho(\ve p/ \sqrt{\tau})  +( \hat \rho'(\ve p/ \sqrt{\tau}))^2}{p^2} \,  e^{-ip^2}   \int_0^{p} \dd \xi \,  e^{i\xi^2} \right| \\ 
\leq C  \ve^{1/2} \int_0^t \dd \tau \frac{1}{\tau}  \int_1^\infty \dd p  \frac{ |\hat \rho''(\ve p/ \sqrt{\tau})| \, |\hat \rho(\ve p/ \sqrt{\tau})|  +( \hat \rho'(\ve p/ \sqrt{\tau}))^2}{p^{\frac12+\de}} \\ = 
C  \ve^{1/2} \int_0^t \dd \tau \frac{1}{\tau}  \left(\frac\ve{\sqrt\tau}\right)^{-\frac12+\de}\int_{\frac\ve{\sqrt\tau}}^\infty \dd p  \frac{ |\hat \rho''( p)| \, |\hat \rho(p)|  +( \hat \rho'( p))^2}{p^{\frac12+\de}} \\
\leq C  \ve^{\de} \int_0^t \dd \tau \frac{1}{\tau^{\frac34+\frac\de2}}\int_{0}^\infty \dd p  \frac{ |\hat \rho''( p)| \, |\hat \rho(p)|  +( \hat \rho'( p))^2}{p^{\frac12+\de}}\leq C  \ve^{\de}
 \end{multline*}
where we used the fact that   $(|\hat \rho''( p)| \, |\hat \rho(p)|  +( \hat \rho'( p))^2)p^{-(\frac12+\de)}$ is integrable.  With a similar argument  one can prove that 
\[
 \left| \ve^2 \frac{q^\ve(0)}{t}  \int_1^\infty \dd p  \frac{ \hat \rho''(\ve p/ \sqrt{t}) \hat \rho(\ve p/ \sqrt{t})  +( \hat \rho'(\ve p/ \sqrt{\tau}))^2}{p^2} \,  e^{-ip^2}   \int_0^{p} \dd \xi \,  e^{i\xi^2} \right|\leq C \frac{\ve^{\frac32+\de}}{t^{\frac34+\frac\de2}}.\]
 Hence $ \| \dot \BB_4^\ve\|_{L^1(0,T)}\leq C\ve^{\de}$ for all $0\leq\de<1/2$. And this concludes the proof of  bound \eqref{sorrowful}.  
 
The term $Y_2^\ve(t)$ is similar to the term $Y_1^\ve(t)$. One can perform exactly the same calculations just changing everywhere $q^\ve$ into $|q^\ve|^{2\mu}q^\ve$ and taking into account the  extra factor $\ve$.  All the terms obtained in this way, besides the one analogous to  $\dot B_1^\ve$, can be bounded exactly in the same way by taking into account bound \eqref{andatura}.  These terms indeed turn out to be smaller than the corresponding ones arising in the analysis of $Y_1^\ve$, due to the additional factor $\ve$. \\
For the term which is the analogous of $\dot B_1^\ve$ instead, one can use the definition of the operator $D^{1/2} $. Taking into account the extra $\ve$ factor, and Eq. \eqref{dotB1ve} it is easy to guess that the term analogous to $\dot B_1^\ve$ will be proportional to  
\[
\ve^2 \frac{d}{dt} \int_0^t  \dd \tau \frac{|q^\ve(t-\tau)|^{2\mu}q^\ve(t-\tau)}{\sqrt \tau}  = 
\ve^2 \int_0^t  \dd \tau   \frac{1}{\sqrt \tau}   \frac{d}{dt}|q^\ve(t-\tau)|^{2\mu}q^\ve(t-\tau) + \ve^2  \frac{|q^\ve(0)|^{2\mu}q^\ve(0)}{\sqrt t} .
\]
Hence by \eqref{apriori1} and $\eqref{andatura}$, 
\[
\left|\ve^2 \frac{d}{dt} \int_0^t  \dd \tau \frac{|q^\ve(t-\tau)|^{2\mu}q^\ve(t-\tau)}{\sqrt \tau} \right|   \leq C \left(
\ve^{\frac12} + \frac{\ve^2}{\sqrt t} \right).
\]
We conclude that the bound for $Y_2^\ve$ is 
\[
  \| D^{1/2} Y_2^\ve\|_{L^1(0,T)}\leq C\ve^{\frac12}. 
\]
The bound for the term $Y_3^\ve$ is easy. By an explicit calculation 
\[
D^{1/2}Y_3^\ve(t) =  \gamma\pi\frac{\ve}{\ell}  |q^\ve(t)|^{2\mu} q^\ve(t).
\]
Hence $|D^{1/2}Y_3^\ve(t) | \leq C \ve $ which in turn implies
\[
  \| D^{1/2} Y_3^\ve\|_{L^1(0,T)}\leq C\ve.
  \]
We are left to prove a bound on $Y_4^\ve$.  By a direct computation, see Eq. \eqref{Yve4}, one has that 
\[
D^{1/2}Y_4^\ve(t)   = 4\pi \sqrt{\pi i} \int_0^t \dd s  \frac{(\rho^\ve,U(s) \psi^\ve_0) - (U(s)\psi_0)(\oo)}{\sqrt{t-s}}. \] Reasoning like in the proof of bound \eqref{boundYve4} one can rewrite $D^{1/2}Y_4^\ve$ in a way similar to \eqref{5.51}, and obtain 
\[
D^{1/2}Y_4^\ve(t) = 4\pi \sqrt{\pi i}\left(   \int_0^t \dd s \, \frac{(\rho^\ve,U(s)(\psi^\ve_0 - \psi_0))}{\sqrt{t-s}}   + \int_0^t \dd s \, \frac{\left((\rho^\ve,U(s)\psi_0) - (U(s)\psi_0)(\oo)\right)}{\sqrt{t-s}}\right) .  
\]
Finally, by using bounds \eqref{fangs}, and \eqref{5.46} with $\psi_0$ instead of $\psi(t)$, it follows that 
\[|D^{1/2}Y_4^\ve(t)| \leq C\ve^{1/2}.\]
And this concludes the proof of bound \eqref{boundD1/2errore}. 
\end{proof}
In the following and last Lemma we control the half integral of the charge.
 %LEMMA
\begin{lemma}   \label{l:IDelta}
Let $\psi_0=\phi_0+q_0G \in \DD$ and $\psi_0^\ve=\phi_0+q_0\rho^\ve*G$ and fix $T>0$. Then if $\gamma\geq 0$ and $\mu\geq 0$ or $\gamma<0$ and $0\leq\mu<1$, and for all $0\leq\de<1/2$, there exist two positive constants  $\ve_0$ and $C$  such that, for all  $0<\ve<\ve_0$, 
\[%\begin{equation}\label{thousand} 
\| I^{1/2} (q^\ve -q) \|_{L^\infty(0,T)}  \leq  C \ve^{\de}. 
\]%\end{equation}
\end{lemma}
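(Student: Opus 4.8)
The plan is to read off a closed integral relation for $r^\ve := q^\ve - q$ directly from Eq.~\eqref{yummy0}, and then to exploit the two bounds on the remainder $Y^\ve$ from Lemma~\ref{l:Yve} in complementary ways: the $L^\infty$ bound will feed straight into the final estimate, while the $L^1$ bound on $D^{1/2}Y^\ve$ will be used to control $r^\ve$ in $L^1$ through a weakly singular Gronwall argument. First I would record the Lipschitz estimate for the nonlinearity. Since $q^\ve$ and $q$ are both bounded uniformly in $\ve$ on $[0,T]$ — by Proposition~\ref{l:5.2} and Theorem~\ref{t:wp} respectively — inequality \eqref{tame} gives
\[
g^\ve(s):=\big| |q^\ve(s)|^{2\mu}q^\ve(s) - |q(s)|^{2\mu}q(s) \big| \leq C\,|r^\ve(s)| \qquad \forall\, s\in[0,T],
\]
with $C$ independent of $\ve$, so that \eqref{yummy0} reads $I^{1/2}r^\ve(t) + 4\pi\sqrt{\pi i}\,\ga\int_0^t g^\ve(s)\ds = Y^\ve(t)$.

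The core step is to differentiate this identity. Applying $D^{1/2}=\frac{d}{dt}I^{1/2}$ and using $D^{1/2}I^{1/2}f = \pi f$ together with the identity $D^{1/2}\big(\int_0^{\cdot} f(s)\ds\big) = I^{1/2}f$ (both consequences of the semigroup property of fractional integration, and licit here because $q\in H^{5/4}(0,T)$ and $\dot q^\ve,\,D^{1/2}q^\ve$ are controlled in Section~\ref{s:5}), one obtains the pointwise relation
\[
\pi\, r^\ve(t) + 4\pi\sqrt{\pi i}\,\ga\, I^{1/2}g^\ve(t) = D^{1/2}Y^\ve(t).
\]
Combining this with the Lipschitz bound yields the weakly singular integral inequality
\[
|r^\ve(t)| \leq C\,\big|D^{1/2}Y^\ve(t)\big| + C\int_0^t \frac{|r^\ve(s)|}{\sqrt{t-s}}\ds.
\]

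Because the kernel $(t-s)^{-1/2}$ is locally integrable, the generalized (Mittag--Leffler) Gronwall lemma applies: $|r^\ve(t)|$ is bounded by $C|D^{1/2}Y^\ve(t)|$ plus the convolution of $|D^{1/2}Y^\ve|$ with an explicit resolvent kernel that is in $L^1(0,T)$ uniformly in $\ve$. Integrating over $[0,T]$ and using Young's inequality for convolutions then gives
\[
\|r^\ve\|_{L^1(0,T)} \leq C\,\|D^{1/2}Y^\ve\|_{L^1(0,T)} \leq C\ve^{\de},
\]
the last step by \eqref{boundD1/2errore}. Finally I would return to the undifferentiated identity: using $\|Y^\ve\|_{L^\infty(0,T)}\leq C\ve^{1/2}$ from \eqref{bounderrore}, the Lipschitz bound, and the $L^1$ control just obtained,
\[
\|I^{1/2}r^\ve\|_{L^\infty(0,T)} \leq \|Y^\ve\|_{L^\infty(0,T)} + C\,\|g^\ve\|_{L^1(0,T)} \leq C\ve^{1/2} + C\,\|r^\ve\|_{L^1(0,T)} \leq C\big(\ve^{1/2}+\ve^{\de}\big)\leq C\ve^{\de},
\]
since $\de<1/2$ and $\ve<1$. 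I expect the main obstacle to be the singular Gronwall step: one must justify the differentiation of \eqref{yummy0} at the required regularity and, above all, produce the Mittag--Leffler resolvent kernel for the fractional kernel $(t-s)^{-1/2}$ together with an $L^1(0,T)$ bound uniform in $\ve$, so that the passage from the pointwise inequality to the $L^1$ estimate on $r^\ve$ is rigorous.
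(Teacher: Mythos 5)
Your argument is correct, and it rests on the same pillars as the paper's proof: apply $D^{1/2}$ to \eqref{yummy0}, exploit the uniform Lipschitz bound on the nonlinearity coming from the $L^\infty$ bounds on $q^\ve$ (Prop.~\ref{l:5.2}) and $q$ (Th.~\ref{t:wp}), and feed in the two estimates of Lemma~\ref{l:Yve}. What differs is the packaging of the final resummation. The paper writes the differentiated relation as a fixed-point equation $\Theta^\ve = I^{1/2}(g^\ve\Theta^\ve)+\pi^{-1}D^{1/2}Y^\ve$ (with $g^\ve$ the bounded difference quotient of the nonlinearity), solves it by an explicit Neumann series, applies $I^{1/2}$ to that series, and bounds the $n$-th term in $L^\infty(0,T)$ by $C^n\|D^{1/2}Y^\ve\|_{L^1(0,T)}/\Gamma(\tfrac n2+1)$ via the iterated-kernel estimate of Gorenflo--Vessella; it never isolates an $L^1$ bound on $q^\ve-q$ and never returns to the undifferentiated equation. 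You instead reduce the differentiated relation to the scalar inequality $|r^\ve(t)|\le C|D^{1/2}Y^\ve(t)|+C\int_0^t (t-s)^{-1/2}|r^\ve(s)|\,\dd s$, invoke the singular Gronwall lemma --- whose resolvent kernel $\sum_{n\ge1}C^n\pi^{n/2}t^{n/2-1}/\Gamma(n/2)$ is precisely the paper's Neumann series in disguise and lies in $L^1(0,T)$ uniformly in $\ve$ because the Lipschitz constant is $\ve$-uniform --- to get $\|q^\ve-q\|_{L^1(0,T)}\le C\ve^\de$, and then substitute back into \eqref{yummy0} together with \eqref{bounderrore}. Both routes are rigorous at the same level: the a priori boundedness of $r^\ve$ legitimizes the Gronwall iteration, and the inversion $D^{1/2}I^{1/2}=\pi$ on continuous functions, plus $D^{1/2}\int_0^{\cdot}f = I^{1/2}f$, are used identically (and with the same light justification) in the paper. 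Your version has the mild advantages of producing the intermediate estimate $\|q^\ve-q\|_{L^1(0,T)}\le C\ve^\de$ explicitly and of delegating the combinatorics to a standard off-the-shelf lemma, while the paper's version avoids introducing the resolvent kernel by estimating the iterates directly; the obstacle you flag at the end (justifying the differentiation and the uniform $L^1$ bound on the resolvent kernel) is exactly where the work lies, and it is handled in both approaches by the same $\Gamma$-function summation.
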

%PROOF
\begin{proof}
Set $\Theta^\ve(t) =  (q^\ve - q)(t) $ and 
\[
g^{\ve}(s) =- 4\sqrt{\pi i} \frac{ |q^\ve(s)|^{2\mu} q^\ve(s) - |q(s)|^{2\mu} q(s) }{q^\ve(s) - q(s)}. 
\]
Since $||a|^{2\mu} a - |b|^{2\mu}b| \leq C(|a|^{2\mu}  + |b|^{2\mu} ) |a-b| $, for any $a,b\in\CO$, we have that $\|g^\ve\|_{L^\infty(0,T)} \leq C$. By Eq. \eqref{yummy0}  we have that 
\[
I^{1/2} \Theta^\ve (t) =  \pi \int_0^t \ds g^\ve(s)\Theta^\ve(s)+ {Y}^\ve(t). 
\]
Applying $D^{1/2}$ we obtain the identity 
\begin{equation}\label{eqDeve}
 \Theta^\ve(t) =  (I^{1/2}g^\ve\Theta^\ve)(t)+ \pi^{-1}D^{1/2} {Y}^\ve(t). 
\end{equation}
The solution of Eq. \eqref{eqDeve} is given by 
\[
\Theta^\ve(t) = \pi^{-1} \sum_{n=0}^\infty\left( (I^{1/2}g^\ve)^nD^{1/2}  Y^\ve\right)(t)  
= \pi^{-1}D^{1/2}  Y^\ve(t)   + \sum_{n=1}^\infty\left( (I^{1/2}g^\ve)^nD^{1/2}   Y^\ve\right)(t),
\]
hence 
\[
I^{1/2}\Theta^\ve(t) =   Y^\ve(t)   + \sum_{n=1}^\infty\left( I^{1/2}(I^{1/2}g^\ve)^nD^{1/2}   Y^\ve\right)(t).
\]
Concerning the series, one has that 
\begin{multline*}
\left\|\sum_{n=1}^\infty I^{1/2}(I^{1/2}g^\ve)^nD^{1/2}   Y^\ve\right \|_{L^\infty(0,T)} 
\\\leq 
\sum_{n=1}^\infty \left\|I^{1/2}(I^{1/2}g^\ve)^nD^{1/2}   Y^\ve\right \|_{L^\infty(0,T)}  
\leq  
\sum_{n=1}^\infty \left\|g^\ve\right \|_{L^\infty(0,T)}^n  
 \left\|I^{1/2}(I^{1/2})^n |D^{1/2}  Y^\ve| \right \|_{L^\infty(0,T)} 
 \\
 \leq  \pi \sum_{n=1}^\infty \left\|g^\ve\right \|_{L^\infty(0,T)}^n  
 \left\|(I^{1/2})^{n-1} 1 \right \|_{L^\infty(0,T)}  \left\| D^{1/2}   Y^\ve \right \|_{L^1(0,T)} 
 \leq  \ve^{\de}  \sum_{n=1}^\infty \frac{ C^n\left\|g^\ve\right \|_{L^\infty(0,T)}^n}{{\Gamma(\frac{n}2+1)}}   \leq C \ve^{\de}     
\end{multline*} 
with $0\leq \de<1/2$. Here $\Gamma$ denotes the gamma function, and the latter bound can be found in \cite[Eq. 7.2.6]{GF}, moreover we used  bound  \eqref{boundD1/2errore}. The statement then follows from the latter bound, from  bound \eqref{bounderrore} and from the trivial inequality 
\[
\|I^{1/2}\Theta^\ve\|_{L^\infty(0,T)} \leq \|  Y^\ve\|_{L^\infty(0,T)}   + \left\|\sum_{n=1}^\infty I^{1/2}(I^{1/2}g^\ve)^nD^{1/2}   Y^\ve\right\|_{L^\infty(0,T)}. 
\]
\end{proof}
We can now give the proof of Th. \ref{t:main}
\begin{proof}[Proof of Th. \ref{t:main}]
Bound \eqref{main} follows from Prop. \ref{l:psiconv} together with Prop. \ref{p:initdata} and Lemma \ref{l:IDelta}. 
\end{proof}


\begin{thebibliography}{100}

\bibitem{at2}
Adami, R., Dell'Antonio, G., Figari, R., and Teta, A., \emph{The Cauchy  problem for the Schr{\"o}dinger equation in dimension three with concentrated  nonlinearity}, Ann. Inst. H. Poincar\'e Anal. Non Lin\'eaire \textbf{20} (2003), no.~3, 477--500.
\bibitem{at3}
Adami, R., Dell'Antonio, G., Figari, R., and Teta, A., \emph{Blow-up solutions for the Schr\"odinger equation in dimension three with a concentrated nonlinearity}, Ann. Inst. H. Poincar\'e Anal. Non Lin\'eaire \textbf{21}
  (2004), no.~1, 121--137.

  
\bibitem{Cecilia1} Adami, R., Noja, D., and Ortoleva, C.,  \emph{Orbital and asymptotic stability for standing waves of a {N}{L}{S}
  equation with concentrated nonlinearity in dimension three}, J. Math. Phys. \textbf{54} (2013), no. 1, 013501.

\bibitem{at1}
Adami, R. and Teta, A., \emph{{A class of nonlinear Schr{\"o}dinger equations
  with concentrated nonlinearity}}, J. Funct. Anal. \textbf{180} (2001), no.~1,
  148--175.

\bibitem{adams-fournier}
Adams, R.~A. and Fournier, J.~J., \emph{Sobolev spaces}, vol. 140, Academic
  press, 2003.

\bibitem{Albeverio}
Albeverio, S., Gesztesy, F., H{\o}egh-{K}rohn, R., and Holden, H., \emph{Solvable models in quantum mechanics},  American Mathematical Society, Providence, 2005 (with an appendix of Pavel Exner).
\bibitem{bahouri}
Bahouri, H., Chemin, J. Y., and Danchin, R., \emph{Fourier analysis and
  nonlinear partial differential equations}, vol. 343, Springer Science \&
  Business Media, 2011.
 
\bibitem{Claudio} Cacciapuoti, C., On the derivation of the Schr\"odinger equation with point-like nonlinearity, Nanosystems: Phys., Chem., Math. \textbf{6} (2015), no. 1, 79--94.  
  
\bibitem{CFNT1} Cacciapuoti, C., Finco, D., Noja, D., and Teta, A., \emph{The NLS equation in dimension one with spatially concentrated nonlinearities: The pointlike limit}, Lett. Math. Phys.  \textbf{104} (2014), no. 12, 1557--1570.
\bibitem{CCF17} Carlone R., Correggi M., and Figari R., \emph{Two-dimensional time-dependent point interactions}, Functional Analysis and Operator Theory for Quantum Physics, EMS Series of Congress Reports, Pavel Exner Anniversary Volume, Eds.: J. Dittrich, H. Kova\v{r}\'ik, A. Laptev, ISBN print 978-3-03719-175-0, ISBN  online 978-3-03719-675-5 DOI 10.4171/175.
\bibitem{CCT} Carlone R., Correggi M., and Tentarelli L., \emph{Well-posedness of the two-dimensional Schr\"odinger equation with concentrated nonlinearity},  	arXiv:1702.03651 [math-ph] (2017), 39 pp.
\bibitem{CFTjfa17} Carlone R., Fiorenza A., and Tentarelli L., \emph{The action of Volterra integral operators with highly singular kernels on H\"older continuous, Lebesgue and Sobolev functions}, J. Funct. Anal. \textbf{273} (2017), no. 3, 1258--1294.  
\bibitem{CdA} Correggi, M. and Dell'Antonio, G., \emph{Decay of a bound state under a time-periodic perturbation: A toy 
case}, J. Phys. A \textbf{38} (2005), no. 22, 4769--4781. 
\bibitem{CdAFM} Correggi, M., Dell'Antonio, G., Figari, R., and Mantile, A., \emph{Ionization for three dimensional time-dependent point interactions}, Comm. Math. Phys. \textbf{257} (2005), no. 1, 169--192. 
\bibitem{hitchhiker} Di Nezza, E., Palatucci, G., and Valdinoci, E., \emph{Hitchhiker's guide to the fractional {S}obolev spaces}, Bulletin des Sciences Math{\'e}matiques \textbf{136} (2012), no.~5, 521--573.
\bibitem{GF} Gorenflo, R. and  Vessella, S,  \emph{Abel integral equations}, Lectures Notes in Mathematics, vol. 1461, Springer, Berlin Heidelberg, 1991.
\bibitem{Grf1990}  Graf, G. M., \emph{Phase space analysis of the charge transfer model}, Helv. Physica Acta \textbf{63} (1990), no. 1-2, 107--138. 
\bibitem{HMN} Hmidi, T., Mantile, A., and Nier, F., \emph{Time-dependent Delta-interactions for 1D Schr\"odinger Hamiltonians}, Math. Phys. Anal. Geom. \textbf{13} (2009), no. 1, 83--103. 
\bibitem{HolmerLiu15} Holmer, J., Liu, C., \emph{Blow-up for the 1d nonlinear Schr\"odinger equation with point nonlinearity I: basic theory},    arXiv:1510.03491 [math.AP] (2015),  22 pp. 
\bibitem{KK07} Komech, A. I. and  Komech, A. A., \emph{Global well-posedness for the nonlinear Schr\"odinger equation coupled to a nonlinear oscillator}, Russ. J. Math. Phys. \textbf{14} (2007), 164--173.
\bibitem{Kopylova16} Kopylova, E., \emph{On global well-posedness for Klein-Gordon equation with concentrated nonlinearity}, J. Math. Anal. Appl. 
{\bf 443} (2016),  1142--1157.
\bibitem{Kopylova17} Kopylova E., \emph{On global attraction to stationary states for wave equations with concentrated nonlinearities}, J. Dyn. Diff. Equat. (2016). doi:10.1007/s10884-016-9563-1.
\bibitem{KP} Kufner, A. and Persson, L. E., \emph{Weighted inequalities of {H}ardy type}, World Scientific, 2003.
\bibitem{SL10} Loss, M. and Sloane, C., \emph{Hardy inequalities for fractional integrals on general domains}, J. Funct. Anal. \textbf{259} (2010), no.~6,
  1369--1379.
\bibitem{Miller} Miller, R. K., \emph{Nonlinear Volterra integral equations}, W.A. Benjamin, Menlo Park, 1971.
\bibitem{NZ} Neidhardt, H. and  Zagrebnov, V. A., \emph{Linear non-autonomous Cauchy problems and evolution semi-groups}, Advances in Differential Equations \textbf{14} (2009), no. 3-4, 289--340.
\bibitem{NieSf2003} Nier, F.  and Soffer, A., \emph{Dispersion and Strichartz estimates for some finite rank perturbations of the Laplace operator}, J. Funct. Anal.  \textbf{198} (2003), 511--535.
\bibitem{NP} Noja, D. and Posilicano, A., \emph{Wave equations with concentrated nonlinearities}, J. Phys. A: Math. Gen. \textbf{38} (2005), 5011--5022.
\bibitem{NP2} Noja, D. and Posilicano, A., \emph{On the point limit of the Pauli-Fierz model}, Annales de l'I.H.P., section A, \textbf{71} (1999), 425--457.
\bibitem{RoScgSf-p}  Rodnianski, I.,  Schlag, W., and Soffer,  A., \emph{Dispersive analysis of charge transfer models} Comm. Pure  Appl. Math. {\bf 58} (2005), no. 2, 249--216. 
\bibitem{SY} Sayapova, M. R. and Yafaev,  D. R., \emph{The evolution operator for time-dependent potentials of zero radius}, Proc. Stek. Inst. Math. \textbf{2} (2984), 173--180.  
\bibitem{Sthesis} Sloane, C.~A., \emph{{H}ardy-{S}obolev-{M}az'ya inequalities for fractional integrals on halfspaces and convex domains}, Ph.D. thesis, Georgia Institute
  of Technology, 2011.
\bibitem{Yj} Yajima, K., \emph{A multichannel scattering theory for some time dependent Hamiltonians, charge transfer problem}, Comm. Math. Phys.  \textbf{75} (1980), 153--178. 
\bibitem{Zi1997}  Zielinski, L., \emph{Asymptotic completeness for multiparticle dispersive charge transfer models}, J. Funct. Anal. \textbf{150} (1997), 453--470.

\end{thebibliography}
\end{document}